\documentclass{article}

\usepackage{research_notes}

\title{Comparing computational entropies below majority
\\ (or: When is the dense model theorem false?)}

\author{
\begin{tabular}[t]{c@{\extracolsep{6em}}c} 
	Russell Impagliazzo\thanks{Research supported by NSF award 1909634 and a Simons Investigator award}
	& Sam McGuire\thanks{Research supported by NSF award 1909634 and a Simons Investigator award}\\
UCSE CSE & UCSD CSE \\
\href{mailto:russell@cs.ucsd.edu}{russell@cs.ucsd.edu} & \href{mailto:shmcguir@eng.ucsd.edu}{shmcguir@eng.ucsd.edu}
\end{tabular}
}

\begin{document}

\maketitle

\begin{abstract}
	Computational pseudorandomness studies the extent to which a random variable $\bf{Z}$ looks
	like the uniform distribution according to a class of tests $\cF$. 
	Computational entropy generalizes computational pseudorandomness
	by studying the extent which a random variable looks like a \textit{high entropy} distribution. There
	are different formal definitions of computational entropy with different advantages for different applications. 
	Because of this, it is of interest to understand when these definitions are equivalent.

	We consider three notions of computational entropy which are known to be equivalent 
	when the test class $\cF$ is closed under taking majorities. This
	equivalence constitutes (essentially) the so-called \textit{dense model theorem} of Green and Tao 
	(and later made explicit by Tao-Zeigler, Reingold et al., and Gowers). 
	The dense model theorem plays a key 
	role in Green and Tao's proof that the 
	primes contain arbitrarily long arithmetic progressions and has
	since been connected to a surprisingly wide range of topics in mathematics and computer science,
	including cryptography, computational complexity, combinatorics and machine learning. 
	We show that, in different situations 
	where $\cF$ is \textit{not} closed under majority, this equivalence fails.
	This in turn provides examples where the dense model theorem 
	is \textit{false}.
\end{abstract}

\section{Introduction}

Computational pseudorandomness is a central topic in theoretical computer science. In this scenario, one has a class $\cF$ of
boolean functions $f: \{0, 1\}^n \to \{0, 1\}$ (which we'll refer to as \emph{tests}) and random variable $\bf{Z}$ over $\{0, 1\}^n$. We say that $\bf{Z}$ is $\epsilon$-pseudorandom with respect to $\cF$) if $\max_{f \in \cF} |\E[f(\bf{Z})] - \E[f(\bf{U})]| \leq \eps$ where $\bf{U}$ is the uniform distribution over $\{0, 1\}^n$ and $\eps > 0$ is small. In this case, we think of $\bf{Z}$ as `behaving like the uniform distribution' according to tests in $\cF$.
In general, say that two random variables $\bf{X}$, $\bf{Y}$ $\eps$-indistinguishable by
$\cF$ if $\max_{f \in \cF} |\E[f(\bf{X})] - \E[f(\bf{Y})]|$ (and so $\eps$-pseudorandom distributions are exactly those which are $\eps$-indistinguishable from $\bf{U}$). Constructing explicit $\bf{Z}$'s which behave like the uniform distribution according to different test classes is among the central goals of complexity theory, with sufficiently strong constructions leading to, for example, derandomization of $\BPP$. One way in which the theory of pseudo-randomness is rich is that there are multiple equivalent formulations of pseudo-randomness, such as Yao's next bit test (\cite{yao1982theory}).

The various notions of pseudo-entropy and pseudo-density generalize pseudo-randomness to formalize how much randomness a distribution looks like it has as far as this class of tests can
perceive.  Many of these notions were first introduced as stepping stones towards pseudo-randomness, giving properties of sub-routines within constructions of pseudo-random generators.
However, measuring seeming randomness quantitatively is important in many other contexts, so these notions have found wider application.  For example, in mathematical subjects such as combinatorics and number theory, there is a general phenomenon of ``structure vs. randomness'', where a deterministically defined object such as a graph or set of integers can be decomposed into
a structured part and a random part.  Pseudo-entropy quantifies how much randomness the ``random part'' has.  Notions of pseudo-density were used in this context by Green, Tao, and Ziegler \cite{green2008primes, tao2008primes} to
show that the primes contain arbitrarily long arithmetic progressions.  We can also use pseudo-entropy notions to characterize the amount of seeming randomness remains n a cryptographic key after it has been compromised with a side-channel attack.  A data set used in a machine learning algorithm might not have much randomness in itself, and might not be completely random looking, but is hopefully representative of the much larger set of inputs that the results of the algorithm will be applied to,
so we can use notions of pseudo-entropy to say when such algorithms will generalize.   There are many possible definitions of this intuitive idea, and as with pseudo-randomness, the power of pseudo-entropy is that many of these notions have been related or proven equivalent.  

In particular, the dense model theorem provides such a basic equivalence.  Here, the
intuitive concept we are trying to capture is the density (or relative min-entropy) of the target
distribution within a larger distribution, what fraction of the larger distribution is within the
target.
We say that $\bf{Z}$ is $\delta$-dense if
$\E[\mu(x)] = 2^{-n} \sum_x \mu(x) \geq \delta$ where 
$\mu: \{0, 1\}^n \to [0, 1]$
is density function defining $\bf{Z}$ 
(in the sense that $\Pr[\bf{Z} = z] = \mu(z)/(2^n\E[\mu(x)])$).
One application of indistinguishability from a dense distribution is as a stepping stone
to pseudorandomness:   if $\bf{Z}$ is indistinguishable from a distribution $\bf{M}$ with density
$\delta$ within the uniform distribution, then applying a randomness extractor with min-entropy
rate $n - \log(1/\delta)$ to $\bf{Z}$ is a pseudorandom distribution. 
A more sophisticated application comes from additive number theory. It is not hard to show that a \textit{random} subset of $[N] = \{1, 2, ..., N\}$ (including each element with probability $1/2$, say) contains many arithmetic progressions (which are sets of the form $\{a, a + b, a + 2b, a + 3b, ...\}$).
\Szemeredi \cite{szemeredi1969sets} showed that, in fact, 
sufficiently \textit{dense} subsets of the integers also contain 
such arithmetic progressions: specifically, that for any $k$, the size of the largest subsets of  $[N]$
which \textit{doesn't} contain an arithmetic progression grows like $o(N)$. 

So we would like some technology to reason about random variables $\bf{Z}$ which `behave like dense distributions'. It turns out, however, that formalizing what it means for $\bf{Z}$ to `behave like a dense distribution' is subtle. Here are three perfectly legitimate candidates: 

\begin{description}
	\item [Candidate 1:] \ul{$\bf{Z}$ behaves like a $\delta$-dense distribution if it behaves like something that's $\delta$-dense.} Formally, this means that $\bf{Z}$ is $\eps$-indistinguishable from some $\delta$-dense distribution. In this case, we say that $\bf{Z}$ has a \textit{$\delta$-dense $\eps$-model}.
	\item [Candidate 2:]\ul{$\bf{Z}$ behaves $\delta$-dense if it's $\delta$-dense inside of something that behaves like the uniform distribution}. Formally this means there's an $\eps$-pseudorandom distribution $\bf{X}$ in which $\bf{Z}$ is $\delta$-dense. In this case,
		we say that $\bf{Z}$ is \textit{$\delta$-dense in an $\eps$-pseudorandom set}.
	\item [Candidate 3:] \ul{$\bf{Z}$ behaves $\delta$-dense if it appears to 
		be the case that conditioning on $\bf{Z}$ 
	increases the size of any set by at most (roughly) a $1/\delta$-factor.} 
	This is an operational definition: conditioning
	on a (truly) dense set increases the set by at most a $1/\delta$-fraction, so we should expect the same behavior
from things that behave like a dense set.
Formally, this means that $\delta \E[f(\bf{Z})] \leq \E[f(\bf{U})] + \eps$
for any $f$ in our test class $\cF$.
In this case, we say that $\bf{Z}$ has \textit{$(\eps, \delta)$-pseudodensity.} 
\end{description}

Precisely which definition you pick will depend on what you know about $\bf{Z}$ and in what sense
you would like it to behave like a $\delta$-dense distribution. 
Indeed, each of these definitions have appeared in different applications
(\cite{haastad1999pseudorandom}, \cite{green2008primes}, \cite{doron2019nearly}, respectively), 
so there are scenarios where each of these types of behavior is desired. 
In general, the first candidate is the strongest (and,
arguably, the most natural), but it is sometimes hard to 
establish that a distribution has the property. 
The following claim gives some simple relationships between
the definitions:

\begin{claim}
	\label{claim:intro}
	For any $\cF$, the following hold:
	\begin{enumerate}
		\item If $\bf{Z}$ has a $\delta$-dense $\eps$-model, then $\bf{Z}$ is $\delta$-dense in a $\eps$-pseudorandom set.
		\item If $\bf{Z}$ is $\delta$-dense in an $\eps$-pseudorandom set, then $\bf{Z}$ has $(\epsilon, \delta)$-pseudodensity.
	\end{enumerate}
\end{claim}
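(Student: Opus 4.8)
The plan is to prove both parts in the explicit convex-combination picture of density, where the one real idea is to reuse a single ``complementary'' distribution so that the indistinguishability enjoyed by the model is inherited by a pseudorandom superset.

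For part~(1), suppose $\mathbf{Z}$ is $\epsilon$-indistinguishable from a $\delta$-dense distribution $\mathbf{M}$, with density function $\mu\colon\{0,1\}^n\to[0,1]$. First I would normalize: replacing $\mu$ by $(\delta/\E[\mu(x)])\,\mu$ keeps it $[0,1]$-valued and defines the same $\mathbf{M}$, so assume $\E[\mu(x)]=\delta$ exactly (if $\delta=1$ then $\mathbf{M}=\mathbf{U}$ and everything is immediate). Then $\delta\Pr[\mathbf{M}=m]=\mu(m)2^{-n}\le 2^{-n}=\Pr[\mathbf{U}=m]$, so we may write $\mathbf{U}=\delta\mathbf{M}+(1-\delta)\mathbf{V}$ for a distribution $\mathbf{V}$ (concretely $\Pr[\mathbf{V}=z]=(1-\mu(z))/(2^n(1-\delta))$). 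Now set $\mathbf{X}=\delta\mathbf{Z}+(1-\delta)\mathbf{V}$ with the \emph{same} $\mathbf{V}$. By construction $\mathbf{Z}$ is $\delta$-dense in $\mathbf{X}$: the reweighting $\nu(z)=\delta\Pr[\mathbf{Z}=z]/\Pr[\mathbf{X}=z]$ lies in $[0,1]$ since $\Pr[\mathbf{X}=z]\ge\delta\Pr[\mathbf{Z}=z]$, and it averages to $\delta$ under $\mathbf{X}$. Finally, for any $f\in\cF$ the $(1-\delta)\E[f(\mathbf{V})]$ contributions cancel, so $\E[f(\mathbf{X})]-\E[f(\mathbf{U})]=\delta\bigl(\E[f(\mathbf{Z})]-\E[f(\mathbf{M})]\bigr)$, which has absolute value at most $\delta\epsilon\le\epsilon$; hence $\mathbf{X}$ is $\epsilon$-pseudorandom.

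For part~(2), suppose $\mathbf{Z}$ is $\delta$-dense in an $\epsilon$-pseudorandom $\mathbf{X}$. Unwinding density as above gives $\delta\Pr[\mathbf{Z}=z]\le\Pr[\mathbf{X}=z]$ for every $z$, hence $\delta\,\E[f(\mathbf{Z})]\le\E[f(\mathbf{X})]$ for every nonnegative $f$, in particular every $\{0,1\}$-valued $f\in\cF$. Pseudorandomness of $\mathbf{X}$ supplies $\E[f(\mathbf{X})]\le\E[f(\mathbf{U})]+\epsilon$, and chaining the two inequalities gives $\delta\,\E[f(\mathbf{Z})]\le\E[f(\mathbf{U})]+\epsilon$, i.e.\ $(\epsilon,\delta)$-pseudodensity.

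I do not expect a genuine obstacle: this is the ``easy'' direction, and the later counterexamples of the paper are what show it cannot be reversed without majority closure. The only delicate points are the normalization of $\mu$ and --- the crux of part~(1) --- insisting that $\mathbf{U}$ be decomposed and $\mathbf{X}$ be assembled with the \emph{same} complementary distribution $\mathbf{V}$; it is exactly this shared term that forces the distinguishing advantage of $\mathbf{X}$ to be at most a $\delta$-fraction of the error of the model rather than something larger. In part~(2) one should simply note that only one side of the indistinguishability bound is used, together with nonnegativity of the tests, so the argument needs nothing of $\cF$.
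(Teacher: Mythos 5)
Your proof is correct and takes essentially the same approach as the paper's sketch: your $\mathbf{V}$ is exactly the paper's $\overline{\mathbf{M}}$, and the cancellation of the shared complementary term is the same key observation. The extra care you take — normalizing $\mu$, verifying $\Pr[\mathbf{X}=z]\ge\delta\Pr[\mathbf{Z}=z]$ pointwise, and noting that part~(2) uses only one side of indistinguishability plus nonnegativity of tests — is a welcome expansion of the paper's terse sketch but does not change the argument.
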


\begin{proof}[Proof sketch]
	\begin{enumerate}
		\item Let $\bf{M}$ be the $\delta$-dense $\eps$-model for $\bf{Z}$.
		Note that $\bf{U} = \delta \bf{M} + (1 - \delta) \ol{\bf{M}}$.
		So $\bf{U}' = \delta \bf{Z} + (1 - \delta) \ol{\bf{M}}$ is $\eps$-pseudorandom
		and $\bf{Z}$ is $\delta$-dense within it.

		\item Suppose $\bf{Z}$ is $\delta$-dense in $\bf{Z}'$ which $\eps$-pseudorandom for $\cF$.
		Then for any $f \in \cF$, $\delta \E[f(\bf{Z})] \leq \E[f(\bf{Z}')] \leq 
		\E[f(\bf{U}] + \eps$.
	\end{enumerate}
\end{proof}

\noindent The marvelous quality of these three candidates in particular is that,
for many natural $\cF$, all of them are \textit{equivalent}, and so establishing
even $(\epsilon', \delta)$-pseudodensity is enough to guarantee the existence of a $\delta$-dense
$\eps$-model.

This equivalence holds for $\cF$ which are \textit{closed under majority},
meaning for any $k$ (which we can think of as $k = O(1)$ for now), if $f_1, ..., f_k \in \cF$
then $\MAJ_k(f_1, ..., f_k) \in \cF$, where $\MAJ: \{0, 1\}^n \to \{0, 1\}$ is $1$ if at least half of its input bits
are $1$.
In fact, it holds for more general $\cF$ if we allow the distinguishing parameter ($\eps'$ in $(\eps', \delta)$-pseudodensity) to be exponentially small (as in the original formulation, which we'll dicuss later on).
In this case, the subtelty in defining what it means to behave like a dense set vanishes.
These equivalences constitute (essentially) what is known as the \textit{dense model theorem}, originating in the work
of Green-Tao \cite{green2008primes} and Tao-Zeigler \cite{tao2008primes}, and independently in Barak et al. \cite{barak2003computational} (though in different guises). 
This result has been fruitfully applied
in many seemingly unrelated areas of mathematics and computer science: additive number theory 
\cite{green2008primes, tao2008primes} where $\cF$ encodes additive information about subsets
of $\{1, ..., N\}$ (or possibly a more general group),
graph theory \cite{trevisan2009regularity, reingold2008dense}
where $\cF$ encodes cuts in a fixed graph,
circuit complexity \cite{trevisan2009regularity},
Fourier analysis \cite{impagliazzo2020connections}, 
machine learning \cite{impagliazzo2020connections}
and leakage-resilient cryptography \cite{dziembowski2008leakage}. 
The ubiquity of the dense model theorem motivates a simple question: 
are there natural scenarios in which the dense model theorem is \textit{false}?

We show that the answer to this question is \textit{yes}. In particular, we show that for either
implication from \Cref{claim:intro} there is a class $\cF$ and a random variable
$\bf{Z}$ so that converse fails to hold. From the computational entropy perspective, we
show that the three computational entropies we've discussed are inequivalent for certain test classes $\cF$.
Necessarily (with $\eps'$ not exponentially small)
these classes are \textit{not} closed under majority and so we will need to look `below' majority
in order to find our counterexamples. 

\subsection{The dense model theorem}

We turn to discuss the dense model theorem in some more detail to better
contextualize our work. Restricting our attention to random
variable over $\{0, 1\}^n$,  
the dense model theorem states the following:

\begin{thm}[Dense model theorem]
	\label{thm:dmt1}
	Let $\cF$ be a class of tests $f: \{0, 1\}^n \to \{0, 1\}$
	and $\bf{Z}$ a random variable over $\{0, 1\}^n$
	with $(\eps \delta, \delta)$-pseudodensity with respect to $\MAJ_{k} \circ \cF$
	for $k = O(\log(1/\delta)/\eps^2)$. Then $\bf{Z}$ has a $\delta$-dense
	$\eps$-model with respect to $\cF$.
\end{thm}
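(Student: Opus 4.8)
The plan is to argue the contrapositive through a \emph{smooth boosting} (multiplicative-weights) construction: we attempt to build the $\delta$-dense $\eps$-model for $\bf{Z}$ one distinguisher at a time, and we invoke the $(\eps\delta,\delta)$-pseudodensity hypothesis precisely to rule out the only way the construction can get stuck. Two preliminary normalizations. First, a distribution is $\delta$-dense exactly when its density function $\mu$ satisfies $0\le\mu\le 1$ and $\E[\mu]=\delta$ (if $\E[\mu]>\delta$ one can scale $\mu$ down without changing $\bf{M}_\mu$), so ``$\bf{Z}$ has no $\delta$-dense $\eps$-model'' means: for every such $\mu$ there is $f\in\cF$ with $|\E[f(\bf{Z})]-\E_x[\mu(x)f(x)]/\delta|>\eps$. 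Second, we may assume $\cF$ is closed under complementation, since replacing it by $\cF\cup\overline{\cF}\cup\{0,1\}$ only changes the arity of the majorities by a constant, which the $O(\cdot)$ in the statement absorbs.

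Now the boosting loop. We maintain density functions $\mu_1\equiv\delta,\ \mu_2,\ \mu_3,\dots$. At round $i$: if $\bf{M}_{\mu_i}$ is already an $\eps$-model for $\bf{Z}$ with respect to $\cF$, stop and output it --- it is $\delta$-dense by construction, contradicting our assumption. Otherwise pick $g_i\in\cF$ with $|\E[g_i(\bf{Z})]-\E[g_i(\bf{M}_{\mu_i})]|>\eps$, fix a sign $\sigma_i\in\{+1,-1\}$ pointing ``towards $\bf{Z}$'', apply the multiplicative update $\mu_{i+1}'(x)=\mu_i(x)\exp(\eta\sigma_i g_i(x))$, and project back onto $\{\mu : 0\le\mu\le 1,\ \E[\mu]=\delta\}$ by capping at $1$ and rescaling: $\mu_{i+1}(x)=\min(1,\lambda_{i+1}\mu_{i+1}'(x))$ with $\lambda_{i+1}$ chosen so $\E[\mu_{i+1}]=\delta$. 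The structural point is that $\mu_{i+1}(x)=\min(1,c_{i+1}\exp(\eta\sum_{j\le i}\sigma_j g_j(x)))$ is a monotone function of the partial sum $\sum_{j\le i}\sigma_j g_j(x)$; rewriting each $-g_j$ as $\overline{g_j}-1$, every level set of $\mu_{i+1}$ --- in particular the overflow set $O_{i+1}=\{x : \mu_{i+1}(x)=1\}$ --- is a threshold function of at most $i$ tests from $\cF$, hence a test in $\MAJ_{O(i)}\circ\cF$.

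\textbf{Progress vs.\ stall.} Take the relative-entropy potential $\Phi_i=\E_x[\,(\mu_i(x)/\delta)\ln(\mu_i(x)/\delta)\,]$ (or any standard smooth-boosting potential), which satisfies $0\le\Phi_i=O(\log(1/\delta))$ because $0\le\mu_i\le 1$. A routine multiplicative-weights calculation should yield, in each round, a dichotomy: either $\Phi$ decreases by $\Omega(\eta\eps-\eta^2)=\Omega(\eps^2)$ (for $\eta=\Theta(\eps)$), or a constant fraction of $g_i$'s distinguishing advantage comes from the overflow set $O_i$, where the cap blocks the update. The first case can occur at most $k=O(\log(1/\delta)/\eps^2)$ times before $\Phi$ is exhausted, so if the loop has not halted with a model after $k$ rounds, some round is a stall. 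In a stall, $\bf{Z}$ is (essentially) supported on $O_i$, while $\bf{M}_{\mu_i}$ --- capped at density $1$ on $O_i$ --- can place mass only $\E[\mathbf{1}_{O_i}(\bf{U})]/\delta$ there; since the advantage exceeds $\eps$ this forces $\E[\mathbf{1}_{O_i}(\bf{Z})]>\eps+\E[\mathbf{1}_{O_i}(\bf{U})]/\delta$ (in particular $\E[\mathbf{1}_{O_i}(\bf{U})]<\delta$, so $O_i$ is too sparse to support $\bf{M}_{\mu_i}$), equivalently $\delta\,\E[\mathbf{1}_{O_i}(\bf{Z})]>\E[\mathbf{1}_{O_i}(\bf{U})]+\eps\delta$. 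As $\mathbf{1}_{O_i}\in\MAJ_{O(k)}\circ\cF$, this contradicts the $(\eps\delta,\delta)$-pseudodensity of $\bf{Z}$ with respect to $\MAJ_k\circ\cF$ once the arity constant is absorbed. Either branch proves the theorem.

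I expect the main obstacle to be making the ``stall'' analysis quantitatively tight: one must pin down exactly when a round makes $\Omega(\eps^2)$ progress versus when it stalls, controlling how the cap-and-rescale step interacts with the exponential update so that a stall genuinely concentrates $\bf{Z}$'s mass on the single level set $O_i$ (equivalently, that $\mu_i$ stays close to $1$ on $\bf{Z}$'s support throughout), and then check that $O_i$ and every auxiliary threshold test that appears really does lie in $\MAJ_k\circ\cF$ --- the constant-factor blowups from complementing, from padding a threshold to a majority, and from the extra stall round must all fit inside $k=O(\log(1/\delta)/\eps^2)$. The progress bound and the density reformulation are routine. (An alternative, a one-shot minimax/LP-duality argument producing a single weighted distinguisher $h=\E_{f\sim\mathcal{D}}f$ and then a threshold test $\mathbf{1}[h>\theta]$ with $\theta$ in the top-$\delta$-quantile range of $h$, also works but needs a Chernoff step to cut $\mathcal{D}$ down to $k$ tests and so only gives $k=\mathrm{poly}(1/\delta,1/\eps)$; I would use it only as a sanity check.)
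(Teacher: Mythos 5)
The paper does not actually prove \Cref{thm:dmt1}: it is stated as known background, with the equivalence attributed to Green--Tao, Tao--Ziegler, and Barak et al., and the optimal query complexity $k = O(\log(1/\delta)/\eps^2)$ credited explicitly to Zhang \cite{zhang2011query}, who obtains it by running the smooth boosting algorithm of \cite{barak2009uniform}. So there is no in-paper argument to compare against; what I can say is that your proposal reproduces precisely the route the paper cites. The capped multiplicative-weights update, the relative-entropy potential bounded by $O(\log(1/\delta))$ via $0\le\mu\le 1$, the dichotomy between $\Omega(\eps^2)$ potential drop and a ``stall'' round whose overflow set certifies a violation of pseudodensity, and the observation that the candidate model and the overflow set are level sets of a signed sum of $O(k)$ distinguishers and hence thresholds (i.e., majorities after padding) --- all of this is the Zhang/Barak--Hardt--Kale analysis, and your self-assessment of where the real work lies (making the progress-vs-stall dichotomy quantitatively tight against the cap-and-rescale step) is accurate.

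One small point to keep straight when you flesh this out: $\MAJ_k$ as a gate is symmetric and unsigned, so the signed thresholds $\mathbf{1}[\sum_j \sigma_j g_j(x) \ge \theta]$ that arise are only in $\MAJ_{O(k)}\circ\cF$ after you replace each $-g_j$ by $\overline{g_j}$ (and shift $\theta$) and pad with constants to center the threshold --- so you genuinely need $\cF$ closed under negation and constants, or the convention that the majority gate's inputs may be negated, which your first paragraph already assumes but which must be checked against whatever convention for $\MAJ_k\circ\cF$ is actually in force. Beyond that bookkeeping, the plan is sound and matches the cited proof; it is not a full proof as written, but you have correctly isolated the one nontrivial lemma that remains.
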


We will generally also consider a parameter $\eps'$, which in this
case is $\eps \delta$, the additive error in pseudodensity.
To get an intuition for what this is saying, let's conisder a setting where it's
false but for trivial reasons. As a simple example given in \cite{zhang2011query}, 
pick a set $\bf{Z}$ some set as a $(1-\eps)$ fraction of 
another set $\bf{S}$ of size $\delta 2^n$. Then $\bf{Z}$ doesn't have a $\delta$-dense $\eps$-model (i.e. $\bf{S}$) with
respect to $\bf{Z}$'s indicator function, which we'll call $f$. 
On the other hand, the distribution $\bf{W}$ obtained by sampling $\bf{Z}$ 
with probability $\delta$ and sampling from $\bf{S}$'s complement 
with probability $1 - \delta$ is at most $\eps \delta$-distinguishable
from $\bf{S}$ for any function, since $\eps \delta$ is 
simply the measure of the difference
between $\bf{S}$ and $\bf{Z}$. In particular $\bf{Z}$ is $\delta$-dense
in the $\eps \delta$-pseudorandom $\bf{W}$ (which implies, via \Cref{claim:intro},
that it is $(\eps\delta, \delta)$-pseudodense). This means that
the \Cref{thm:dmt1} is tight for the dependence on $\eps' = \eps \delta$,
in that it becomes false for $\Omega(\eps \delta)$. In many instances,
we think of $\eps = 1/\poly(n)$, $\delta$ constant (or perhaps with mild
dependences on $n$) and $\eps' = \delta\eps$.

Originally, the dense model theorem was proved with a different (and stronger)
assumption; namely, that $\bf{Z}$ is dense in a pseudorandom set.
Green and Tao, in proving that the primes contain arbitrarily
long arithmetic progressions, used it to the following effect:
if $\bf{Z}$ are the prime numbers up to $n$,
then its density is known to behave like $\Theta(1/\log n)$. On the other hand, \Szemeredi  
\cite{szemeredi1969sets}
showed that sufficiently dense subsets of $\Z$ contain arbitrarily long arithmetic progressions.
The best bounds for \Szemeredi's theorem require density $\omega(1/\log\log n))$, which is much 
larger than the primes (see \cite{gowers2001new} and the recent \cite{bloom2020breaking} for more
on the rich history on this and related problems). 
Not all is lost, however: the only property of dense sets that we're interested in
is that they contain arithemtic progressions. So Green and Tao construct a class $\cF$
of tests which can `detect' arithmetic progressions and 
under which the primes are dense inside of a $\cF'$-pseudorandom set (more on $\cF'$ later).
By applying the dense model theorem, we conclude that the primes `look like' a dense set (themselves having long arithemtic progressions) with respect to the class $\cF$. As $\cF$
detects arithmetic progressions, it must be the case that the primes possess them. Of course, many details need to be filled in, but we hope this example shows the reader the `spirit' of the dense model theorem.

A primary source of interest in the dense model theorem is in the connections it shares with
seemingly unrelated branches of mathematics and computer science. The original application
was in additive number theory, but it was independently discovered and proved in the context 
of cryptography (\cite{barak2003computational, dziembowski2008leakage}). RTTV \cite{reingold2008dense} and Gowers \cite{gowers2010decompositions} observed
proofs of the dense model theorem which use linear programming duality, which is in turn
related to Nisan's proof of the hardcore lemma from circuit complexity \cite{impagliazzo1995hard}. In fact,
Impagliazzo \cite{impagliazzo2020connections} shows in unpublished work that optimal-density versions of the
hardcore lemma due to Holenstein \cite{holenstein2005key} actually \textit{imply} the dense model theorem.
Klivans and Servedio \cite{klivans2003boosting} famously observed the relationship betweeen 
the hardcore lemma and \textit{boosting}, a fundamental technique
for aggregating weak learners in machine learning \cite{freund1999short}. Together with the result of Impagliazzo,
this connection means that dense model theorems can be proved by a particular type of boosting algorithm.
A boosting argument for the existence of dense models also gives us \textit{constructive} 
versions of the dense model theorem,
which are needed for algorithmic applications.
Zhang \cite{zhang2011query} (without using Impagliazzo's reduction from the dense model theorem
to the hardcore lemma) used the boosting algorithm of \cite{barak2009uniform} directly
to prove the dense model theorem with optimal query complexity ($k$). 

In addition to its connections to complexity, machine learning, additive number theory and cryptography,
the dense model theorem (and ideas which developed from the dense model theorem, 
chiefly the approximation theorem of \cite{trevisan2009regularity}), 
have been used to understand the weak graph regularity lemma of Frieze and Kannan \cite{impagliazzo2020connections},
notions of computational 
differential privacy \cite{mironov2009computational} and even generalization in generative adversarial 
networks (GANs) \cite{arora2017generalization}. 
We now turn to discussing the complexity-theoretic aspects of the dense model theorem,
 specifically regarding our question of whether the $\MAJ_k$ from the statement is optimal.

As alluded to earlier, Green and Tao actually worked in a setting where $\cF'$ doesn't 'need
to compute majorities but where $\eps \delta$ (that is, the distinguishing parameter
in the pseudodensity assumption in the statement of \Cref{thm:dmt1}) needs to
be replaced by some $\eps' = \exp(-\poly(1/\eps, 1/\delta))$ 
(with $k = \poly(1/\delta, 1/\eps)$ experiencing a small increase).
We state this result, as proved in Tao and Zeigler \cite{tao2008primes} and stated this
way in RTTV \cite{reingold2008dense}, for comparison. For a test class $\cF$,
let $\prod_k \cF$ be the set of tests of the form $\prod_{i \in [k]} f_i$
for $f_i \in \cF$. 

\begin{thm}[Computationally simple dense-model theorem, strong assumption]
	\label{thm:dmt2}
	Let $\cF$ be a class of tests $f: \{0, 1\}^n \to [0, 1]$
	and $\bf{Z}$ a random variable over $\{0, 1\}^n$
	which is $\delta$-dense in a set $\eps'$-pseudorandom for $\prod_k \cF$
	with $k = \poly(1/\delta, 1/\eps)$ and $\eps' = \exp(-1/\delta, 1/\eps)$. 
	Then $\bf{Z}$ has a $\delta$-dense
	$\eps$-model with respect to $\cF$.
\end{thm}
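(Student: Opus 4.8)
\medskip
\noindent\textbf{Proof proposal.} I would prove the contrapositive, following the transference strategy of Green--Tao \cite{green2008primes} and Tao--Ziegler \cite{tao2008primes}: assuming $\bf{Z}$ has \emph{no} $\delta$-dense $\eps$-model with respect to $\cF$, show that no distribution $\bf{X}$ in which $\bf{Z}$ is $\delta$-dense can be $\eps'$-pseudorandom for $\prod_k\cF$. Fix such an $\bf{X}$ and write $\bf{X}=\delta\bf{Z}+(1-\delta)\bf{W}$ for a distribution $\bf{W}$. Since every $h\in\prod_k\cF$ is nonnegative, $\E[h(\bf{X})]\ge\delta\,\E[h(\bf{Z})]$, so it suffices to produce some $h$ in the convex hull of $\prod_k\cF$ with $\delta\,\E[h(\bf{Z})]>\E[h(\bf{U})]+\eps'$: then $(1-\delta)\E[h(\bf{W})]=\E[h(\bf{X})]-\delta\E[h(\bf{Z})]<\E[h(\bf{X})]-\E[h(\bf{U})]-\eps'$, which is negative if $\bf{X}$ is pseudorandom, so in fact one of the products composing $h$ distinguishes $\bf{X}$ from $\bf{U}$ by more than $\eps'$. (Assume throughout, harmlessly, that $\cF$ is closed under $f\mapsto 1-f$ and contains the constants.)

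\emph{A soft distinguisher.} The hypothesis says $\bf{Z}$ is at $\cF$-distance more than $\eps$ from the convex compact set $\mathcal{D}_\delta$ of distributions of min-entropy at least $n-\log(1/\delta)$. Applying the minimax theorem (equivalently, LP duality) to the game where one player picks $\bf{M}\in\mathcal{D}_\delta$ and the other picks $f\in\cF$, with payoff $\E[f(\bf{Z})]-\E[f(\bf{M})]$, produces a single $g=\sum_i\lambda_i f_i$ in the convex hull of $\cF$ (hence $g\colon\{0,1\}^n\to[0,1]$) with $\E[g(\bf{Z})]\ge\E[g(\bf{M})]+\eps$ for all $\bf{M}\in\mathcal{D}_\delta$; taking $\bf{M}$ uniform on the top $\delta$-fraction of points by $g$-value gives $\E[g(\bf{Z})]\ge\frac1\delta\E_x[g(x)\mathbf{1}[g(x)\ge t]]+\eps$ with $\Pr_x[g(x)\ge t]=\delta$. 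In parallel, a direct ``capping'' argument (truncate the density of $\bf{Z}$ at height $1/\delta$ and spread the overflow uniformly) shows ``no $\delta$-dense $\eps$-model'' also forces the total mass by which $\delta$ times the density of $\bf{Z}$ exceeds $1$ to be $\Omega(\eps\delta)$; it is this overflow, not the raw separation, that will yield a positive gap rather than one swamped by $\delta$.

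\emph{Amplification into $\prod_k\cF$.} This is where products and the exponentially small $\eps'$ enter. Fix a polynomial $P\colon[0,1]\to[0,1]$ of degree $k=\poly(1/\eps,1/\delta)$ acting as a sharp threshold near $t$ (roughly $1$ above $t+\gamma$ and roughly $0$ below $t-\gamma$, with $\gamma=\poly(\eps,\delta)$) while staying inside $[0,1]$ on all of $[0,1]$; such $P$ exist by Chebyshev/Jackson-type approximation, but their coefficients have $\ell_1$-mass only $L=\exp(\poly(1/\eps,1/\delta))$, and this blow-up is genuine (exactly why an inverse-polynomial $\eps'$ is impossible without closing $\cF$ under majority). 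Expanding $P(g)=\sum_{j\le k}c_j\bigl(\sum_i\lambda_i f_i\bigr)^j$ writes $P(g)$ as a linear combination of products of at most $k$ tests from $\cF$ with total coefficient $\ell_1$-mass at most $L$, and $P(g)$ is $[0,1]$-valued, so $P(g)/L$ is (up to constants) a convex combination of $\prod_k\cF$. The crux is to combine the overflow fact above with ``$\bf{Z}$ is $\delta$-dense in $\bf{X}$'' to argue that $h:=P(g)$, possibly after a small correction, satisfies $\delta\,\E[h(\bf{Z})]-\E[h(\bf{U})]>c$ for some $c=\poly(\eps,\delta)>0$; then the reduction above applies with $\eps'<c/L=\exp(-\poly(1/\eps,1/\delta))$.

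The main obstacle is this amplification step, on two intertwined fronts. First, constructing $P$ with simultaneous control of degree ($\poly$), of range (it must stay in $[0,1]$ \emph{everywhere} on $[0,1]$, which is what keeps $h\ge 0$), and of coefficient size --- and the unavoidable $\exp(\poly)$ blow-up of the coefficients is precisely what pins $\eps'=\exp(-\poly(1/\eps,1/\delta))$. Second, showing that the weak, expectation-level minimax guarantee survives amplification as a genuine \emph{additive} positive gap: one cannot use a naive hard threshold of $g$, since the bound $\Pr_x[g(x)\ge s]\le\delta$ is too weak, and must instead track the overflow of $\bf{Z}$'s density above the $\delta$-flat cap and invoke the pseudorandomness of $\bf{X}$ for $\prod_k\cF$ to control where that overflow can hide. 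Once these are in hand, the minimax step and the term-by-term accounting are routine. Altogether this recovers the original Green--Tao/Tao--Ziegler transference argument, in which the passage from a threshold (needing majority) to a polynomial (needing only products) is exactly the trade that turns a polynomial error into an exponentially small one.
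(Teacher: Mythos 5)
The paper does not actually prove \Cref{thm:dmt2}: it is quoted as a known result, attributed to Tao--Ziegler \cite{tao2008primes} and restated in the form given by RTTV \cite{reingold2008dense}, so there is no in-paper proof to compare against. Your sketch is a faithful reconstruction of the standard transference argument from those references: min-max/LP duality gives a soft distinguisher $g$ in the convex hull of $\cF$; a low-degree polynomial threshold $P$ converts the expectation gap into a function that is small on $\bf{U}$ and large on $\bf{Z}$; expanding $P(g)$ yields a bounded combination of $\prod_k\cF$ products; and the exponential $\ell_1$-mass of $P$'s coefficients is exactly what forces $\eps'$ to be exponentially small rather than polynomially small. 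You also correctly flag the two genuine technical burdens of the argument --- constructing $P$ with simultaneous control on degree, range, and coefficient size, and converting the expectation-level min-max guarantee into an additive gap that survives the blow-up by $L$ --- and correctly observe that these are precisely where the hypothesis ``closed under products rather than majority'' extracts its price. One small caution: in the ``capping'' calculation, truncating the density of $\bf{Z}$ at height $1/\delta$ and renormalizing gives a distribution that is only $\delta(1-o)$-dense when the overflow mass is $o$, so the candidate model slightly overshoots $1/\delta$; the standard fix is to cap at $(1-\eps/2)/\delta$, or to run the argument at density $\delta' = \delta(1+\Theta(\eps))$ and absorb the slack into the constants. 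That bookkeeping is routine but worth making explicit, since it is exactly the kind of factor that can silently turn a $\poly(\eps,\delta)$ gap into something that the $\exp(\poly)$ coefficient mass can no longer beat.
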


\noindent RTTV \cite{reingold2008dense} observe that this proof can be adapted to work for $\eps'$ have polynomial dependence
on $\eps, \delta$ by restricting to the case of boolean-valued tests. Doing so,
however, makes $\cF'$ much more complicated (essentially requiring circuits of size exponential in $k$).
In \Cref{thm:dmt1}, we can obtain the best of both worlds: $\eps'$ has polynomial dependence 
on $\eps, \delta$ and the complexity
blow-up is rather small. However, in this more picturesque circumstance, 
we need to be able to compute majorities.
Is such a tradeoff necessary? Our results suggest that the answer is yes.
\Cref{thm:pdtodm} (stated in the following section) tells us that if
the dense model theorem is true for $\cF$, 
then there's a small, constant-depth circuit with $\cF$-oracle gates approximating
majority on $O(1/\eps^2)$ bits. 

Another important aspect of the dense model theorem is how the different assumptions are related.
As mentioned, the original assumption was that $\bf{Z}$ is $\delta$-dense in an $\eps$-pseudorandom
set, but the proof can be extended to the case where $\bf{Z}$ is $(\eps, \delta)$-pseudodense.
\Cref{claim:intro} showed that the former assumption implies that latter assumption.
When the dense model theorem is true, the latter also implies the former: simply
apply the dense model theorem to $\bf{Z}$ which is $(\eps, \delta)$-dense to obtain
a $\delta$-dense $\eps$-model. Then, by the first part of \Cref{claim:intro}, we're done.

First, we give examples of situations where these two notions are distinct.
For example, we show in \Cref{thm:pdtodsops} and \Cref{thm:degreelb} that 
they are inequivalent when $\cF$ is constant-depth polynomial size circuits
or when $\cF$ is a low-degree polynomial over a finite field. 
Note that a separation between pseudodensity and being dense in a pseudorandom set also implies a separation between
pseudodensity and having a dense model, as being dense in a pseudorandom set
is a necessary condition for having a dense model.

Second, we show that the dense model theorem is false even when we make the stronger
assumption that the starting distribution $\bf{Z}$ is dense in a pseudorandom set.
Specifically, in \Cref{thm:dsopstodm} we can show that some distributions $\bf{Z}$ are dense in a pseudorandom 
set but fail to have a dense model when $\cF$ consists of constant-depth, polynomial 
size circuits.

Having contextualized our work some, we now turn to describe our contributions in more detail.

\subsection{Contributions}

We separate the previously described notions of computational entropy, giving examples
where the dense model theorem is false. We are able to prove different separations 
when $\cF$ is  constant-depth unbounded fan-in circuits, 
low-degree polynomials over a finite field, 
and, in one case, any test class $\cF$ which cannot efficiently approximate majority
(in some sense made explicit later on). The only known separation prior was between pseudodensity 
and having a dense model for bounded-width read-once branching programs, due 
to Barak et al. \cite{barak2003computational}. 
%
%

Let $\cC(S, d)$ denote the class of unbounded fan-in, size $S$, depth $d$ circuits. We are generally
thinking of $S = \poly(n)$ and $d = O(1)$, which corresponds to the complexity class $\AC^0$. 
\Cref{thm:dsopstodm} shows that $\bf{Z}$ being $\delta$-dense in an $\eps$-pseudorandom set
need not imply that $\bf{Z}$ has a $\delta$-dense $\eps$-model when the test class is $\cC(S, d)$:

\begin{restatable}{thm}{dsopstodm}
\label{thm:dsopstodm}
    Let $\eps, \eps' > 0$ be arbitrary, $\delta \geq \eps'/8$ and  
    \[
	    S \leq \exp(O\Big(\frac{\sqrt{\eps'}}{\eps} 
	    \cdot \frac{\sqrt{\log(1/\delta)}}{\log(1/\eps')}\Big)^{1/(d-1)}).
    \]
    Then for $\cF = \cC(S, d)$, there is a random variable $\bf{D}$ over $\{ 0, 1\}^n$
    with $n = O(\log(1/\delta)/\eps^2)$
    so that $\bf{D}$ is $\delta$-dense in an $\eps'$-pseudorandom
    set but does not have a $\delta$-dense $\eps$-model.
    In particular, the dense model theorem is false in this setting.
\end{restatable}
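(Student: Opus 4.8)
The plan is to realize $\bf{D}$ as a Hamming-weight–biased perturbation of the uniform distribution: a perturbation that $\cC(S,d)$ cannot detect, but whose bias can only be \emph{certified} by an approximate-majority test that is just barely too large for the class. Concretely, fix $n = \Theta(\log(1/\delta)/\eps^2)$ and a bias parameter $\beta>0$ to be tuned, and let $\bf{X} = \tfrac12\bf{X}_+ + \tfrac12\bf{X}_-$ be the balanced mixture of the two product distributions $\bf{X}_\pm$ on $\{0,1\}^n$ whose coordinates are $1$ with probability $\tfrac12\pm\beta$. Choose $\tau = n/2 + \beta n + \Theta(\sqrt{n\log(1/\delta)})$ so that $\Pr[\,|\bf{X}_+|\ge\tau\,]=2\delta$ (writing $|x|$ for Hamming weight), and let $\bf{D}$ be $\bf{X}$ conditioned on the event that the $+$ branch was taken \emph{and} $|x|\ge\tau$; this event has $\bf{X}$-probability $\delta$, so $\bf{D}$ is $\delta$-dense in $\bf{X}$ by construction, and the first half of the theorem reduces to showing $\bf{X}$ is $\eps'$-pseudorandom for $\cC(S,d)$.

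For that step, note that every odd-degree Fourier coefficient of $\bf{X}$ vanishes, so for any test $f$ one has $\E[f(\bf{X})]-\E[f(\bf{U})] = \sum_{\emptyset\neq T,\ |T|\text{ even}} \hat f(T)(2\beta)^{|T|}$; hence a size-$S$ depth-$d$ circuit distinguishes $\bf{X}$ from $\bf{U}$ by at most $O(\beta^2)$ times its Fourier $L_1$-weight at levels $\ge 2$. Bounding that weight with the standard $\AC^0$ machinery — a switching-lemma / random-restriction argument, or the fact that $\bf{X}$ is close to $r$-wise independent and polylogarithmic-wise independence fools $\AC^0$ — gives $|\E[f(\bf{X})]-\E[f(\bf{U})]|\le\eps'$ once $\beta\lesssim \sqrt{\eps'}/(\log(S/\eps'))^{d-1}$. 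This is exactly the claimed upper bound on $S$, with the $\log(1/\eps')$ factor entering through the failure probability in the switching lemma (equivalently the independence parameter) and the $\sqrt{\log(1/\delta)}$ factor through $n$.

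To show $\bf{D}$ has no $\delta$-dense $\eps$-model for $\cC(S,d)$, I would exhibit a \emph{single} distinguisher: the approximate-majority test $C$ that accepts $x$ iff $|x|\ge\theta$, with $\theta = n/2 + \Theta(\eps n)$ chosen so that (i) the support of $\bf{D}$ is contained in $\{|x|\ge\theta\}$, whence $\E[C(\bf{D})]=1$, and (ii) $\Pr_{\bf{U}}[\,|x|\ge\theta\,] < (1-\eps)\delta$, which holds because the constant hidden in $n=\Theta(\log(1/\delta)/\eps^2)$ can be taken large enough to push $\theta$ a constant factor further into the binomial tail than the $\delta$-quantile. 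Then every $\delta$-dense distribution $\bf{M}$ satisfies $\E[C(\bf{M})]\le\E_{\bf{U}}[C]/\delta<1-\eps$, so $C$ separates $\bf{D}$ from \emph{every} $\delta$-dense distribution by more than $\eps$, and no dense model can exist. What must be checked here is that this $C$ — an approximate majority on $n$ bits with a ``soft region'' of width $\Theta(\beta n)$ to exploit — is computed by a depth-$d$ circuit of size $\le S$; this is where the known $\AC^0$ constructions of approximate majority are invoked, and it is also why the separation implicitly needs $S$ above a matching threshold (at least the size of this circuit).

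The technical heart of the argument, and the main obstacle, is the two-sided tuning of $\beta$ and $n$: $\bf{X}$ is pseudorandom only when $\beta$ is small, while the perturbation carried by $\bf{D}$ must stay detectable by a \emph{bounded-size} $\AC^0$ approximate majority, which forces $\beta$ (hence the gap of $C$) to stay above a threshold. Showing that these two requirements are simultaneously met for the entire claimed range of $S$ — while tracking the $\log(1/\eps')$ coming from the switching lemma and the $\sqrt{\log(1/\delta)}$ coming from the tail bound — is the delicate part of the proof, and the hypothesis $\delta\ge\eps'/8$ is precisely what keeps this window from collapsing.
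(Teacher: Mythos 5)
Your proposal diverges from the paper's proof in a way that introduces a genuine gap at the ``no dense model'' step. To certify that $\bf{D}$ has no $\delta$-dense $\eps$-model you propose to exhibit a threshold circuit $C = \mathbf{1}[\,|x| \ge \theta\,]$ (or an approximate-majority relaxation of it) inside $\cF = \cC(S,d)$ and argue $\E[C(\bf{D})]=1$ while $\E_{\bf{U}}[C] < (1-\eps)\delta$. But with $n = \Theta(\log(1/\delta)/\eps^2)$ the threshold $\theta$ and the support cutoff $\tau$ both sit at deviation $\Theta(\eps n) = \Theta(\sqrt{n\log(1/\delta)})$ from the mean, and the soft region you are allowed has width $O(\sqrt{n\log(1/\delta)})$ as well; this is a \emph{vanishing-fraction} gap, not a constant-fraction one, so the circuit you need is essentially an approximate majority near the center of the binomial. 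Such a circuit is precisely what $\cC(S,d)$ \emph{cannot} compute for the range of $S$ the theorem targets --- indeed the entire point of the theorem is that the dense model theorem fails exactly because these classes sit below majority. Invoking an $\AC^0$ approximate majority as the witness of that failure is self-defeating, and you flag this yourself (``this is where the known $\AC^0$ constructions of approximate majority are invoked'') without resolving it. There is also a parameter obstruction you do not address: pseudodensity (which follows from being $\delta$-dense in an $\eps'$-pseudorandom set via \Cref{claim:intro}) gives $\delta \E[C(\bf{D})] \le \E_{\bf{U}}[C] + \eps'$; with $\E[C(\bf{D})]=1$ and $\E_{\bf{U}}[C] < (1-\eps)\delta$ this forces $\eps' > \eps\delta$, so your distinguisher cannot exist at all (even with no circuit-size restriction) for the smaller values of $\eps'$ the theorem also covers.

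The paper's proof dodges this entirely by never using a global weight test. The distinguishers witnessing ``no dense model'' are the single-bit projections $z \mapsto z_i$, which are trivially in $\cC(S,d)$ at any size and depth. The construction (\Cref{lemma:kwise}) bucketizes $[n]$ into $K$ bins, assigns each a shared random bit, and conditions on the bin-vector having low Hamming weight; after composing with a random restriction (\Cref{claim:reduction}) this gives a $\bf{D}$ that is $\delta$-dense in an $\eps'$-pseudorandom set and whose \emph{every} coordinate has bias $\Omega(p\sqrt{\log(1/\delta)/K})$. Chang's inequality (\Cref{lemma:chang}) says any genuinely $\delta$-dense distribution has average coordinate bias at most $\sqrt{\log(1/\delta)/n}$, so if the construction's bias exceeds $\eps + \sqrt{\log(1/\delta)/n}$ then some projection already $\eps$-separates $\bf{D}$ from every $\delta$-dense distribution (\Cref{lemma:no_dense_models}). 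The technical work goes entirely into making the per-coordinate bias large while keeping $\bs{\rho}\circ\bf{S}$ pseudorandom, which is exactly where the bucketing trick and the switching lemma interact to produce the claimed bound on $S$. Your Fourier-based pseudorandomness argument for the mixture $\bf{X}$ is plausible as an alternative route to that half of the proof, but the distinguisher step needs to be replaced by the projection/Chang argument (or something comparably trivial for $\cC(S,d)$) to make the whole proof go through.
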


Recall that the dense model theorem is 
false when $\eps' = \Omega(\eps \delta)$,
which makes the restriction $\delta \geq \eps'/8$ extremely mild. 
A common regime is $\eps = 1/\poly(n)$, $\delta = O(1)$ 
and $\eps' = \delta \eps = \Theta(\eps)$, in which case this gives us 
(essentially) a lower bound of 
weakly exponential in $1/\sqrt{\eps} \approx 1/\sqrt{\eps'}$. 

Let $\bf{N}_\alpha$
denote the product distribution of $n$ Bernoulli random 
variables with success probability $1/2 - \alpha$. Recall that density 
in a pseudorandom set readily implies pseudodensity, and one 
can use the dense model theorem to show the converse.
We show that $(\eps, \delta)$-pseudodensity need not imply 
$\delta$-density in an $\eps$-pseudorandom set when the test class is $\cC(S, d)$:

\begin{restatable}{thm}{pdtodsops}
\label{thm:pdtodsops}
	Fix $\eps, \eps', \delta > 0$, $d \in \N$, and
	\[
		S \leq \exp(O\Big(\frac{\sqrt{\delta}}{\sqrt{\eps}} \cdot
		\frac{\log(1/\delta)}{\log(1/\eps')}\Big)^{1/(d-1)}).
	\]
	Then $\bf{N}_{\sqrt{\eps/\delta}}$ over $\{0, 1\}^n$ 
	with $n = O(1/\eps)$
	is $(\eps', \delta)$-pseudodense and 
	yet $\bf{N}_{\sqrt{\eps/\delta}}$ is not $\delta$-dense inside of any 
	$\eps$-pseudorandom set.
\end{restatable}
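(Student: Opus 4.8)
The plan is to exploit the fact that the noise distribution $\bf{N}_\alpha$ with $\alpha = \sqrt{\eps/\delta}$ is, on the one hand, essentially invisible to small constant-depth circuits (by classical $\AC^0$ correlation bounds), while on the other hand its bias forces any pseudorandom set containing it to be ``too small'' in a way that $\cC(S,d)$ \emph{can} detect. First I would establish the $(\eps',\delta)$-pseudodensity claim. Fix $f \in \cC(S,d)$; we need $\delta\,\E[f(\bf{N}_\alpha)] \le \E[f(\bf{U})] + \eps'$. Write $\E[f(\bf{N}_\alpha)] - \E[f(\bf{U})]$ as a telescoping sum over coordinates flipped from bias $0$ to bias $\alpha$, or more directly bound it by the total-variation-type quantity controlled by the Fourier mass of $f$ on low levels; since $\E[f(\bf{N}_\alpha)] = \sum_T \hat f(T)\,(2\alpha)^{|T|} \cdot (\text{sign factors})$ — more cleanly, $\E[f(\bf{N}_\alpha)] - \E[f(\bf{U})] = \sum_{T \ne \emptyset}\hat f(T)(-2\alpha)^{|T|}$ after centering — one uses the $\AC^0$ Fourier concentration bound (LMN / Håstad): the Fourier mass of $f$ above level $t$ is at most $2^{-\Omega(t^{1/(d-1)})}\cdot\mathrm{poly}$, and below level $t$ the contribution is at most $\alpha \cdot O(t) = O(\sqrt{\eps/\delta}\cdot t)$ in absolute value via $\sum_{1 \le |T| \le t}|\hat f(T)| (2\alpha)^{|T|} \le \sum_{j\le t}(2\alpha)^j \sqrt{\binom{n}{j}}$-style estimates, or even more simply via a hybrid argument over the $n = O(1/\eps)$ coordinates giving $|\E[f(\bf{N}_\alpha)]-\E[f(\bf{U})]| \le 2\alpha\sqrt{n} = O(\sqrt{\eps/\delta}\cdot\sqrt{1/\eps}) = O(1/\sqrt\delta)$, which is not small enough — so the Fourier-concentration route is the one that actually works, with the level cutoff $t$ chosen so that $2^{-\Omega(t^{1/(d-1)})} \le \eps'$ forces $\log S$ to be at most the stated bound. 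Collecting both pieces, $\delta \E[f(\bf{N}_\alpha)] \le \delta(\E[f(\bf{U})] + O(\eps'/\delta)) \le \E[f(\bf{U})] + \eps'$ after absorbing constants, using $\delta \le 1$.

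Next I would prove the negative direction: $\bf{N}_\alpha$ is \emph{not} $\delta$-dense in any $\eps$-pseudorandom set. Suppose for contradiction $\bf{N}_\alpha$ is $\delta$-dense in $\bf{X}$, so $\bf{X} = \delta \bf{N}_\alpha + (1-\delta)\bf{Y}$ for some distribution $\bf{Y}$, with density function $\mu_X = \delta\mu_{N} + (1-\delta)\mu_Y \le 1$ pointwise. The key observation is that the Hamming-weight test — ``is $|x| \le n/2 - c\sqrt n$?'' for an appropriate constant $c$ — is computable in $\cC(S,d)$ for the $S,d$ in question (approximate majority / threshold on $O(1/\eps)$ bits sits in $\AC^0$ with the stated size when the margin is $\Theta(\sqrt n)$; this is exactly the regime where the circuit size bound in the statement is the binding constraint, matching the $1/(d-1)$ exponent). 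Under $\bf{N}_\alpha$, the weight concentrates around $n(1/2 - \alpha) = n/2 - \alpha n$, and $\alpha n = \sqrt{\eps/\delta}\cdot O(1/\eps) = O(1/\sqrt{\eps\delta}) = \omega(\sqrt n)$ since $\sqrt n = O(1/\sqrt\eps)$; so this test accepts $\bf{N}_\alpha$ with probability $1 - o(1)$, hence accepts $\bf{X}$ with probability at least $\delta(1-o(1)) \ge \delta/2$. But under $\bf{U}$ the same test accepts with probability $o(1)$ by a Chernoff bound (the threshold is $\Theta(\sqrt n)$ below the mean, but we can push it to, say, $\sqrt n \log n$ below or just use that $\delta/2$ is a constant while the $\bf{U}$-probability is subconstant). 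Thus this test distinguishes $\bf{X}$ from $\bf{U}$ by at least $\delta/2 - o(1) > \eps$ (for $n = O(1/\eps)$ large enough that $\eps < \delta/4$, say), contradicting $\eps$-pseudorandomness of $\bf{X}$.

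The main obstacle is the quantitative matching of the two circuit-complexity demands: the \emph{upper} bound direction needs $\cC(S,d)$ to be \emph{too weak} to see the $\alpha$-bias (forcing $S$ small via LMN-type concentration, which is where the $\sqrt{\log(1/\delta)}$... wait, rather the $\log(1/\delta)/\log(1/\eps')$ and $1/(d-1)$ exponent come from), while the \emph{lower} bound direction needs $\cC(S,d)$ to be \emph{just strong enough} to compute a coarse threshold with margin $\Theta(\alpha n)$ on $n$ bits. These must be simultaneously satisfiable, so I would set the margin and the Fourier cutoff as a single parameter and verify that the $S$ in the theorem statement lies in the sweet spot; the approximate-majority construction of Ajtai (or the standard $\AC^0$ threshold-with-margin circuit) has size $\exp(O(1/\gamma)^{1/(d-1)} \cdot \mathrm{poly})$ to decide weight within margin $\gamma n$, and one checks $\gamma = \alpha = \sqrt{\eps/\delta}$ makes this $\le S$. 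I would also double-check the edge regime $\delta$ small (where $\alpha$ could exceed $1/2$): the hypothesis implicitly requires $\eps < \delta/4$ or so for the statement to be non-vacuous, and I would state that mild condition explicitly rather than carry it silently, consistent with how Theorem~\ref{thm:dsopstodm} carries $\delta \ge \eps'/8$.
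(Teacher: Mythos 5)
Your proposal diverges from the paper's proof in both directions, and the positive direction has a gap that cannot be patched within your framework.

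\textbf{Positive direction (pseudodensity).} You reduce $(\eps',\delta)$-pseudodensity to pseudorandomness: you argue $\E[f(\bf{N}_\alpha)] \le \E[f(\bf{U})] + O(\eps'/\delta)$ and then multiply through by $\delta \le 1$. This reduction is logically fine, but the hypothesis you would need --- that $\bf{N}_\alpha$ is $O(\eps'/\delta)$-pseudorandom for $\cC(S,d)$ --- is simply false whenever $\eps'/\delta < \alpha = \sqrt{\eps/\delta}$, i.e.\ whenever $\eps' < \sqrt{\eps\delta}$. To see this, take $f(z) = 1 - z_1$, which is in $\cC(S,d)$ for any nontrivial $S,d$: it has $\E[f(\bf{N}_\alpha)] - \E[f(\bf{U})] = \alpha$ exactly. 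No Fourier-concentration bound can push the pseudorandomness advantage below $\alpha$; the single-variable advantage is an unavoidable floor. The theorem, however, allows $\eps'$ to be \emph{exponentially} smaller than $\eps$ (the $S$ constraint only shrinks as $\eps'$ shrinks, so the statement must hold for tiny $\eps'$; the paper explicitly remarks on this). The paper's Lemma~\ref{lemma:thm2_pseudodensity} gets around this by proving a \emph{multiplicative} bound: sample $\bf{N}_\alpha$ as $\bs{\rho} \circ \bf{N}_{\alpha/p}$, switch $f$ to a depth-$k$ decision tree $h$, and show $\E[h(\bf{N}_{\alpha/p})] \le (1+\alpha/p)^k \E[h(\bf{U})]$ by a direct induction on the tree. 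The multiplicative factor $(1+\alpha/p)^k = 1/\delta$ absorbs the bias; the only additive error is the switching-lemma failure probability $(p\cdot O(\log S)^{d-1})^k$, which decays exponentially in $k$ \emph{independently of $\alpha$}. That decoupling of the multiplicative and additive errors is exactly what your approach lacks, and is the crux of the lemma; the Fourier route (treating pseudodensity as scaled pseudorandomness) cannot replicate it.

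\textbf{Negative direction (not dense in a pseudorandom set).} You propose an explicit $\sqrt{n}$-margin threshold as a distinguisher and then must argue that this threshold is \emph{computable} in $\cC(S,d)$ for the stated $S$, while the positive direction requires $\cC(S,d)$ to be \emph{too weak} to see the $\alpha$-bias. You correctly flag this tension yourself, but I don't think there is a sweet spot: a $\sqrt{n}$-margin threshold on $n$ bits is itself essentially the coin problem at bias $\Theta(1/\sqrt{n})$, so by Tal's bound putting it in $\cC(S,d)$ forces $O(\log S)^{d-1} \ge \Omega(\sqrt{n})$, which is on the wrong side of the $S$ upper bound the positive direction requires. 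The paper avoids this entirely: Lemma~\ref{lemma:coin} does not exhibit a threshold distinguisher in $\cF$. It only uses that $\cF$ contains the pairwise parities $x_i \oplus x_j$ (true for all $S \ge 3, d\ge 2$), bounds the second moment of $\sum_i \bf{Z}_i$ for any $\eps$-pseudorandom $\bf{Z}$, and applies Chebyshev/Markov to get a tail bound. Any $\delta$-dense subdistribution of such $\bf{Z}$ then has tails bounded by $1/\delta$ times those of $\bf{Z}$, but $\bf{N}_\alpha$ puts half its mass $\alpha n$ below the mean --- contradiction for the chosen $\alpha$. This is a statement about the \emph{distribution} $\bf{Z}$, not a circuit the class must compute, so there is no circuit-complexity lower bound imposed on the test class and hence no tension with the positive direction.

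In short: the positive direction as you propose it is quantitatively impossible (it contradicts the theorem's allowance of exponentially small $\eps'$), and the negative direction, while morally right about Hamming weight, imposes a gratuitous circuit-complexity demand that the paper's second-moment argument completely sidesteps.
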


The dependence $\eps'$ means that we can take $\eps'$ exponentially smaller
than $\eps$ and still obtain a separation. This case corresponds to $\cF$ being `very' fooled
by $\bf{N}_\alpha$ but still not being $\delta$-dense in a `mildly' pseudorandom set. 
This result draws on a recent line of work in the pseudorandomness literature
--- often referred to as `the coin problem' and studied in, e.g., \cite{shaltiel2010hardness, cohen2014two, aaronson2010bqp, tal2017tight} --- which 
concerns the ability of a test class $\cF$ unable to compute majority
has in distinguising $\bf{N}_\alpha$ and $\bf{U}$. We will discuss this 
connection in more detail during the proof overviews.  

We prove a similar separation for degree-$d$ $\F_p$-polynomials
(on $n$ variables), which generalizes (and uses techniques from) 
a recent result of Srinivasan
\cite{srinivasan2020} in the case where $\delta = 1$. 
In this case, we think of a distribution $\bf{Z}$
as being $(\eps', \delta)$-pseudodense for degree-$d$ $\F_p$-polynomials
when $\delta \Pr[P(\bf{Z}) \neq 0] - \eps' \geq \Pr[P(\bf{U}) \neq 0]$
for any degree-$d$ polynomial $P \in \F_p[X_1, ..., X_n]$ (noting
that we are only evaluating $P$ over $\{0, 1\}^n$).

\begin{restatable}{thm}{degreelb}
	\label{thm:degreelb}
	Fix a finite field $\F$ with characteristic $p = O(1)$
	, $\eps, \eps' > 0$ and let $c > \delta > 0$
	where $c \approx 1/200$ is an absolute constant. Suppose that
	\[
		d \leq O(\sqrt{\delta/\eps})
	\]
	Then when $\cF$ is the $n$-variate degree-$d$ polynomials over $\F$
	with $n = 1/\eps$, and $\alpha = O(\sqrt{\eps/\delta})$, 
	$\bf{N}_\alpha$ is $(\eps', \delta)$-pseudodense
	but is not $\delta$-dense inside of an $\eps$-pseudorandom set. 
\end{restatable}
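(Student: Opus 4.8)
The plan is to prove the two halves of the statement separately: $(\eps',\delta)$-pseudodensity via a coin-problem \emph{upper} bound for low-degree polynomials (adapting Srinivasan \cite{srinivasan2020}), and failure of $\delta$-density inside an $\eps$-pseudorandom set via an elementary second-moment argument. I would first normalize: since $p=O(1)$ and $y^{p-1}=\mathbf 1[y\neq 0]$ for $y\in\F$, each test ``$P\neq 0$'' with $\deg P\le d$ equals the $\{0,1\}$-valued polynomial $P^{p-1}$ of degree $(p-1)d=O(d)$, and conversely; so it suffices to work with the class $\mathcal G$ of $\{0,1\}$-valued polynomials of degree $D=O(d)$, which is closed under $g\mapsto 1-g$. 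Throughout, $n=\Theta(1/\eps)$ and $\alpha=\Theta(\sqrt{\eps/\delta})$ (for a suitable constant), so $\alpha D=\Theta(1)$, $\alpha^2 n=\Theta(1/\delta)$, and $\bf{N}_\alpha$ has mean Hamming weight $\Theta(1/\sqrt\delta)$ standard deviations below $n/2$.

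\textbf{Pseudodensity.} The key lemma is the multiplicative \emph{bias-amplification bound}
\[
 \max_{g\in\mathcal G}\ \frac{\Pr[g(\bf{N}_\alpha)=1]}{\Pr[g(\bf{U})=1]}\ \le\ (1+2\alpha)^{D},
\]
attained by the subcube indicator $g=\prod_{i=1}^{D}(1-x_i)$. To prove it I would use Srinivasan's slice method: $\E_{\mathrm{wt}=w}[g]$ is a degree-$\le D$ real polynomial $q(w)$ with values in $[0,1]$ on $\{0,\dots,n\}$, $\Pr[g(\bf{N}_\alpha)=1]=\sum_w p_\alpha(w)q(w)$ and $\Pr[g(\bf{U})=1]=\sum_w p_0(w)q(w)$ with $p_\alpha(w)/p_0(w)=(1-2\alpha)^w(1+2\alpha)^{n-w}$, and a robust Heged\H{u}s-type inequality bounds, over all such $q$, how far the first weighted sum can exceed $(1+2\alpha)^{D}$ times the second --- a degree-$D$ polynomial's slice-averages cannot grow by more than a $(1+2\alpha)^D$ factor across the $\Theta(\alpha n)$-wide band carrying the bulk of $\bf{N}_\alpha$. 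Granting the bound, pick the absolute constant $c\approx 1/200$ so that $(1+2\alpha)^{D}\le 1/c$ whenever $\alpha D=O(1)$; then $\delta<c$ gives $\delta\Pr[g(\bf{N}_\alpha)=1]\le\Pr[g(\bf{U})=1]\le\Pr[g(\bf{U})=1]+\eps'$, i.e. $(\eps',\delta)$-pseudodensity. (Since $c$ is an absolute constant we need $(1+2\alpha)^{D}$ to be an absolute constant, hence $D=O(1/\alpha)$, which is exactly the budget $d=O(\sqrt{\delta/\eps})$; at higher degrees the indicator $\mathbf 1[x_1=\dots=x_k=0]$ with $k\approx\log(1/\delta)/\alpha$ witnesses a violation of pseudodensity for small $\eps'$.)

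\textbf{Failure of density in a pseudorandom set.} Suppose for contradiction $\bf{N}_\alpha$ is $\delta$-dense in an $\eps$-pseudorandom $\bf{X}$, and write $\bf{X}=\delta\,\bf{N}_\alpha+(1-\delta)\bf{W}$ with $\bf{W}$ a genuine distribution on $\{0,1\}^n$. The only tests I would use are the degree-$1$ polynomials $P_{ij}=x_i-x_j$ (the test ``$P_{ij}\neq 0$'' being ``$x_i\neq x_j$''). Since $\Pr_{\bf{N}_\alpha}[x_i\neq x_j]=2(\tfrac12-\alpha)(\tfrac12+\alpha)=\tfrac12-2\alpha^2$ and $\delta\alpha^2\ge\eps$, $\eps$-pseudorandomness of $\bf{X}$ forces, for every $i\neq j$,
\[
 \Pr_{\bf{W}}[x_i\neq x_j]\ \ge\ \frac{(\tfrac12-\eps)-\delta(\tfrac12-2\alpha^2)}{1-\delta}\ =\ \frac12+\frac{2\delta\alpha^2-\eps}{1-\delta}\ \ge\ \frac12+\frac{\eps}{1-\delta}.
\]
Summing over all $\binom n2$ pairs, using that the number of ``differing'' pairs equals $W(n-W)$ for $W=\sum_i x_i$, together with $W(n-W)\le n^2/4$ (AM--GM),
\[
 \binom n2\Big(\frac12+\frac{\eps}{1-\delta}\Big)\ \le\ \E_{\bf{W}}[\,W(n-W)\,]\ \le\ \frac{n^2}{4},
\]
which rearranges to $(n-1)\eps\le\tfrac{1-\delta}{2}$ --- false once $n=\Theta(1/\eps)$ (then $(n-1)\eps\to 1>\tfrac12$). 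So no such $\bf{X}$ exists; $\bf{N}_\alpha$ is not $\delta$-dense in any $\eps$-pseudorandom set, and a fortiori (by \Cref{claim:intro}) has no $\delta$-dense $\eps$-model.

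The main obstacle is the pseudodensity half: establishing the clean $(1+2\alpha)^{O(d)}$ amplification bound for \emph{arbitrary} degree-$O(d)$ polynomials --- not just subcube indicators --- is where one must import and adapt Srinivasan's robust version of Heged\H{u}s's lemma, and it is this lemma that fixes both the degree range $d=O(\sqrt{\delta/\eps})$ and the constant $c\approx 1/200$. The failure-of-density half, by contrast, is soft: it uses only degree-$1$ tests and the fact that no distribution on $\{0,1\}^n$ can make all $\binom n2$ pairs of coordinates simultaneously $\Omega(\eps/(1-\delta))$-more-often-unequal-than-a-fair-coin once $n\gg 1/\eps$. It is worth noting why this does not contradict the first half (hence why the dense model theorem genuinely fails here): each $P_{ij}$ individually satisfies the pseudodensity inequality with room to spare, and low-degree polynomials cannot be aggregated --- without something majority-like --- into a single test that sees the global obstruction.
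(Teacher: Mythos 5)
Your failure-of-density half is correct and is essentially the paper's own argument: degree-$2$ (or degree-$1$, up to the harmless $P^{p-1}$ conversion) tests detect a pairwise-correlation deficit, and a second-moment/Markov calculation rules out $\delta$-density in any $\eps$-pseudorandom set once $n = \Theta(1/\eps)$ and $\alpha^2 = \Omega(\eps/\delta)$. This is exactly \Cref{claim:concentration} and \Cref{lemma:coin}, presented via pair-counting instead of Markov; the two are interchangeable.

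The pseudodensity half has a genuine gap, and it is precisely where the $\F_p$ structure bites. You assert that $\E_{\mathrm{wt}=w}[g]$ is a degree-$\le D$ real polynomial in $w$, which is the cornerstone of your claimed multiplicative bound $\Pr[g(\bf{N}_\alpha)=1] \le (1+2\alpha)^{D}\Pr[g(\bf{U})=1]$. That slice-averaging fact is true for \emph{real} polynomials of degree $D$, but it is false for $\F_p$-polynomials: e.g.\ over $\F_2$ the parity function $x_1 + \dots + x_n$ has degree $1$, yet its slice average $w \mapsto w \bmod 2$ is not a degree-$1$ real polynomial. The whole difficulty of the $\F_p$ case is that the slice-polynomial machinery does not transfer, and your proposed lemma (clean as it is, and tight on subcube indicators) is not established — nor does the paper establish it, nor is it clear to me that it is even true. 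The paper's actual route is a contrapositive one: suppose a degree-$d$ polynomial $P$ refutes pseudodensity; use it to build the random function $\bf{F}(z) = \bigvee_{i\le\ell} P(z_{\bf{I}_i})$ on random with-replacement coordinate samples, which separates the Hamming slice of weight $m/2-\alpha m$ from the slice of weight $m/2$ with one-sided errors controlled by $e^{-\ell q}$ and $\ell q\delta$ (\Cref{lemma:promise_maj}); replace the fan-in-$\ell$ OR by Razborov's probabilistic $\F_p$-polynomial of degree $O(p\log(1/\gamma)) = O(1)$ (\Cref{claim:razborov}); and then invoke Srinivasan's robust Heg\"edus lemma (\Cref{lemma:hegedus}) — which is the correct $\F_p$-compatible replacement for a slice argument — to conclude $d\cdot p^2\log(1/\gamma) = \Omega(\alpha m)$, hence $d = \Omega(1/\alpha)$. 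The absolute constant $c\approx 1/200$ and the requirement $\delta < c$ come from the hypotheses of the Heg\"edus lemma (and from making $\ell q\delta$ bounded away from $1$), not from demanding $(1+2\alpha)^D \le 1/c$. To repair your argument you would need to abandon the direct ratio bound and instead run the OR-of-random-copies plus probabilistic-polynomial plus Heg\"edus chain, which is a qualitatively different — and more delicate — use of Srinivasan's result than the slice method you describe.
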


This implies lower bounds for constant-depth circuits with $\MOD_p$ gates by the 
classical lower bounds of
Razborov \cite{razborov1987lower} and Smolensky \cite{smolensky1993representations}.
Perhaps more interestingly, this holds even over non-prime fields. 
Also notably, there is no dependence on $\eps' \leq \eps \delta$, so we can take it to be 
arbitrarily small.

We also prove a more general separation between pseudodensity and density 
in a pseudorandom set. This result, drawing from the work of 
\cite{shaltiel2010hardness}, provides a more specific 
characterization of the sense 
in which dense model theorems are `required' to compute majority. 

\begin{restatable}{thm}{pdtodm}
    \label{thm:pdtodm}
    Let $\eps, \delta > 0$.
     Suppose $\cF$ is a test class of boolean functions $f: \{0, 1\}^n \to \{0, 1\}$
	with the following property: there is no $\AC^0$ $\cF$-oracle circuit
	of size $\poly(n \cdot \frac{\sqrt{\delta}}{\eps^{3/2}})$ 
	computing majority on $O(\sqrt{\delta/\eps})$ bits.

	Then $\bf{N}_{\sqrt{\eps/\delta}}$ is $(\epsilon \delta, \delta)$-pseudodense
	and yet does not have a $\delta$-dense $\eps$-model. In particular,
	when the hypotheses are met, the dense model theorem is false.
\end{restatable}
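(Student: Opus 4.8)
The plan is to combine two ingredients. First, a positive fact: the distribution $\bf{N}_{\sqrt{\eps/\delta}}$ with $\alpha = \sqrt{\eps/\delta}$ is $(\eps\delta,\delta)$-pseudodense with respect to \emph{every} test class $\cF$ of boolean functions. This is purely information-theoretic and does not depend on $\cF$ at all: one checks that scaling $\bf{N}_\alpha$'s density function by a constant and capping at the uniform density produces a legitimate $\delta$-dense distribution within $\bf{U}$ whenever $\alpha$ is a small enough multiple of $\sqrt{\eps/\delta}$, and then pseudodensity follows because for boolean $f$ we always have $\delta\,\E[f(\bf{N}_\alpha)] \le \Pr[\bf{N}_\alpha \text{ lands in the support}] \le \E[f(\bf{U})] + \eps\delta$; concretely one just verifies $\delta\,\E[f(\bf{N}_\alpha)] - \E[f(\bf{U})] \le \delta - \E[f(\bf{U})] \le \eps\delta$ in the worst case by a direct estimate on the total-variation-type quantity. (This is essentially the ``trivial'' direction already sketched in \Cref{claim:intro} and the discussion following \Cref{thm:dmt1}.)

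Second, the negative fact: if $\bf{N}_{\sqrt{\eps/\delta}}$ \emph{did} have a $\delta$-dense $\eps$-model $\bf{M}$ with respect to $\cF$, then we could build a small $\AC^0$ $\cF$-oracle circuit approximating majority on $m = O(\sqrt{\delta/\eps})$ bits, contradicting the hypothesis. The idea, following Shaltiel--Viola \cite{shaltiel2010hardness}, is an amplification/self-reduction: a single bias-$\alpha$ coin is hard to distinguish from unbiased, but by taking $\cF$-combinations we would be amplifying the distinguishing advantage, and the only cheap way to turn a tiny bias into a constant bias inside $\AC^0$ is to compute (approximate) majority. More precisely: suppose $\bf{M}$ is $\eps$-indistinguishable from $\bf{N}_\alpha$ by $\cF$ and $\delta$-dense in $\bf{U}$. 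Since $\bf{M}$ is $\delta$-dense, there is a set $T$ of density roughly $\delta$ on which $\bf{M}$ is supported, and on $T$ it behaves at least somewhat like $\bf{U}$; combining the density bound with the indistinguishability, one shows that no $f\in\cF$ has advantage more than $\approx\eps/\delta$ in distinguishing $\bf{N}_\alpha$ from $\bf{U}$ itself (this is where the $\delta$ in the bias $\sqrt{\eps/\delta}$ and the $\eps$ in the model come together). But a bias-$\alpha$ coin with $\alpha = \Theta(\sqrt{\eps/\delta})$ flipped $m = \Theta(1/\alpha^2) = \Theta(\delta/\eps)$ times is distinguishable from unbiased with constant advantage by the majority function on those $m$ bits; so if $\cF$ cannot distinguish $\bf{N}_\alpha$ from $\bf{U}$ with advantage $\gg \eps/\delta$ on $n = 1/\eps$ bits, then stringing together $\cF$-gates in an $\AC^0$ fashion to boost to constant advantage forces an $\AC^0$ $\cF$-oracle circuit for $\mathrm{MAJ}_m$ — and the size bookkeeping yields $\poly(n\cdot\sqrt{\delta}/\eps^{3/2})$. (The extra $n$ and the $1/\eps^{3/2}$ factors come from needing $m\approx\delta/\eps$ independent copies, each on $n=1/\eps$ bits, fed into the distinguisher, plus the $\AC^0$ glue.)

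Concretely, the steps in order are: (1) fix $n = 1/\eps$, $\alpha = \sqrt{\eps/\delta}$, $m = \Theta(\delta/\eps)$; (2) prove the $(\eps\delta,\delta)$-pseudodensity of $\bf{N}_\alpha$ by the direct density-capping argument, using $\delta < c \approx 1/200$ only to ensure the capped function is a valid sub-density of $\bf{U}$; (3) assume for contradiction a $\delta$-dense $\eps$-model $\bf{M}$ and extract from it the bound ``no $f\in\cF$ distinguishes $\bf{N}_\alpha$ from $\bf{U}$ with advantage more than $O(\eps/\delta)$'' by writing $\E[f(\bf{N}_\alpha)] - \E[f(\bf{U})] = (\E[f(\bf{N}_\alpha)] - \E[f(\bf{M})]) + (\E[f(\bf{M})] - \E[f(\bf{U})])$ and bounding the first term by $\eps$ (model) and the second by $O(\eps/\delta)$ (a $\delta$-dense distribution is within statistical distance — after the natural scaling — that forces the second gap to be at most $\eps$ once we also use that $\bf{M}$ itself is $\eps$-close to a bias-$\alpha$ coin; here is where the factor $1/\delta$ enters); (4) invoke the Shaltiel--Viola-style hardness-amplification converse: a class that fails to $O(\eps/\delta)$-distinguish a $\sqrt{\eps/\delta}$-biased coin on $1/\eps$ bits must, when allowed $\AC^0$ post-processing over $\cF$-oracle gates, still fail to compute $\mathrm{MAJ}_m$ at size $\poly(n\sqrt{\delta}/\eps^{3/2})$, contradicting the hypothesis; (5) conclude no dense model exists, hence the dense model theorem fails here.

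The main obstacle I expect is step (4) — the quantitatively tight reduction from ``approximates majority on $O(\sqrt{\delta/\eps})$ bits in $\AC^0$ with $\cF$-oracles'' to ``distinguishes $\bf{N}_{\sqrt{\eps/\delta}}$ from uniform'' — because one must show that the \emph{only} way an $\AC^0\circ\cF$ circuit can aggregate many individually-weak $\cF$-tests into a distinguisher of advantage exceeding $O(\eps/\delta)$ is by (approximately) computing a majority, and getting the parameters ($m$, the circuit size, the depth blow-up) to line up with the stated $\poly(n\sqrt{\delta}/\eps^{3/2})$ bound is delicate; this is exactly the place where we lean on the coin-problem machinery of \cite{shaltiel2010hardness}, suitably generalized from the $\delta=1$ (pseudorandomness) case to the $\delta$-dense case, mirroring what is done in the proofs of \Cref{thm:pdtodsops} and \Cref{thm:degreelb}. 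Step (3), relating $\delta$-density of the model to a distinguishing bound against uniform with only an $O(1/\delta)$ loss, is the secondary technical point and uses the same ``$\bf{U} = \delta\bf{M} + (1-\delta)\overline{\bf{M}}$'' decomposition from the proof of \Cref{claim:intro}.
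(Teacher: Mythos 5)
Your proposal misallocates where the hypothesis of the theorem is actually used, and both halves of your argument have genuine errors.

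On the pseudodensity half (your step 2): you claim that $\bf{N}_\alpha$ being $(\eps\delta,\delta)$-pseudodense is ``purely information-theoretic and does not depend on $\cF$ at all,'' and try to justify this via $\delta\,\E[f(\bf{N}_\alpha)]-\E[f(\bf{U})]\le\delta-\E[f(\bf{U})]\le\eps\delta$. That last inequality is false in general --- it requires $\E[f(\bf{U})]\ge\delta(1-\eps)$, which is nothing like a universal bound. Indeed, pseudodensity of $\bf{N}_\alpha$ \emph{cannot} be class-independent: if $\cF$ contained the indicator of the lower-weight tail $\{z : \wt(z)\le n/2-\alpha n\}$, then $\E[f(\bf{N}_\alpha)]\approx 1/2$ while $\E[f(\bf{U})]$ is exponentially small, so $\delta\,\E[f(\bf{N}_\alpha)]-\E[f(\bf{U})]\approx\delta/2\gg\eps\delta$. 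This is exactly why the paper proves pseudodensity \emph{conditionally} in \Cref{lemma:thm3_main_lemma}: one argues by contrapositive that any $f\in\cF$ refuting pseudodensity can be amplified (OR of $\ell\approx\delta/\eps'$ independent copies via random permutations) into a constant-advantage distinguisher between $\bf{N}_\alpha$ and $\bf{U}$, to which \Cref{thm:shaltielviola} applies and yields a small $\AC^0$ $\cF$-oracle circuit for $\MAJ_{O(1/\alpha)}$, contradicting the hypothesis. So the hypothesis of the theorem is consumed by the \emph{positive} part, not the negative part.

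On the no-dense-model half (your steps 3--4): the key inequality you assert --- that if $\bf{M}$ is $\delta$-dense then $|\E[f(\bf{M})]-\E[f(\bf{U})]|=O(\eps/\delta)$ --- is false. A $\delta$-dense distribution (say, uniform over a $\delta$-fraction set) can be distinguished from $\bf{U}$ with advantage $1-\delta$ by its own indicator, so the second term of your decomposition cannot be controlled by density alone. Moreover, even granting your step 3, step 4 does not produce a contradiction: you conclude that $\AC^0\circ\cF$ fails to compute majority, which is precisely the hypothesis, not its negation. What the paper does here is entirely different and simpler: \Cref{lemma:coin} (a second-moment/Markov concentration argument using parity tests $x_i\oplus x_j$) shows directly that for $\alpha\gtrsim\sqrt{\eps/\delta}$ the heavy tail of $\bf{N}_\alpha$ has mass $1/2$, whereas any $\eps$-pseudorandom distribution has concentrated Hamming weight, so $\bf{N}_\alpha$ cannot be $\delta$-dense in any $\eps$-pseudorandom set; combined with \Cref{claim:intro}(1), it has no $\delta$-dense $\eps$-model. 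This is unconditional (given that $\cF$ contains pairwise parities) and uses no amplification. Finally, a smaller point: you take $m=\Theta(\delta/\eps)=\Theta(1/\alpha^2)$, but the statement and \Cref{thm:shaltielviola} give majority on $O(1/\alpha)=O(\sqrt{\delta/\eps})$ bits.
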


Informally, this says that any $\cF$ which can
refute the pseudodensity of $\bf{N}_\alpha$ is only 
`a constant-depth circuit away' from computing majority. 
\subsection{Related work}
\label{ssection:related_work}

\paragraph{Computational entropy}

Computational entropy was studied systematically in \cite{barak2003computational}
and is relevant to various problems in complexity and cryptography such as
leakage-resilience \cite{dziembowski2008leakage}, 
constructions of PRGs from one-way functions 
\cite{haastad1999pseudorandom, haitner2009inaccessible, haitner2013efficiency}. 
and derandomization \cite{doron2019nearly}. 

There are a number of definitions of computational entropy which we \textit{don't}
consider in this work. For example, Yao pseudoentropy \cite{yao1982theory} (see also \cite{barak2003computational}), corresponding to random
variables which are `compressible' by a class of tests $\cF$, in the sense that $\cF$
can encode and decode the random variable by encoding into a small number of bits.
Yao pseudoentropy was recently used in time-efficient hardness-to-randomness tradeoffs \cite{doron2019nearly},
where (randomness-efficient) samplers for pseudodense distributions were used with an 
appropriate extractor to construct a pseudorandom distribution. 
Another example is \textit{inaccessible entropy} of Haitner et al. \cite{haitner2009inaccessible},
corresponding to the entropy of a message at some round in a two-player protocol 
conditioned on the prior messages and the randomness of the players, which is used
in efficient constructions of statistically hiding commitment schemes 
from one-way functions \cite{haitner2013efficiency}.

Separating notions of computational entropy has been studied before 
in \cite{barak2003computational}, who prove a separation of 
pseudodensity and having a dense model for bounded-width 
read-once branching programs. Separating notions of \textit{conditional} computational entropy
was studied in \cite{hsiao2007conditional}, showing 
separations between conditional variants
of Yao pseudoentropy and having a dense model.

As mentioned in \cite{hsiao2007conditional}, citing \cite{trevisan2009regularity}
and personal communication with Impagliazzo, another question of interest is whether
Yao pseudoentropy (corresponding to efficient encoding/decoding algorithms)
implies having dense model. It is not hard to see that small Yao pseudoentropy
implies small pseudodensity, with some mild restrictions on $\cF$. It would be
interesting to see if the techniques from this paper can be used to understand
Yao pseudoentropy in more detail. We leave this to future work. 

\paragraph{Complexity of dense model theorems and hardness amplification}

Prior work on the complexity of dense model theorems has
included a tight lower bound on the query complexity \cite{zhang2011query}
and a lower bound on the advice complexity \cite{watson2015advice}. 
As far as we are aware, this is the first work to consider the computational complexity 
of dense model theorems.

There has also been prior work on the computational complexity of hardness amplification, establishing that various
known strategies for hardness amplification require the computation of majority \cite{lu2011complexity, shaltiel2010hardness, grinberg2018indistinguishability,shaltiel2020possible}. It is known that a particular type of hardness amplification given by the \textit{hardcore lemma} implies the dense model theorem \cite{impagliazzo2020connections}. 

Our results are stronger in the following sense:
previous work \cite{lu2011complexity, shaltiel2010hardness, 
grinberg2018indistinguishability} shows that \textit{black-box hardness amplification proofs}
require majority. This means that if you amplify the hardness of $f$ in some black-box way, 
then this can be used to compute majority. In our case, we simply show (in different settings)
that the dense model theorem is \textit{false}, regardless of how we tried to prove it.
By the connection between the hardcore lemma and the dense model theorem, our results
also provide scenarios where the hardcore lemma is false. As far as we are aware,
these are the first such scenarios recorded in the literature. 

\subsection{Technical overview}

\label{ssection:technical_overview}
We discuss two general themes that appear consistently in the proofs
and then discuss each of the main theorems in some more detail.

\subsubsection{Dense distributions have mostly unbiased bits}

A commonly-used observation in theoretical computer science is that
most bit positions of a $\delta$-dense random variable over $\{0, 1\}^n$ 
have bias $O(\sqrt{\log(1/\delta)}/n)$
(see, for example, the introduction of \cite{meir2019prediction}).
Relevant to our purposes, it provides a \textit{necessary}
condition for having a $\delta$-dense $\eps$-model with 
respect to any class $\cF$ containing the projections
$z \mapsto z_i$. $\bf{Z}$ has a $\delta$-dense $\eps$-model,
then most bits of $\bf{Z}$ have bias $\eps + O(\sqrt{\log(1/\delta)}/n)$. In particular, if all of the bits of $\bf{Z}$ have \textit{large} bias, then it can't have a dense model. 

This is used directly in the proof of \Cref{thm:dsopstodm}. In this case,
we construct a distribution $\bf{Z}$ which is $\delta$-dense in a set which 
is $\eps$-pseudorandom for $\AC^0$ but where the each bit is 
noticeably biased away from $1/2$.  

In order to prove separations between pseudodensity and being dense in a pseudorandom
set --- as in \Cref{thm:pdtodsops}, \Cref{thm:degreelb} and \Cref{thm:pdtodm} --- 
we need to consider the bias of larger subsets of variables. 
Considering just two bits is sufficient 
to prove mild concentration bounds on the weight of pseudorandom strings.
This implies that the tails of dense subsets of pseudorandom sets 
should not be too heavy.

\subsubsection{Biased coin distribution}

The \textit{biased coin distribution},
$\bf{N}_\alpha$ over $\{0, 1\}^n$ is the product of $n$ Bernoulli random variables 
with success probability $1/2 - \alpha$. $\bf{N}_\alpha$ has recently garnered
significant interest in the pseudorandomness literature 
(see \cite{agrawal2019coin, cohen2014two, 
tal2017tight, brody2010coin, aaronson2010bqp}). 
Shaltiel and Viola \cite{shaltiel2010hardness} showed that if $f$ is a 
test which $\eps$-distinguishes $\bf{N}_\alpha$ from $\bf{U}$,
then there is a small, constant-depth circuit $C$ with $f$-oracle gates which computes
majority on $O(1/\eps)$ bits. A similar, but qualitatively different,
connection due to Limaye et al \cite{limaye2019fixed} --- extended
to any choice of $\eps > 0$
by Srinivasan \cite{srinivasan2020} --- shows that any $\F_p$-polynomial
with advantage $1-2\eps$ in distinguishing $\bf{N}_\alpha$
from $\bf{U}$ must have degree $\Omega(\log(1/\eps)/\alpha)$.
We extend some of these pseudorandomness results regarding $\bf{N}_\alpha$ to \textit{pseudodensity} results. 

First, we extend the observation of Shaltiel and Viola to apply
to tests $f$ for which $\E[f(\bf{Z})] \geq \delta \E[f(\bf{U})] + \eps$
(which corresponds to pseudorandomness when $\delta = 1$). This
gives us unconditional pseudodensity for test classes $\cF$ which
can't be used in small, constant-depth oracle circuits approximating 
majority. We also extend the observation of \cite{limaye2019fixed} to show
lower bounds on the $\F_p$-degree for any function $f$ which refutes
the pseudodensity of $\bf{N}_\alpha$. 

In \Cref{lemma:thm2_pseudodensity},
we show that $\bf{N}_\alpha$ exhibits $(\eps, \delta)$-pseudodensity
for $\eps = (p \cdot O(\log S)^{d-1})^k$ and $\delta = e^{-\alpha k / p}$.
This can be seen as a generalization of Tal's result, building on \cite{cohen2014two, aaronson2010bqp, shaltiel2010hardness} that $\bf{N}_\alpha$
is $3\alpha \cdot O(\log S)^{d-1}$-pseudorandom for $\cC(S, d)$. 

Tal uses a Fourier analytic proof which becomes
very simple given tail bounds on the Fourier spectrum of $\AC^0$ (the 
latter being the main contribution of \cite{tal2017tight}). More generally,
any $\cF$ enjoying sufficiently strong tail bounds on the Fourier spectrum
(in the $\ell_1$ norm) cannot distinguish between $\bf{N}_\alpha$ and uniform.
It turns out, as proved by Tal and recorded in Agarwal \cite{agrawal2019coin},
that if $\cF$ is closed under restrictions than even bounding the first level
of the Fourier spectrum works. The proof of \Cref{lemma:thm2_pseudodensity} based specifically on the switching lemma for constant-depth circuits. While switching lemmas can be used to show Fourier concentration, it would be intersting to find a proof which only uses the assumption of Fourier concentration (or some Fourier-analytic assumption).

\subsubsection{\Cref{thm:dsopstodm}}

Our goal is to construct a random variable $\bf{D}$ which is dense inside of an
$\AC^0$-pseudorandom set but where each bit is biased away from $0$. In this case, $\bf{D}$
would be distinguishable from any dense set, since the average bit of a dense set is roughly unbiased. 
Doing so requires two steps.

The first step is constructing an appropriate distribution $\bf{Z}$ that fools $\AC^0$ circuits.
For this we adopt a general strategy of Ajtai and Wigderson \cite{ajtai1985deterministic} (and applied in many contexts in pseudorandomness since; see, e.g., \cite{servedio2018improved}): 
to fool a circuit $C$,
we start by producing a 
random restriction to simpify $C$ to a short decision tree (via the switching
lemma), and then we fool the decision tree on the remaining bits using a $k$-wise 
independent distribution $\bf{S}$. If we wanted $\bf{Z}$ to have small
support size, we would need some way of producing random restrictions 
with a small amount randomness (which is precisely the approach of Ajtai-Wigderson and later work). 
Fortunately, we only care about the existence of $\bf{Z}$ and are 
therefore content to use the `non-derandomized' switching lemma. 

The second step is finding a dense subset $\bf{D}$ of $\bf{S}$ with biased bits. 
We do this by constructing $\bf{S}$ so that each bit has bias roughly $\sqrt{\log(1/\delta)/K}$,
where $k \ll K \ll n$ is a parameter. This is achieved by randomly bucketing the indices
into $K$ buckets and assigning each bucket a random bit, which reduces
the dimension of the problem from $n$ to $K$. This means we can pick a $\delta$-dense 
event in $\{0, 1\}^K$ with extremal bias --- met (up to constants)
by the function accepting all strings with weight less than 
$K/2 - K \sqrt{\log (1/\delta)}$ --- in order to find a dense subset of $\bf{S}$ with large bias. 
The bucketing construction introduces some error when a small set $I \subseteq [n]$ hits
to distinct elements in some buckets.

\subsubsection{\Cref{thm:pdtodsops}}

We will show $\bf{N}_\alpha$ has $(\delta, \eps')$-pseudodensity
for $\AC^0$ for $\delta = \eps' =  O(1)$, $\alpha = 1/\poly \log(n)$.  The idea is that $\bf{N}_\alpha$ can be sampled by first sampling a random restriction which leaves a $p$ fraction
of the bits unset (and is unbiased on the restricted bits) and then setting
the remaining bits with bias $\alpha/p$. Applying the switching lemma,
we conclude that $\E[f(\bf{N}_\alpha)] \approx \E[f'(\bf{N}_{\alpha/p})]$
where $f'$ is a short decision tree (which doesn't
not depend on all of its inputs). A simple calculation reveals that acceptance probability of $f'$
can increase by at a most a factor $(1 + \alpha/p)^d \leq e^{\alpha d / p}$
when passing from the uniform distribution to $\bf{N}_{\alpha/p}$. By incorporating the error from the switching lemma (i.e. the advantage lost
by conditioning on the switching lemma succeeding), we get $(\delta, \epsilon)$-pseudodensity.

To prove the separation, we use the fact that the Hamming weight of a random variable 
fooling $\cC(S, d)$ is concentrated around its expectation. This means in particular
that if $\bf{N}_\alpha$ \textit{were} $\delta$-dense 
in a pseudorandom distribution, then
the tails of $\bf{N}_\alpha$ couldn't be too heavy and therefore $\alpha$ couldn't be too large.

\subsubsection{\Cref{thm:degreelb} and \Cref{thm:pdtodm}}

\Cref{thm:degreelb} and \Cref{thm:pdtodm} 
draw from related work of Srinivasan \cite{srinivasan2020} 
and Shaltiel-Viola \cite{shaltiel2010hardness} respectively.

With $\epsilon > 0$ and $\cF$ an arbitrary class of tests $f: \{0, 1\}^n \to \{\pm 1\}$,
suppose that $f \in \cF$ witnesses that $\bf{N}_\eps$ \textit{fails} to have
$(\eps', \delta)$-pseudo-density in the sense that
\[
	\E[f(\bf{U})] \leq \delta \E[f(\bf{N}_\beta)] - \gamma.
\]

\cite{srinivasan2020} and \cite{shaltiel2010hardness}
both make use of the following simple observation. 
Given two strings $u, v \in \{0, 1\}^m$ with $\wt(u) = (1/2 - \eps)m$ 
and $\wt(v) = m/2$,
a uniformly random index 
$i \in [m]$ has $u_i$ distributed as a $(1/2 - \eps)$-biased coin
and $v_i$ as an unbiased coin. 
In our case, applying $f$ to sufficiently many random samples from
$u$ or $v$ `distinguishes' the two of them, but in a weaker sense.

In the case of \Cref{thm:pdtodm}, we can amplify
acceptance probabilities by increasing the size of the circuit
by a factor $1/\eps \delta$, after which we can apply 
\cite{shaltiel2010hardness} saying that constant-error
distinguishers between $\bf{N}_\alpha$ and $\bf{U}$
can be used to compute majority. 

For \Cref{thm:degreelb}, we apply a beautiful recent result of Srinivasan 
\cite{srinivasan2020} showing that any $m$-variate polynomial (over a finite field) 
which vanishes on most points on the slice $1/2 - \alpha$ and doesn't 
vanish on most points on the slice $1/2$
must have high degree $\Omega(\alpha m)$. 
One way of interpreting this result is that low-degree
polynomials can't approximately solve certain `promise' versions of majority.

In this latter case, we need to open up the error reduction procedure
we use for \Cref{thm:pdtodm} and show how to approximate it 
using low-degree polynomials. This will ultimately be achieved 
by approximating OR with a probabilistic polynomial, 
as in \cite{razborov1987lower, smolensky1993representations}.

\section{Technical tools}
\label{section:prelim}

We write $[n] = \{1, ..., n\}$ and use boldface to denote random variables. Let $\cC(S, d)$ be the set of size $S$, depth-$d$ unbounded fan-in circuits. For a boolean function
$f: \{0, 1\}^n \to \{0, 1\}$, let $DT(f)$ denote the 
depth of the shortest decision tree computing $f$. 
%
%
%

\subsection{Biased coins}

As before, let $\mathbf{N}_\alpha$ denote the random variable 
corresponding to the product of $n$ independent coins with bias $(1/2 - \alpha)$. 
That is,
\[
	\Pr[\bf{N}_\alpha = z] = (1/2 - \alpha)^{\wt(z)}(1/2 + \alpha)^{n - \wt(z)}
\]
where $\wt(z)$ denotes the Hamming weight of $z$. 

For a random variable $\bf{Z}$ over $\{0, 1\}^n$ and $i \in [n]$,
let $\bias_i(\bf{Z}) = |\Pr[\bf{Z}_i = 1] - \Pr[\bf{Z}_i = 0]|/2$.
Let $\cB = \{z \mapsto z_i : i \in [n] \}$ be the set of monotone projections.
A random variable $\bf{Z} = (\bf{Z}_1, ..., \bf{Z}_n)$ is $\epsilon$-pseudorandom
with respect to $\cB$ precisely when each marginal $\bf{Z}_i$ 
has the property that $\bias_i(\bf{Z}) = |\Pr[\bf{Z}_i = 1] - 1/2| \leq \eps$
for each $i \in [n]$. In particular,

\begin{claim}
	\label{claim:projections}
	For any $\eps > 0$, $\bf{N}_{\eps}$ is $\eps$-pseudorandom with respect to $\cB$. 
\end{claim}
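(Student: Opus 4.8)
The plan is to unwind the definitions directly. Recall that $\cB = \{z \mapsto z_i : i \in [n]\}$ consists of the monotone coordinate projections, and that $\bf{Z}$ being $\eps$-pseudorandom with respect to $\cB$ means exactly that $\bias_i(\bf{Z}) \le \eps$ for every $i \in [n]$, which (as noted in the text just above the claim) is the same as $|\Pr[\bf{Z}_i = 1] - 1/2| \le \eps$ for each $i$. So the entire content of the claim is the assertion that each coordinate of $\bf{N}_\eps$ has bias at most $\eps$ away from $1/2$.

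First I would observe that $\bf{N}_\eps$ is by definition a product distribution: its coordinates $(\bf{N}_\eps)_1, \dots, (\bf{N}_\eps)_n$ are i.i.d.\ Bernoulli random variables with success probability $1/2 - \eps$. Hence for each $i \in [n]$ we have $\Pr[(\bf{N}_\eps)_i = 1] = 1/2 - \eps$, so that
\[
\bias_i(\bf{N}_\eps) = \bigl|\Pr[(\bf{N}_\eps)_i = 1] - 1/2\bigr| = \bigl|(1/2 - \eps) - 1/2\bigr| = \eps.
\]
Since this holds for every coordinate $i$, we get $\max_i \bias_i(\bf{N}_\eps) = \eps \le \eps$, which is precisely the statement that $\bf{N}_\eps$ is $\eps$-pseudorandom with respect to $\cB$. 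One can double-check consistency with the displayed formula for $\Pr[\bf{N}_\alpha = z]$ by marginalizing: summing $(1/2-\eps)^{\wt(z)}(1/2+\eps)^{n-\wt(z)}$ over all $z$ with $z_i = 1$ gives $(1/2 - \eps)$ after the binomial sum over the other $n-1$ coordinates collapses to $1$.

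There is essentially no obstacle here — the claim is a one-line consequence of the product structure of $\bf{N}_\eps$ and the definition of $\bias_i$. The only thing worth being careful about is matching conventions: the factor of $1/2$ in the definition $\bias_i(\bf{Z}) = |\Pr[\bf{Z}_i = 1] - \Pr[\bf{Z}_i = 0]|/2$ is exactly what makes $\bias_i$ equal to $|\Pr[\bf{Z}_i=1] - 1/2|$ rather than twice that quantity, so the bias comes out to be $\eps$ on the nose rather than $2\eps$. This tight (equality, not inequality) version is what makes the claim a natural statement to record, since it shows $\bf{N}_\eps$ sits exactly at the boundary of $\eps$-pseudorandomness for projections, which is the regime relevant to the separations in Theorems~\ref{thm:pdtodsops}, \ref{thm:degreelb}, and \ref{thm:pdtodm}.
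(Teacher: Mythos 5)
Your proof is correct and is exactly the argument the paper intends (the paper states this claim without proof precisely because it follows immediately from the product structure of $\bf{N}_\eps$ and the definitions of $\bias_i$ and $\eps$-pseudorandomness for $\cB$). The extra sanity check by marginalizing the displayed density is fine but unnecessary; the one-line computation $\bias_i(\bf{N}_\eps) = |(1/2 - \eps) - 1/2| = \eps$ is all that is needed.
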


\subsection{Information theory}

The \textit{(Shannon) entropy} of a random variable is defined
as 
\[
H(\bf{Z}) = - \sum_{x \in \{0, 1\}^n} p_{\bf{Z}}(x) \log p_{\bf{Z}}(x),
\]
where $p_{\bf{Z}}$ is the probability density function corresponding to $\bf{Z}$.
The Shannon entropy of random vector is sub-additive, 
in that $H(\bf{Z}) \leq \sum_{i \in [n]} \bf{Z_i}$.
When $\bf{Z} \in \{0, 1\}$ and $\Pr[\bf{Z} = 1] = p$, 
we use $h(p) = H(\bf{Z}) = -(p \log p + (1-p)\log(1-p))$
to denote the binary entropy function.

The \textit{min-entropy} is defined as
\[
	H_\infty(\bf{Z}) = - \min_{x \in \{0, 1\}^n} \log p_{\bf{Z}}(x)
\]
If $\bf{Z}$ is $\delta$-dense inside of $\bf{U}$,
then its min-entropy is $n - \log(1/\delta)$
and for any random variable $\bf{Z}$, 
$H_\infty(\bf{Z}) \leq H(\bf{Z})$.

By this latter inequality and subadditivity, the average entropy of $\bf{Z}$'s
bits is at least $1 - \log(1/\delta)/n$. Appealing to a
quadratic approximation of binary entropy, we learn
that the bias must be at most $\sqrt{\log(1/\delta)/n}$.
This result has been referred to as \textit{Chang's inequality}
and the \textit{Level-1 inequality}, having been observed
in different forms and with different proofs in,
for example, 
\cite{talagrand1996much, chang2002polynomial, hambardzumyan2020chang, impagliazzo2014entropic}.
Because it is so simple, we provide a proof here:

\begin{claim}
    	\label{lemma:chang}
	If $\bf{Z}$ is $\delta$-dense in $\bf{U}$, 
	then $\E_i[\bias_i(\bf{Z})] \leq \sqrt{\log(1/\delta)/n}$
\end{claim}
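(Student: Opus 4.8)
The plan is to follow the chain of inequalities sketched in the paragraph preceding the claim, making each step quantitative. First I would recall that $\delta$-density of $\mathbf{Z}$ in $\mathbf{U}$ means $H_\infty(\mathbf{Z}) \geq n - \log(1/\delta)$, so by $H_\infty(\mathbf{Z}) \leq H(\mathbf{Z})$ we get $H(\mathbf{Z}) \geq n - \log(1/\delta)$. Then by subadditivity of Shannon entropy, $\sum_{i \in [n]} H(\mathbf{Z}_i) \geq H(\mathbf{Z}) \geq n - \log(1/\delta)$, so the average bit entropy satisfies $\mathbb{E}_i[H(\mathbf{Z}_i)] \geq 1 - \log(1/\delta)/n$, i.e. $\mathbb{E}_i[1 - h(\Pr[\mathbf{Z}_i = 1])] \leq \log(1/\delta)/n$.

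Next I would connect the entropy deficiency $1 - h(p)$ of a single bit to its bias. Writing $p = 1/2 \pm \beta$ with $\beta = \mathrm{bias}_i(\mathbf{Z})$, the standard quadratic lower bound on the entropy deficit gives $1 - h(1/2 + \beta) \geq c\beta^2$ for an absolute constant $c$; in fact one can take $1 - h(1/2+\beta) \geq 2\beta^2/\ln 2 \geq 2\beta^2$, which follows from a Taylor expansion of $h$ around $1/2$ together with the fact that the second derivative of $h$ is bounded away from $0$ (or from Pinsker's inequality applied to $\mathrm{KL}(\mathrm{Ber}(1/2+\beta)\,\|\,\mathrm{Ber}(1/2))$). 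Plugging this in, $2\,\mathbb{E}_i[\mathrm{bias}_i(\mathbf{Z})^2] \leq \mathbb{E}_i[1 - h(\Pr[\mathbf{Z}_i=1])] \leq \log(1/\delta)/n$, so $\mathbb{E}_i[\mathrm{bias}_i(\mathbf{Z})^2] \leq \log(1/\delta)/(2n)$.

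Finally I would pass from the second moment to the first moment by Cauchy–Schwarz (equivalently, Jensen): $\mathbb{E}_i[\mathrm{bias}_i(\mathbf{Z})] \leq \sqrt{\mathbb{E}_i[\mathrm{bias}_i(\mathbf{Z})^2]} \leq \sqrt{\log(1/\delta)/(2n)} \leq \sqrt{\log(1/\delta)/n}$, which is the claimed bound (with room to spare on the constant).

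I don't expect a real obstacle here — the argument is a short composition of standard facts. The only point requiring mild care is the precise constant in the quadratic bound $1 - h(1/2+\beta) \geq 2\beta^2$ and making sure it is valid on the whole range $\beta \in [0, 1/2]$ rather than just near $0$; this is where I would be most careful, but since the claim only asks for the clean bound $\sqrt{\log(1/\delta)/n}$ (not the sharper $\sqrt{\log(1/\delta)/(2n)}$), any constant $c \geq 1$ in $1 - h(1/2+\beta) \geq c\beta^2$ suffices, and such a constant is easy to certify. If one wants to avoid the entropy-deficit estimate entirely, an alternative is to bound $\mathrm{bias}_i(\mathbf{Z})$ via $\mathrm{KL}(\mathbf{Z}_i \,\|\, \mathbf{U}_i)$ and use that $\sum_i \mathrm{KL}(\mathbf{Z}_i\,\|\,\mathbf{U}_i) \leq \mathrm{KL}(\mathbf{Z}\,\|\,\mathbf{U}) \leq \log(1/\delta)$ by the chain rule and density, then apply Pinsker per coordinate followed by Cauchy–Schwarz; this is essentially the same proof repackaged.
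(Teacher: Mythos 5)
Your proof is correct and follows essentially the same route as the paper: density implies a min-entropy lower bound, $H_\infty \le H$ and subadditivity give an average per-bit entropy lower bound, a quadratic estimate on $1-h$ converts entropy deficit to squared bias, and an averaging step finishes. The one place you are more careful than the paper is the final aggregation: you make the Cauchy--Schwarz/Jensen step from $\mathbb{E}_i[\mathrm{bias}_i^2]$ to $\mathbb{E}_i[\mathrm{bias}_i]$ explicit, whereas the paper's last two lines (which also have a typo, $\log(1/\delta)$ for $\log(1/\delta)/n$) elide it.
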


\begin{proof}
	As $\delta$-density is equivalent to $n - \log(1/\delta)$ min-entropy,
	\[
		n - \log(1/\delta) 
		= H_\infty(\bf{Z}) \leq H(\bf{Z}) \leq \sum_{i \in [n]} H(\bf{Z}_i),
	\]
	by subadditivity of entropy.
	The entropy of $\bf{Z}_i$'s bits, therefore, is at least $1 - \log(1/\delta)/n$ on average.
	Taking the Taylor series, we can approximate the binary entropy function
	$h(p)$ around $1/2$
	by a quadratic function as $h(1/2 + \eps) \leq 1 - (2 / \ln 2) \eps^2$.
	Comparing this bound with the average, we get
	\[
		1 - \log(1/\delta) \leq 1 - (2 / \ln 2) \eps^2,
	\]
	meaning $\eps \leq \sqrt{(\ln 2 / 2) \cdot (\log(1/\delta) / n)} \leq 
	\sqrt{\log(1/\delta)/n}$.
\end{proof}

\subsection{Random variables lacking computational entropy}

It follows directly from \Cref{lemma:chang} that 
if $\bias_i(\bf{Z}_i)$ exceeds $\eps + \sqrt{\log(1/\delta)}/n$
for every $i$, then $\bf{Z}$ does not have a $\delta$-dense $\eps$-model with respect 
to the projections $\cB$.

\begin{lemma}
    \label{lemma:no_dense_models}
    Let $\bf{Z}$ be a random variable
    with $\bias_i(\bf{Z}) \leq \gamma$ 

    for every $i \in [n]$.
    Then for any $\delta > 0$ and 
    $\gamma \geq \epsilon + \sqrt{\frac{ \log(1/\delta)}{n}}$,  
    $\bf{Z}$ does not have a $\delta$-dense $\epsilon$-model
    with respect to $\cB$.
\end{lemma}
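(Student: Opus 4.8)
The plan is to deduce this immediately from \Cref{lemma:chang} by contraposition. Suppose toward a contradiction that $\bf{Z}$ \emph{does} have a $\delta$-dense $\eps$-model with respect to $\cB$; call it $\bf{M}$. By definition $\bf{Z}$ is $\eps$-indistinguishable from $\bf{M}$ by every test in $\cB$, and since $\cB$ consists of the projections $z \mapsto z_i$, this says exactly that $|\Pr[\bf{Z}_i = 1] - \Pr[\bf{M}_i = 1]| \leq \eps$ for every $i \in [n]$; equivalently $|\bias_i(\bf{Z}) - \bias_i(\bf{M})| \leq \eps$. (One should be slightly careful that $\bias_i$ is defined via the absolute difference of the two marginal probabilities, i.e. $\bias_i(\bf{Z}) = |\Pr[\bf{Z}_i = 1] - 1/2|$, so the triangle inequality gives $\bias_i(\bf{Z}) \leq \bias_i(\bf{M}) + \eps$.)

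Next I would apply \Cref{lemma:chang} to the $\delta$-dense distribution $\bf{M}$: it yields $\E_i[\bias_i(\bf{M})] \leq \sqrt{\log(1/\delta)/n}$. Combining with the previous paragraph, averaging the bound $\bias_i(\bf{Z}) \leq \bias_i(\bf{M}) + \eps$ over $i$ gives
\[
	\E_i[\bias_i(\bf{Z})] \leq \E_i[\bias_i(\bf{M})] + \eps \leq \eps + \sqrt{\frac{\log(1/\delta)}{n}}.
\]
But by hypothesis $\bias_i(\bf{Z}) \geq \gamma \geq \eps + \sqrt{\log(1/\delta)/n}$ for every $i$ — wait, actually the lemma statement as written says $\bias_i(\bf{Z}) \leq \gamma$, which looks like a typo for $\geq \gamma$ (otherwise the conclusion is vacuous); I will prove it under the intended reading $\bias_i(\bf{Z}) \geq \gamma$. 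Under that reading, $\E_i[\bias_i(\bf{Z})] \geq \gamma \geq \eps + \sqrt{\log(1/\delta)/n}$, and the only way this is consistent with the displayed upper bound is if every inequality is tight; in particular we cannot have strict inequality, and if $\gamma$ is taken to strictly exceed $\eps + \sqrt{\log(1/\delta)/n}$ we get an outright contradiction. Handling the borderline equality case cleanly just requires stating the hypothesis with a strict inequality or noting that the model's indistinguishability is genuinely $\eps$-bounded; either way the contradiction goes through.

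Honestly there is no real obstacle here — the whole content is packaged in \Cref{lemma:chang}, and the only thing to be careful about is the bookkeeping between "$\eps$-indistinguishable by projections" and "marginals differ by at most $\eps$", plus the sign convention on $\bias_i$ and the (apparent) direction typo in the hypothesis. So the proof is essentially: pass from $\bf{Z}$ to its hypothetical model $\bf{M}$, note marginals move by at most $\eps$, invoke Chang's inequality on $\bf{M}$, average, and observe the arithmetic is impossible. I would write it in three or four lines.
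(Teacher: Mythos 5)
Your proof is correct and is exactly the argument the paper intends --- the paper supplies no explicit proof of this lemma, stating only that it follows directly from \Cref{lemma:chang}. The steps you take (pass to the hypothetical $\delta$-dense $\eps$-model $\bf{M}$, use indistinguishability by the projections in $\cB$ to conclude $\bias_i(\bf{Z}) \leq \bias_i(\bf{M}) + \eps$ via the reverse triangle inequality, invoke \Cref{lemma:chang} on $\bf{M}$, and average over $i$) are precisely the intended ones. Two small remarks. First, you are right that the hypothesis should read $\bias_i(\bf{Z}) \geq \gamma$; the prose immediately preceding the lemma says ``exceeds,'' which confirms the typo. Second, the borderline-equality worry you raise is an artifact of the lossy constant in the stated form of \Cref{lemma:chang}: its proof actually derives $\E_i[\bias_i(\bf{M})] \leq \sqrt{(\ln 2/2)\log(1/\delta)/n}$ before relaxing $\sqrt{\ln 2/2} < 1$ to $1$, so for $\delta < 1$ the averaged bound on $\bf{M}$ is strictly below $\sqrt{\log(1/\delta)/n}$ and the contradiction goes through even when $\gamma$ equals $\eps + \sqrt{\log(1/\delta)/n}$ exactly, with no need to strengthen the hypothesis to a strict inequality.
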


This is used for the separation in \Cref{thm:dsopstodm}.
We would also like a necessary condition for being dense in a pseudorandom
set. Towards this end, we note that pseudorandom distributions for even
very simple test classes have mild concentration properties.

\begin{claim}
	\label{claim:concentration}
	Suppose $\cF$ can compute $x_i \oplus x_j$ for every $i, j \in [n]$ and
	let $\bf{Z}$ over $\{0, 1\}^n$ be $\eps$-pseudorandom for $\cF$. 
	Then
	\[
		\Pr[\sum_{i} \bf{Z_i} \leq n/2 - \alpha n] 
		\leq \frac{1}{4\alpha^2 n} + \frac{\eps}{4\alpha^2}
	\]
\end{claim}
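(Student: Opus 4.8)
The plan is to bound the variance of $W = \sum_i \bf{Z}_i$ and apply Chebyshev's inequality, using pseudorandomness only to control the pairwise correlations. First I would write $\operatorname{Var}(W) = \sum_i \operatorname{Var}(\bf{Z}_i) + \sum_{i \neq j} \operatorname{Cov}(\bf{Z}_i, \bf{Z}_j)$. Each $\operatorname{Var}(\bf{Z}_i) \leq 1/4$ trivially, contributing at most $n/4$. For the covariance terms, the key observation is that $\operatorname{Cov}(\bf{Z}_i,\bf{Z}_j)$ can be read off from $\Pr[\bf{Z}_i \oplus \bf{Z}_j = 1]$ together with the individual biases: since $\cF$ computes the parity $x_i \oplus x_j$ (and presumably the single-bit projections, or these follow from parities together with a fixed bit — I would note this mild assumption), pseudorandomness forces $\Pr[\bf{Z}_i \oplus \bf{Z}_j = 1] = 1/2 \pm \eps$ and each marginal bias is at most $\eps$. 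A short calculation then gives $|\operatorname{Cov}(\bf{Z}_i,\bf{Z}_j)| = O(\eps)$, so the off-diagonal sum is $O(\eps n^2)$.

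Combining, $\operatorname{Var}(W) \leq n/4 + O(\eps n^2)$. Chebyshev then yields
\[
	\Pr\big[\,|W - \E[W]| \geq \alpha n\,\big] \leq \frac{\operatorname{Var}(W)}{\alpha^2 n^2} \leq \frac{1}{4\alpha^2 n} + \frac{O(\eps)}{4\alpha^2}.
\]
Since $\E[W] = \sum_i \Pr[\bf{Z}_i = 1]$ is within $\eps n$ of $n/2$ (absorb this slack, or fold it into the statement by taking $\E[W] \le n/2$ w.l.o.g.\ after noting the one-sided event), the event $\{W \leq n/2 - \alpha n\}$ is contained in $\{|W - \E[W]| \geq \alpha n - \eps n\}$; I would just take the constants in the $O(\cdot)$ loose enough — or quantify the parity/projection bound tightly enough, namely $|\operatorname{Cov}(\bf{Z}_i,\bf{Z}_j)| \le \eps$ after the right normalization — to land exactly on the claimed $\tfrac{1}{4\alpha^2 n} + \tfrac{\eps}{4\alpha^2}$.

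The only genuinely delicate point is pinning down the covariance bound with the right constant. Writing $p_i = \Pr[\bf{Z}_i=1]$, $p_{ij} = \Pr[\bf{Z}_i = \bf{Z}_j = 1]$, one has $\operatorname{Cov} = p_{ij} - p_i p_j$, and $\Pr[\bf{Z}_i \oplus \bf{Z}_j = 1] = p_i + p_j - 2p_{ij}$; since that parity probability is $1/2 \pm \eps$ by pseudorandomness, $p_{ij} = \tfrac12(p_i + p_j) - \tfrac14 \pm \tfrac{\eps}{2}$. Centering the marginals ($p_i = 1/2 + \eta_i$ with $|\eta_i| \le \eps$ — using that $\cF$ contains, or can be taken to contain, the projections) makes the $\eta_i\eta_j$ cross-terms vanish to second order and leaves $|\operatorname{Cov}(\bf{Z}_i,\bf{Z}_j)| \le \eps$. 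Everything else is a direct Chebyshev computation; no structural obstacle remains.
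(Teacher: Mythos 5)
Your approach is essentially the paper's — a second-moment bound on $\sum_i \mathbf{Z}_i$ combined with Markov/Chebyshev, with pseudorandomness for parity controlling the cross-terms — but you've introduced a genuine gap by centering at the true mean $\E[W]$ instead of at $n/2$. Centering at $\E[W]$ forces you to (i) bound $|\E[W] - n/2|$ to relate the tail event $\{W \le n/2 - \alpha n\}$ to $\{|W - \E[W]| \ge \alpha n\}$, and (ii) pass from the parity probability to $\operatorname{Cov}(\mathbf{Z}_i,\mathbf{Z}_j) = p_{ij} - p_i p_j$, which requires knowing the marginals $p_i$. Both of these need pseudorandomness against single-bit projections $z \mapsto z_i$, which the claim's hypothesis does not supply (and which cannot be derived from parities: $x_i \oplus x_j$ gives you no information about $\Pr[\mathbf{Z}_i = 1]$ alone). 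You flag this as a ``mild assumption,'' but it is avoidable, and the claim as stated is false to prove with your centering without it.

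The fix is exactly what the paper does: bound $\E\bigl[(W - n/2)^2\bigr]$ directly (equivalently, switch to $\{\pm 1\}$ coordinates and bound $\E[(\sum_i \mathbf{Z}_i)^2]$) and apply Markov. Expanding, $\E\bigl[(W - n/2)^2\bigr] = n/4 + \sum_{i \neq j}\E[(\mathbf{Z}_i - 1/2)(\mathbf{Z}_j - 1/2)]$, and the cross-term $\E[(\mathbf{Z}_i - 1/2)(\mathbf{Z}_j - 1/2)] = \tfrac14\bigl(1 - 2\Pr[\mathbf{Z}_i \oplus \mathbf{Z}_j = 1]\bigr)$ is controlled \emph{purely} by the parity test, no marginals needed. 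Then $\Pr[W \le n/2 - \alpha n] \le \Pr[|W - n/2| \ge \alpha n] \le \E[(W-n/2)^2]/(\alpha n)^2$ unconditionally, with no slack to absorb. So the idea was right; the lesson is that when the hypothesis only gives you second-order (pairwise) information, you should center your second moment at the \emph{ideal} mean $n/2$, not the actual one.
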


\begin{proof}
	We work over $\{\pm 1\}$ instead of $\{0, 1\}$
	to make calculations easier.
	We can compute the second moment as
	\[
		\E[(\sum_i\bf{Z}_i)^2] 
		= \sum_i \E[\bf{Z}_i^2] + \sum_{i \neq j} \E[\bf{Z}_i \bf{Z}_j]
		\leq n + \eps n^2.
	\]
	Applying Markov's inequality to $(\sum_i \bf{Z}_i)^2$, we see that
	\[
		\Pr[|\sum_i \bf{Z}_i| \geq 2\alpha n]
		= \Pr[(\sum_i \bf{Z}_i)^2 \geq (2\alpha n)^2] 
		\leq \E[(\sum_i\bf{Z}_i)^2]/(2\alpha n)^2.
	\]
	We use $2 \alpha n$ because it maps back to $n/2 - \alpha n$
	in $\{0, 1\}$.
	Then the conclusion follows from our second moment calculation
	and converting back to $\{0, 1\}$. 
\end{proof}

The tails of a dense subset can't be too much larger than the original distribution,
by definition of density. This gives us a test for being dense in a pseudorandom set,
which we specialize to $\bf{N}_\alpha$.

\begin{lemma}
	\label{lemma:coin}
	Let $\eps, \delta > 0$ be arbitrary.
	Suppose $\cF$ can compute $x_i \oplus x_j$ for any $i, j \in [n]$
	and $\alpha \geq \sqrt{1/(8\delta) \cdot (1/n + \eps)}$.
	Then $\bf{N}_\alpha$ is not $\delta$-dense in any set which is 
	$\eps$-pseudorandom for $\cF$.
\end{lemma}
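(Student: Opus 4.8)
The plan is to argue by contradiction. Suppose $\bf{N}_\alpha$ were $\delta$-dense in some $\bf{X}$ that is $\eps$-pseudorandom for $\cF$. Unwinding the definition of density, this means $\bf{X} = \delta \bf{N}_\alpha + (1-\delta)\bf{Y}$ for some distribution $\bf{Y}$, and hence $\Pr[\bf{X} = z] \geq \delta\Pr[\bf{N}_\alpha = z]$ for every $z$; in particular $\Pr[\bf{N}_\alpha \in A] \leq \delta^{-1}\Pr[\bf{X} \in A]$ for \emph{every} event $A \subseteq \{0,1\}^n$. The strategy is to exhibit an event $A$ that $\bf{N}_\alpha$ hits with constant probability but that $\bf{X}$ --- being pseudorandom, hence weight-concentrated --- must almost avoid, which then forces $\alpha$ to be small.

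The natural choice is ``the Hamming weight is no larger than its typical value under $\bf{N}_\alpha$'': set $A = \{z : \wt(z) \leq n/2 - \alpha n\}$, so the threshold is exactly $\E[\wt(\bf{N}_\alpha)]$. Since the median of the binomial $\wt(\bf{N}_\alpha)$ is within $1$ of its mean, $\Pr[\bf{N}_\alpha \in A] \geq 1/2$ (shifting the threshold by an additive $O(1)$ if necessary, which is asymptotically harmless). On the other side, $\cF$ computes every parity $x_i \oplus x_j$ and $\bf{X}$ is $\eps$-pseudorandom for $\cF$, so \Cref{claim:concentration} applies and gives $\Pr[\bf{X} \in A] \leq \frac{1}{4\alpha^2 n} + \frac{\eps}{4\alpha^2}$. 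Chaining these with the density inequality yields $1/2 \leq \frac{1}{4\delta\alpha^2}(1/n + \eps)$, i.e. $\alpha^2 = O(\delta^{-1}(1/n + \eps))$, contradicting the lower bound on $\alpha$ assumed in the lemma. Hence no such $\bf{X}$ exists.

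A variant that sidesteps the binomial-tail bookkeeping entirely is to feed the density inequality the nonnegative test $g(z) = (\wt(z) - n/2)^2$ rather than an indicator: this gives $\delta\,\E[g(\bf{N}_\alpha)] \leq \E[g(\bf{X})]$. Here $\E[g(\bf{N}_\alpha)] = \mathrm{Var}(\wt(\bf{N}_\alpha)) + (\alpha n)^2 \geq \alpha^2 n(n-1)$, while the second-moment estimate carried out inside the proof of \Cref{claim:concentration} shows $\E[g(\bf{X})] = \frac14 \E[(\sum_i (1 - 2\bf{X}_i))^2] \leq \frac14(n + \eps n^2)$; rearranging gives the same bound $\alpha^2 = O(\delta^{-1}(1/n+\eps))$ with no case analysis and no appeal to the median.

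The argument is short, so the obstacle is not depth but bookkeeping: one must choose the weight threshold (or, in the second-moment version, track $\mathrm{Var}(\wt(\bf{N}_\alpha))$ and the factor $\frac{n}{n-1}$) carefully enough to optimize the absolute constant multiplying $1/n + \eps$ and land on the precise threshold for $\alpha$ stated in the lemma, and one should dispatch the harmless median-versus-mean correction for small $n$ (again, cleanest via the second-moment version). The conceptual point worth isolating is that the proof uses only the very weakest form of pseudorandomness --- fooling the pairwise parities, which is all a second-moment/Chebyshev argument requires --- so the conclusion holds for \emph{any} $\cF$ containing those tests, irrespective of its computational power.
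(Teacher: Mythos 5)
Your main argument matches the paper's proof: use the threshold event $A=\{z:\wt(z)\le n/2-\alpha n\}$, note that $\Pr[\mathbf{N}_\alpha\in A]\approx 1/2$, apply the density inequality $\Pr[\mathbf{N}_\alpha\in A]\le\delta^{-1}\Pr[\mathbf{X}\in A]$, and contradict \Cref{claim:concentration}. The second-moment variant you sketch at the end (feeding $g(z)=(\wt(z)-n/2)^2$ directly into the density inequality and reusing the second-moment bound from the proof of \Cref{claim:concentration}) is a clean alternative that the paper does not take and that does avoid the median-versus-mean fussiness you rightly flag, but it is a minor simplification of the same calculation rather than a fundamentally different route.
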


\begin{proof}
	Under $\bf{N}_\alpha$, the volume of the threshold 
	$\bf{1}[\sum_i \bf{Z}_i \leq n/2 - \alpha n]$ is $1/2$.
	Taking \Cref{claim:concentration} in the contrapositive, 
	we reach the desired
	conclusion when 
	\begin{align*}
		1/2 &> \frac{1}{4\delta \alpha^2 n} + \frac{\eps}{4\delta \alpha^2}\\
		\alpha^2 &> \frac{1}{8\delta}(1/n + \eps)\\
	\end{align*}
\end{proof}

\subsection{Random restrictions and the switching lemma}

A restriction over $[n]$ is a function $\rho: [n] \to \{0, 1, *\}$. Indices in $\rho^{-1}(*)$ can be thought of as \textit{unset} and each other index as \textit{set}.  
For another restriction $z$ so that $\rho^{-1}(*) \subseteq z^{-1}(\{0, 1\})$, let $\rho \circ z \in \{0, 1\}^n$ be defined by
\[
    (\rho \circ z)_i = \begin{cases}
        z_i \text{ if $i \in \rho^{-1}(*)$,}\\
        \rho_i \text{ otherwise.}
    \end{cases}
\]
Define the restricted function $f|_{\rho} : \{0, 1\}^{\rho^{-1}(*)} \to \{0, 1\}$ over $\rho$'s unset indices by
\[
    f|_{\rho}(z) = f(\rho \circ z).
\]

Let $R_p$ be the distribution on restrictions over $[n]$ obtained by setting
$\rho(i) = *$ independently with probability $p$, and then setting
each bit not assigned to $*$ a random bit.
The switching lemma we use is due to Rossman \cite{rossman2017entropy}, 
building on a long line of work \cite{ajtai1985deterministic, haastad1987computational, haastad2014correlation, impagliazzo2012satisfiability}:

\begin{thm}[Rossman \cite{rossman2017entropy}]
    \label{thm:switching_lemma}
	Suppose $f \in \cC(S, d)$.
	Then 
	\[
		\Pr_{\bs{\rho} \sim R_p}[DT(f|_{\bs{\rho}}) \geq k] \leq (p \cdot O(\log S)^{d-1})^k
	\]
\end{thm}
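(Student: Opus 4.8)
This is H\aa stad's switching lemma applied iteratively over the $d$ layers of the circuit, so the plan is to prove it in two pieces: a single-switch statement for bounded-width DNFs (and, dually, CNFs), and then a layer-by-layer collapse of the whole circuit into a decision tree.

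For the single-switch step, let $F=\bigvee_j T_j$ be a $w$-DNF, meaning each term $T_j$ has at most $w$ literals. The claim to establish is $\Pr_{\rho\sim R_q}[DT(F|_\rho)\ge t]\le (O(qw))^t$. I would prove this by Razborov's encoding argument: first define the \emph{canonical decision tree} of $F|_\rho$ by scanning $T_1,T_2,\dots$ in order and, at each node, querying the still-live variables of the first term not yet forced; a root-to-leaf path of length $\ge t$ then witnesses a ``bad'' restriction. One then exhibits an injection from the set of bad restrictions with exactly $m$ stars into the set of restrictions with $m-t$ stars, together with $O(t\log w)$ bits of advice recording, inside each term consulted along the bad path, which of its variables were queried and how they were set. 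Since the number of restrictions with $m-t$ stars is smaller than the number with $m$ stars by a factor roughly $(q/(1-q))^{t}$, a counting argument yields $(O(qw))^t$. Rossman's contribution is to recast this injection as a prefix-free code / entropy inequality, which both tightens the constants and streamlines the iteration below; I would invoke that framing rather than reprove it from scratch.

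For the collapse I would use the standard composition property of $R_p$: for $p=p_1\cdots p_{d-1}$, sampling $\rho\sim R_p$ is distributed as the composition of independent restrictions $R_{p_1},\dots,R_{p_{d-1}}$, and I would take each $p_i=c/\log S$. Assume without loss of generality that the bottom layer of $C\in\cC(S,d)$ consists of $\vee$-gates (the dual case is symmetric). After applying $R_{p_1}$, switch each depth-$2$ subcircuit --- a DNF in the surviving variables --- to a decision tree of depth $O(\log S)$ via the single-switch step; a depth-$t$ decision tree is simultaneously a $t$-DNF and a $t$-CNF, so rewrite it as a CNF and merge it into the $\wedge$-layer above it, lowering the depth by one. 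Iterating $d-1$ times leaves a single decision tree, and the width fed to the switching lemma is $O(\log S)$ at every round.

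The hard part is the accounting in this iteration. A naive union bound over the $\le S$ subcircuits in a layer loses a factor of $S$ per layer, which gives the clean bound only when $k\ge\Omega(\log S)$; to obtain $(p\cdot O(\log S)^{d-1})^k$ for all $k$ one must switch all subcircuits of a layer \emph{simultaneously} against a single common decision tree. This is the multi-switching lemma (H\aa stad; Impagliazzo--Matthews--Paturi; Rossman), and it is precisely where the entropy / prefix-code formulation pays off, since the ``common tree'' event admits one encoding argument rather than $S$ separate ones. With the multi-switching lemma in hand, the final-round estimate $\Pr_{\rho\sim R_p}[DT(C|_\rho)\ge k]\le (p\cdot O(\log S)^{d-1})^k$ falls out, the constants accumulated over the $d-1$ rounds being absorbed into the $O(\cdot)$.
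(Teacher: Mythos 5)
The paper does not prove this statement---Theorem~\ref{thm:switching_lemma} is imported verbatim from Rossman~\cite{rossman2017entropy} and used as a black box---so there is no in-paper proof to compare your sketch against.

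Judged on its own terms, your outline hits the right landmarks: canonical decision tree and Razborov's encoding for the single-switch step, the composition $R_p = R_{p_1}\circ\cdots\circ R_{p_{d-1}}$ for the layer-by-layer collapse, and, crucially, the observation that a naive union bound over the $\le S$ subcircuits per layer gives only $S\cdot 2^{-\Omega(k)}$, which is vacuous for $k = o(\log S)$, so one must invoke a multi-switching lemma (or Rossman's entropy/prefix-code reformulation of it) to switch a whole layer against one common tree. That is indeed the key difficulty, and you identify it correctly. Two places the sketch glosses over that a real proof must address: (i) the initial bottom fan-in is up to $S$, not $O(\log S)$, so the width-$O(\log S)$ invariant you feed to the switching lemma only holds from the second round onward --- the first round has to be handled separately (e.g.\ by treating the input literals as width-$1$ clauses, which also fixes the parameter bookkeeping so you still land on $(\log S)^{d-1}$ rather than $(\log S)^d$); and (ii) the claim that ``the final-round estimate falls out'' conceals the real accounting, namely combining the $d-1$ rounds' failure events into a single bound that remains \emph{multiplicative} in $k$ rather than picking up an additive $\exp(-\Omega(\log S))$ slack term from the earlier rounds --- this is precisely what the entropy formulation buys you and is where most of the work lies. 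A minor slip: with the bottom layer being $\vee$-gates, the depth-$2$ subcircuits are CNFs, not DNFs; harmless by duality but worth keeping straight in the write-up.
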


By considering a random restriction $\bs{\rho} \sim R_p$ over $[n]$ and a random variable $\bf{Z}$ over $\{0, 1\}^{n}$,
the definition of a restricted function implies that
\[
    \E[f(\bs{\rho} \circ \bf{Z})] = \E[f|_{\bs{\rho}}(\bf{Z})].
\]

We make crucial use of two simple corollaries of the switching lemma,
which allow us to reason about distinguishability for
$\AC^0$ circuits in terms of distinguishability for short decision
trees. 

\begin{lemma}
    \label{lemma:first_sl_lemma}
    Suppose $f \in \cC(S, d)$. Then there is a distribution over depth $k$ decision trees so that
    \[
        |\E[f(\bs{\rho} \circ \bf{Z})] - \E[h_{\bs{\rho}}(\bf{Z})]| \leq (p \cdot O(\log S)^{d-1})^k
    \]
\end{lemma}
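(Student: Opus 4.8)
The plan is to apply the switching lemma (\Cref{thm:switching_lemma}) restriction-by-restriction and absorb the rare ``bad'' restrictions into the error term. First I would fix, for each restriction $\rho$ in the support of $R_p$, a single decision tree $h_\rho$ of depth at most $k$: if $DT(f|_\rho) < k$, set $h_\rho$ to be a shortest decision tree computing $f|_\rho$ (which then has depth at most $k$, so it is a legitimate depth-$k$ decision tree); otherwise set $h_\rho$ to be the constant-$0$ tree (any fixed tree of depth at most $k$ would do). The claimed ``distribution over depth-$k$ decision trees'' is then simply the law of $h_{\bs\rho}$ for $\bs\rho \sim R_p$, and $h_{\bs\rho}(\bf Z)$ is understood — exactly as in the identity $\E[f(\bs\rho \circ \bf Z)] = \E[f|_{\bs\rho}(\bf Z)]$ recorded just before the lemma — to mean $h_{\bs\rho}$ evaluated on the coordinates of $\bf Z$ lying in $\bs\rho^{-1}(*)$.

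Next I would compute. Using $\E[f(\bs\rho \circ \bf Z)] = \E[f|_{\bs\rho}(\bf Z)]$ and linearity of expectation, the quantity to bound is $|\E_{\bs\rho, \bf Z}[\,f|_{\bs\rho}(\bf Z) - h_{\bs\rho}(\bf Z)\,]|$. By construction of $h_\rho$, the integrand vanishes identically on every restriction $\bs\rho$ with $DT(f|_{\bs\rho}) < k$, and on the remaining restrictions it is bounded in absolute value by $1$, since both $f|_{\bs\rho}$ and $h_{\bs\rho}$ are $\{0,1\}$-valued. Hence the difference is at most $\Pr_{\bs\rho \sim R_p}[\,DT(f|_{\bs\rho}) \geq k\,]$, which is at most $(p \cdot O(\log S)^{d-1})^k$ by \Cref{thm:switching_lemma}. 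This is exactly the claimed bound.

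There is no genuine obstacle here: the lemma is essentially a repackaging of the switching lemma into a form tailored to reducing $\AC^0$-distinguishing advantage to decision-tree-distinguishing advantage, and once the trees $h_\rho$ are named the argument is a two-line conditioning computation. The only points needing a little care are purely bookkeeping — that a shortest decision tree for $f|_\rho$ when $DT(f|_\rho) < k$ indeed has depth at most $k$, and the domain-matching convention for evaluating $h_{\bs\rho}$ on $\bf Z$ — and both are inherited from the setup already in place before the lemma statement.
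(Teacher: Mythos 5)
Your proof is correct and takes essentially the same approach as the paper: apply the switching lemma to argue that $f|_{\bs\rho}$ is usually a shallow decision tree, and absorb the failure probability into the additive error since both functions are Boolean. The only cosmetic difference is that you fill in a constant tree on the bad restrictions whereas the paper instead defines $h_{\bs\rho}$ by conditioning on the good event; both yield the same bound and neither is an essential departure.
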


\begin{proof}
    Let $g_{\bs{\rho}}$ denote the optimal decision tree for $f|_{\bs{\rho}}$. Let $E$ denote the event that $g_{\bs{\rho}}$ has depth at most $k$ and $\Pr[E] = 1 - q$. Let
    $h_{\bs{\rho}}$ be the distribution over depth at most $k$ 
    decision trees obtained by sampling $g_{\bs{\rho}}$ conditioned on $E$. Then
    \begin{align*}
        \E[f(\bs{\rho} \circ \bf{Z})] &= \E[f|_{\bs{\rho}}(\bf{Z})]\\
        &= (1 - q) \E[g_{\bs{\rho}}(\bf{Z}) | E ] + q \E[g_{\bs{\rho}}(\bf{Z}) | \neg E ]\\
        &= (1 - q) \E[h_{\bs{\rho}}(\bf{Z})] + q \E[g_{\bs{\rho}}(\bf{Z}) | \neg E ]\\
        &= \E[h_{\bs{\rho}}(\bf{Z})] - q(\E[h_{\bs{\rho}}(\bf{Z})] - \E[g_{\bs{\rho}}(\bf{Z}) | \neg E ]).
    \end{align*}
    The right-hand term is bounded in absolute value by $q$ because $f$ is Boolean.  By \Cref{thm:switching_lemma}, $q \leq (p \cdot O(\log S)^{d-1})^k$. 
\end{proof}

\begin{lemma}
    \label{lemma:second_sl_lemma}
    Suppose $f \in \cC(S, d)$. Then there's a depth $k$
    decision tree $h$ so that
    \[
        |\E[f(\bf{U})] - \E[f(\bs{\rho} \circ \bf{Z})]| \leq  |\E[f'(\bf{U})] - \E[f'(\bf{Z})]| + (p \cdot O(\log S)^{d-1})^k
    \]
\end{lemma}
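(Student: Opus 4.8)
The plan is to run the argument of \Cref{lemma:first_sl_lemma} but track the uniform distribution $\bf{U}$ alongside $\bf{Z}$ through the \emph{same} random restriction. The observation that makes this work is that an $R_p$-restriction composed with fresh uniform bits on its unset coordinates is exactly uniform: since $\bs{\rho} \sim R_p$ assigns every set coordinate an independent uniform bit, the string $\bs{\rho} \circ \bf{U}$ is distributed as $\bf{U}$, so $\E[f(\bf{U})] = \E[f(\bs{\rho} \circ \bf{U})] = \E_{\bs{\rho}}[\E[f|_{\bs{\rho}}(\bf{U})]]$, where the inner expectation is over a uniform assignment to $\bs{\rho}^{-1}(*)$.

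Next I would reuse, verbatim, the distribution $h$ over depth-$\leq k$ decision trees built in the proof of \Cref{lemma:first_sl_lemma}: sample $\bs{\rho} \sim R_p$ conditioned on the event $E$ that $DT(f|_{\bs{\rho}}) \le k$, and output the optimal decision tree $g_{\bs{\rho}}$ for $f|_{\bs{\rho}}$. The point to emphasize is that this construction depends only on $f$ and $R_p$ and \emph{not} on the distribution being fed in, so the decomposition from that proof, namely $\E[f|_{\bs{\rho}}(\bf{X})] = \E[h(\bf{X})] - q\big(\E[h(\bf{X})] - \E[g_{\bs{\rho}}(\bf{X}) \mid \neg E]\big)$ with $q = \Pr[\neg E] \le (p \cdot O(\log S)^{d-1})^k$, holds simultaneously for $\bf{X} = \bf{Z}$ and $\bf{X} = \bf{U}$ with the same $h$ and the same $q$. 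Since $f$ is Boolean, this yields $|\E[f(\bs{\rho} \circ \bf{X})] - \E[h(\bf{X})]| \le q$ for both choices of $\bf{X}$; combining with the first paragraph and the triangle inequality gives
\[
    |\E[f(\bf{U})] - \E[f(\bs{\rho} \circ \bf{Z})]| \le |\E[h(\bf{U})] - \E[h(\bf{Z})]| + 2(p \cdot O(\log S)^{d-1})^k.
\]

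Finally I would pass from the distribution $h$ to a single decision tree by averaging: $|\E[h(\bf{U})] - \E[h(\bf{Z})]|$ is the absolute value of the expectation, over $h' \sim h$, of the quantity $\E[h'(\bf{U})] - \E[h'(\bf{Z})]$, hence it is at most $\max_{h'} |\E[h'(\bf{U})] - \E[h'(\bf{Z})]|$, and this maximum is attained by some fixed depth-$\leq k$ decision tree $f'$ in the support of $h$; this $f'$ is the tree in the statement. The leftover factor of $2$ is absorbed by doubling the hidden constant in $O(\log S)^{d-1}$ (valid since $k \ge 1$, as $2a^k \le (2a)^k$). I do not expect a genuine obstacle here; the one step that needs care is making explicit that the decision-tree family of \Cref{lemma:first_sl_lemma} is independent of the input distribution — without that observation one would be left comparing two unrelated trees for $\bf{U}$ and for $\bf{Z}$, and the triangle inequality would not close.
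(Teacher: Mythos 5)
Your proposal is correct and follows essentially the same route as the paper: apply \Cref{lemma:first_sl_lemma} to both $\bf{U}$ and $\bf{Z}$ with the same distribution $h_{\bs{\rho}}$ over restricted decision trees, combine via the triangle inequality to get a $2q$ error term, then average to extract a single tree. The two points you call out explicitly --- that $\bs{\rho}\circ\bf{U}\sim\bf{U}$, and that the distribution $h_{\bs{\rho}}$ is independent of the input distribution so a common comparison tree exists --- are exactly what the paper's proof relies on but leaves implicit, and your handling of the factor of $2$ by absorbing it into the $O(\cdot)$ constant matches what the paper does silently.
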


\begin{proof}
    \Cref{lemma:first_sl_lemma} gives us the following upper bound.
    \begin{align*}
        |\E[f(\bf{U})] - \E[f(\bs{\rho} \circ \bf{Z})]| 
        &\leq |(\E[h_{\bs{\rho}}(\bf{U})] \pm q) - (\E[h_{\bs{\rho}}(\bf{Z})] \pm q)| &\text{(\Cref{lemma:first_sl_lemma})}\\
        &\leq |\E[h_{\bs{\rho}}(\bf{U})] - \E[h_{\bs{\rho}}(\bf{Z})]| + 2q &\text{(triangle inequality)}
    \end{align*}
    We can continue to upper bound the right-hand term by 
    \begin{align*}
         |\E[h_{\bs{\rho}}(\bf{U})] - \E[h_{\bs{\rho}}(\bf{Z})]| &=  |\E_{\bs{\rho}}[\E[h_{\rho}(\bf{U})] - \E[h_{\rho}(\bf{Z})]]|\\
         &\leq \E_{\bs{\rho}}[|\E[h_{\rho}(\bf{U})] - \E[h_{\rho}(\bf{Z})]|] &\text{(triangle inequality)}\\
         &\leq |\E[h(\bf{U})] - \E[h(\bf{Z})]|
    \end{align*}
    where the last line holds for some $h$ in the support of 
    $h_{\bs{\rho}}$ by averaging. 
\end{proof}

\section{Proof of \Cref{thm:dsopstodm}}

We start by reducing the problem of constructing a pseudorandom
$\bf{Z}$ for $\AC^0$ to constructing a pseudorandom $\bf{Z}$
for small-depth decision trees. This can be immediately achieved
by applying \Cref{lemma:second_sl_lemma}.

\begin{claim}
\label{claim:reduction}
Let $p \in [0, 1]$ be arbitrary and suppose $\bf{Z}$ is a random variable over $\{0, 1\}^{n}$ which is $\epsilon$-pseudorandom for depth-$k$ decision trees. Then for $\bs{\rho} \sim R_p$, $\bs{\rho} \circ \bf{Z}$ is $\epsilon'$-pseudorandom for $\cC(S, d)$
for  
\[
    \epsilon' = \epsilon + (p \cdot O(\log S)^{d-1})^k
\]
\end{claim}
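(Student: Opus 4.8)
The plan is to observe that this is essentially immediate from \Cref{lemma:second_sl_lemma}, which already carries out all of the real work. First I would fix an arbitrary test $f \in \cC(S,d)$ and apply \Cref{lemma:second_sl_lemma} to it with the parameters $p$ and $k$ from the hypothesis. This produces a depth-$k$ decision tree $f'$ (depending on $f$) with
\[
	|\E[f(\bf{U})] - \E[f(\bs{\rho} \circ \bf{Z})]| \leq |\E[f'(\bf{U})] - \E[f'(\bf{Z})]| + (p \cdot O(\log S)^{d-1})^k .
\]

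Next I would invoke the hypothesis on $\bf{Z}$: since $f'$ is a depth-$k$ decision tree and $\bf{Z}$ is $\epsilon$-pseudorandom for the class of depth-$k$ decision trees, the first term on the right is at most $\epsilon$. Combining the two bounds gives $|\E[f(\bf{U})] - \E[f(\bs{\rho} \circ \bf{Z})]| \leq \epsilon + (p \cdot O(\log S)^{d-1})^k = \epsilon'$. Since $f$ was an arbitrary element of $\cC(S,d)$, taking the maximum over $f \in \cC(S,d)$ shows that $\bs{\rho} \circ \bf{Z}$ is $\epsilon'$-pseudorandom for $\cC(S,d)$, which is the claim.

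I do not expect any real obstacle here beyond what \Cref{lemma:second_sl_lemma} already absorbs; the one point worth keeping in mind is that the quantity being controlled compares $\bs{\rho}\circ\bf{Z}$ against the \emph{true} uniform distribution $\bf{U}$ over all $n$ bits, rather than against $\bs{\rho}\circ\bf{U}$. This identification is harmless because $\bs{\rho}\circ\bf{U}$ is distributed exactly as $\bf{U}$ (a restriction $\bs{\rho}\sim R_p$ fills its set coordinates with fresh uniform bits, and $\bf{U}$ fills the unset ones), so no additional error is introduced there; this is precisely why \Cref{lemma:second_sl_lemma} can be stated with $\E[f(\bf{U})]$ on the left. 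Everything else reduces to the triangle inequality and the switching-lemma tail bound, both of which are already packaged inside that lemma.
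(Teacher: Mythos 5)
Your proposal is correct and matches the paper's approach: the paper likewise treats this as an immediate consequence of \Cref{lemma:second_sl_lemma}, applying it to an arbitrary $f\in\cC(S,d)$ to produce a depth-$k$ decision tree, invoking the $\epsilon$-pseudorandomness of $\bf{Z}$ against depth-$k$ decision trees to bound the first term, and folding the switching-lemma tail into $\epsilon'$. Your observation that $\bs{\rho}\circ\bf{U}$ is distributed as $\bf{U}$ is exactly the identification used inside the proof of \Cref{lemma:second_sl_lemma} to justify comparing $\E[f(\bf{U})]$ with $\E[h_{\bs{\rho}}(\bf{U})]$, so nothing is missing.
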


The next lemma constructs a pseudorandom distribution
for depth-$k$ decision trees with each bit having significant bias.
\begin{lemma}
	\label{lemma:kwise}
	For any $k \in \N, \delta > 0$ and $K \geq 1/2\delta$, there is
	a $k$-wise independent random variable $\bf{S}$ over 
	$\{0, 1\}^n$ and a $\delta$-dense subset $\bf{D}$ of 
	$\bf{S}$ with the property that 
	\begin{enumerate}
		\item $\bf{D}$ is $\delta$-dense in $\bf{S}$.
		\item For all $i \in [n]$, $\bias_i(\bf{D}) 
			= \Omega(\sqrt{\log(1/\delta)/8K})$
		\item $\bf{S}$ is $k^2/K$-pseudorandom for depth-$k$ decision trees.
	\end{enumerate}
\end{lemma}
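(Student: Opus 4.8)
The plan is to build $\bf{S}$ by a \emph{random bucketing}. Choose a uniformly random $\bs{\pi}\colon [n]\to[K]$ and a uniformly random $\bf{b}\in\{0,1\}^K$, independently, and set $\bf{S}_i=\bf{b}_{\bs{\pi}(i)}$. For the dense subdistribution, let $A=\{b\in\{0,1\}^K:\wt(b)\le K/2-c\sqrt{K\log(1/\delta)}\}$ for a small absolute constant $c$ to be fixed, and let $\bf{D}$ be $\bf{S}$ conditioned on the event $\{\bf{b}\in A\}$ (keeping the same $\bs{\pi}$). The intuition is that $\bf{b}$ lives in a $K$-dimensional space, so conditioning on an extremal-bias event there costs only a $\delta$ factor in probability while pushing every coordinate's bias up to $\Omega(\sqrt{\log(1/\delta)/K})$; the bucketing then replicates this across all $n$ coordinates while keeping any $k$ of them almost independent (because $k$ random coordinates land in distinct buckets except with probability $\le k^2/K$). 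This biased-bits feature is exactly what makes $\bf{D}$ fail to have a dense model via \Cref{lemma:no_dense_models}, and item~3 is what feeds the switching-lemma reduction \Cref{claim:reduction}.

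For item~1, since $\bf{D}$ is $\bf{S}$ conditioned on $\{\bf{b}\in A\}$, every $z$ satisfies $\Pr[\bf{D}=z]=\Pr[\bf{S}=z,\ \bf{b}\in A]/\Pr[\bf{b}\in A]\le \Pr[\bf{S}=z]/\Pr[\bf{b}\in A]$, so $\mu(z)=\Pr[\bf{b}\in A]\cdot \Pr[\bf{D}=z]/\Pr[\bf{S}=z]\in[0,1]$ is a valid density function exhibiting $\bf{D}$ as $\Pr[\bf{b}\in A]$-dense in $\bf{S}$. Thus item~1 reduces to $\Pr[\bf{b}\in A]=|A|/2^K\ge\delta$, which is a standard lower bound on the lower tail of $\mathrm{Bin}(K,1/2)$: here the hypothesis $K\ge 1/(2\delta)$ is used to guarantee the threshold $K/2-c\sqrt{K\log(1/\delta)}$ lies in the regime where $\Pr[\mathrm{Bin}(K,1/2)\le K/2-m]\ge 2^{-\Theta(m^2/K)}$ is valid, and it pins down $c$. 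For item~2, by symmetry of $A$ under permuting the $K$ coordinates, $\Pr[\bf{b}_j=1\mid \bf{b}\in A]$ is the same for all $j$, hence equals $\E[\wt(\bf{b})\mid \bf{b}\in A]/K\le 1/2-c\sqrt{\log(1/\delta)/K}$ because every $b\in A$ has weight at most $K/2-c\sqrt{K\log(1/\delta)}$. Since $\bs{\pi}$ is independent of $\bf{b}$, conditioning on $\{\bf{b}\in A\}$ does not alter $\bs{\pi}$, so $\Pr[\bf{D}_i=1]=\E_{\bs{\pi}}[\Pr[\bf{b}_{\bs{\pi}(i)}=1\mid \bf{b}\in A]]\le 1/2-c\sqrt{\log(1/\delta)/K}$ for every $i$, giving $\bias_i(\bf{D})\ge c\sqrt{\log(1/\delta)/K}=\Omega(\sqrt{\log(1/\delta)/(8K)})$ for a suitable choice of $c$.

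For item~3, a depth-$k$ decision tree $T$ is a multilinear polynomial of degree at most $k$, so $|\E[T(\bf{S})]-\E[T(\bf{U})]|\le \sum_{1\le|I|\le k}|\widehat{T}(I)|\cdot|\E[\chi_I(\bf{S})]|$. For $|I|=s$ we have $\E[\chi_I(\bf{S})]=\Pr_{\bs{\pi}}[\text{every bucket receives an even number of elements of }I]$, which vanishes for odd $s$ and is at most $(s-1)!!\,K^{-s/2}$ for even $s$ by a union bound over the pairings of $I$. Combining with the decision-tree estimate $\sum_{|I|=s}|\widehat{T}(I)|\le\binom{k}{s}$ and using $\binom{k}{s}(s-1)!!\le k^s/s!!$, the sum telescopes into $\sum_{j\ge1}(k^2/(2K))^j/j! = e^{k^2/(2K)}-1\le k^2/K$ in the relevant regime $k^2\le K$, which is exactly item~3 (and incidentally shows $\bf{S}$ is within $k^2/K$ of being genuinely $k$-wise independent).

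I expect the main work to be the two binomial-tail estimates behind items~1 and~2: choosing the threshold defining $A$ to be simultaneously thin enough that $|A|\ge\delta 2^K$ and thick enough that its coordinates carry bias $\Omega(\sqrt{\log(1/\delta)/K})$. This is precisely the ``extremal bias of a $\delta$-dense set'' phenomenon, and it is the only place the hypothesis $K\ge 1/(2\delta)$ is needed (to keep the threshold nonnegative and in the valid range of the tail bound). By contrast the item~3 bound is a routine Fourier/collision computation; the one subtlety worth noting is that its error is a genuine $\Theta(k^2/K)$ defect from exact $k$-wise independence rather than $0$, consistent with the parameters in the statement.
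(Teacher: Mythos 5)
Your construction is exactly the paper's: $\mathbf{S}$ is obtained by randomly bucketing $[n]$ into $K$ cells and assigning each cell an unbiased bit, and $\mathbf{D}$ is $\mathbf{S}$ conditioned on the bucket string $\mathbf{b}$ having weight below $K/2 - \Theta(\sqrt{K\log(1/\delta)})$. Your treatment of items 1 and 2 is essentially the paper's as well, appealing to the reverse-Chernoff lower bound on the binomial lower tail to certify $\delta$-density; your symmetry argument for item 2 (all coordinates of $\mathbf{b}$ conditioned on $A$ have the same marginal, equal to $\E[\wt(\mathbf{b})\mid A]/K$, then average over $\boldsymbol{\pi}$) makes explicit a step the paper leaves implicit. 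The one place you diverge is item 3. The paper argues combinatorially: evaluate the decision tree on-line, note it queries at most $k$ adaptively chosen indices, condition on no two of them landing in the same bucket, and observe that under this conditioning the queried bits are i.i.d.\ uniform; a union bound over pairs gives collision probability at most $k^2/K$. You instead expand the depth-$k$ decision tree in the Fourier basis, use the $L_1$-level bound $\sum_{|I|=s}|\widehat{T}(I)|\le\binom{k}{s}$, and compute $\E[\chi_I(\mathbf{S})]$ as the probability over $\boldsymbol{\pi}$ that every bucket receives an even number of elements of $I$, which you bound via a union bound over perfect matchings. Both routes yield $k^2/K$ up to constants. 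Your Fourier route is a bit more self-contained and sidesteps the adaptivity subtlety in the paper's ``on-line evaluation'' coupling, at the cost of invoking the decision-tree Fourier level bound; the paper's argument is more elementary but asks the reader to trust that conditioning on no collision among adaptively queried indices is legitimate. Two minor remarks: (i) the $\binom{k}{s}$ level-$L_1$ bound for depth-$k$ decision trees should probably be flagged or justified (it follows by expanding the leaf indicators and using $\sum_\ell 2^{-d(\ell)}\le 1$); (ii) as you noticed, the hypothesis $K\ge 1/(2\delta)$ is what the paper invokes, though the paper's own calculation compares $\log(2K)/2K$ to $\log(1/\delta)/2K$, i.e.\ actually uses $2K\le 1/\delta$ --- there appears to be a sign confusion in the paper's statement of the constraint, and your gloss of it (keeping the threshold in the valid range of the tail bound) is a reasonable reading but not exactly what the paper's arithmetic does.
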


We will use the following standard lower bound on the lower
tail of a binomial distribution:
\begin{claim}[\cite{ash1990information}]
	\label{claim:reverse_chernoff}
	For $0 < \alpha < 1$ and let 
	$\bf{Z}_1, ..., \bf{Z}_K$ be independent unbiased coins ($\{0, 1\}$-valued).
	Then any $\gamma$ with $1/2 - \gamma = r/K$ for some positive integer 
	$r$ satisfies
	\[
		\frac{2^{-K (1 - h(1/2 - \gamma))}}
		{\sqrt{2K}}
		\leq \Pr[\sum_{i \in [K]} \bf{Z}_i \leq K/2 - K\gamma] 
	\]
\end{claim}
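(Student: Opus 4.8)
The plan is to lower bound the lower-tail probability by the mass of a single well-chosen atom. Set $r = K/2 - K\gamma$; the hypothesis $1/2 - \gamma = r/K$ for a positive integer $r$ is exactly what guarantees that $r$ is a genuine integer with $0 \le r \le K$, so that $\{\sum_{i} \bf{Z}_i = r\}$ is a nonempty event contained in $\{\sum_i \bf{Z}_i \le K/2 - K\gamma\}$. Since the $\bf{Z}_i$ are independent unbiased bits, $\Pr[\sum_i \bf{Z}_i = r] = \binom{K}{r} 2^{-K}$, so it suffices to prove $\binom{K}{r} \ge 2^{K h(r/K)}/\sqrt{2K}$ and then substitute $r/K = 1/2 - \gamma$.

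For the binomial estimate I would invoke the lower-bound half of the classical Stirling sandwich for a single binomial coefficient: for every integer $0 \le r \le K$,
\[
\binom{K}{r} \ge \frac{2^{K h(r/K)}}{\sqrt{8K \cdot \tfrac{r}{K}\left(1 - \tfrac{r}{K}\right)}},
\]
which is precisely the form recorded in Ash's \emph{Information Theory} \cite{ash1990information} (it follows from Stirling's formula with explicit error terms, or from comparing $\log\binom{K}{r}$ with an integral of $\log$; I would quote it rather than reprove it). Since $\tfrac{r}{K}(1-\tfrac{r}{K}) \le 1/4$ always, this immediately gives $\binom{K}{r} \ge 2^{K h(r/K)}/\sqrt{2K}$. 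Chaining the two bounds,
\[
\Pr\Big[\sum_{i \in [K]} \bf{Z}_i \le K/2 - K\gamma\Big] \ \ge\ \binom{K}{r} 2^{-K} \ \ge\ \frac{2^{-K(1 - h(r/K))}}{\sqrt{2K}} \ =\ \frac{2^{-K(1 - h(1/2 - \gamma))}}{\sqrt{2K}},
\]
which is the claimed inequality.

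There is no real obstacle here; the only points that need care are bookkeeping ones. One must make sure $r = K/2 - K\gamma$ lands on a legitimate integer in $[0,K]$ --- which is exactly what the divisibility hypothesis $1/2 - \gamma = r/K$ with $r \in \Z_{>0}$ supplies --- and one should note that the stray ``$0 < \alpha < 1$'' in the statement plays no role (it is presumably meant to be $0 < \gamma < 1/2$); all that is actually used about $\gamma$ is $\gamma \ge 0$, so that the single-atom event sits inside the tail event. The genuine content is the lower tail of Stirling's approximation, which I would cite rather than rederive.
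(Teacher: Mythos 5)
Your proof is correct, and it is the standard argument: the paper does not prove \Cref{claim:reverse_chernoff} itself but merely cites \cite{ash1990information}, which is exactly the Stirling-type binomial lower bound you invoke. Bounding the tail by the single atom at $r = K/2 - K\gamma$ and then using $\binom{K}{r} \geq 2^{Kh(r/K)}/\sqrt{8K\,\tfrac{r}{K}(1-\tfrac{r}{K})} \geq 2^{Kh(r/K)}/\sqrt{2K}$ gives the stated bound directly, and your observation that the stray ``$0 < \alpha < 1$'' is a typo (the variable $\alpha$ never appears in the conclusion) is also correct.
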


\begin{proof}[Proof of \Cref{lemma:kwise}]
	We sample $\bf{S}$ in two stages. First, randomly partition $[n]$ 
	into $K$ parts $\bf{A}_1, ..., \bf{A}_K$ for $K > k^2$. Second, assign to each
	$A_i$ a uniformly random bit $\bf{b}_i$.

	Let $\bf{D}$ be $\bf{S}$ conditioned on $\bf{b} = (\bf{b}_1, ..., \bf{b}_K)$
	having weight less than $K/2 - \sqrt{K \log(1/\delta)/8}$. 
	Since the $\bf{b}_i$'s are unbiased random bits, we
	can apply \Cref{claim:reverse_chernoff} to lower 
	bound $\bf{D}$'s density: for any $\gamma$,
	\[
		\Pr[ \sum_{i \in [k]} \bf{b}_i 
		\leq \gamma K]
		\geq \frac{2^{-K h(1/2 - \gamma)}}{\sqrt{2 K}}.
	\]
	This is at least $\delta$ when
	\begin{align*}
		\frac{2^{-K(1 - h(1/2 - \gamma))}}{\sqrt{2 K}} 
		&\geq \delta\\
		1 - h(1/2 - \gamma) &\geq \log(1/\delta)/K - \log(2K)/2K\\
		4\gamma^2 &\geq \log(1/\delta)/K - \log(2K)/2K 
	\end{align*}
	with the upper bound in the last line following from 
	$h(1/2 - \gamma) \geq 1 - 4\gamma^2$. Hence, if the set
	of strings with weight at most $K/2 - \gamma K$ 
	is $\delta$-dense, 
	we have $\gamma \geq \frac{1}{2}\sqrt{\log(1/\delta)/K - \log(2K)/2K}$. 
	$\log(2K)/2K$ is at most $\log(1/\delta)/2K$ when $2K \leq 1/\delta$,
	in which case $\gamma \geq \sqrt{\log(1/\delta)/8K}$.
	In particular, this lower bounds the bias of $\bf{D}$'s bits.

	To see why it's $k^2/K$-pseudorandom for depth-$k$ decision trees, consider
	a depth-$k$ decision tree $T$. Over $\bf{U}$, we can
	imagine evaluting $T$ `on-line' as follows: 
	whenever $T$ queries the $i$th bit, determine the value of $z_i$
	by flipping an unbiased coin. Over $\bf{S}$, we can imagine
	evaluating $T$ similarly, where we determine the bucket $A_j$
	that $i$ lives in and the value $b_j$ of that bucket. 

	By conditioning $\bf{S}$ on \textit{not} placing two distinct indices $i, j$
	in the same bucket --- call this conditioned random variable $\bf{S}'$ --- then
	$T$ doesn't have \textit{any} distinguish advantage over $\bf{S}'$,
	as all of the bits it queries are independent and uniform.
	By a union bound, $\bf{S}$ places two distinct 
	indices in the same bucket with probability at most $k^2/K$.
	$T$'s distinguishing advantage is therefore at most $k^2/K$.
\end{proof}

In principle, we could have used other pseudorandom distributions for
decision trees such as the $\eps$-almost $k$-wise independent
distributions from \cite{alon1992simple}. The construction here
is used to obtain better dependence on the parameters of interest.
We will also need a claim to 
expresses the bias of the bits in $\bs{\rho} \circ \bf{Z}$.
The proof can be found in the appendix. 

\begin{restatable}{claim}{biasedbits}
\label{claim:bias_of_rhoZ}
    Fix $p \in [0, 1]$ and a random variable $\bf{Z}$. Let $E$ be an event
    which is independent from $\bf{\rho}$ (in that the conditional distribution
    of $\bf{\rho}$ is identical to the unconditioned distribution). Then
    \[
        \Pr[(\bs{\rho} \circ \bf{Z})_i = 1 | E] = p \Pr[\bf{Z}_i = 1 | E] + (1-p)/2
    \]
\end{restatable}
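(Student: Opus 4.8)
The plan is to prove the identity by conditioning on whether the coordinate $i$ is left unset by the random restriction $\bs{\rho}$. Two independence facts do all the work: $\bs{\rho} \sim R_p$ is drawn independently of $\bf{Z}$ (this is part of the definition of $R_p$ in the relevant setting), and $E$ is independent of $\bs{\rho}$ by hypothesis. Consequently, even after conditioning on $E$, the event $\{i \in \bs{\rho}^{-1}(*)\}$ still has probability exactly $p$, and the bit that $R_p$ assigns to $i$ when $i$ is set is still an independent fair coin. I would state these two observations explicitly at the outset, since the whole point of the hypothesis ``$E$ independent of $\bs{\rho}$'' is to make them available.

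Given this, I would split according to whether $i$ is unset:
\begin{align*}
\Pr[(\bs{\rho}\circ\bf{Z})_i = 1 \mid E]
&= \Pr[i \in \bs{\rho}^{-1}(*) \mid E]\cdot \Pr[(\bs{\rho}\circ\bf{Z})_i = 1 \mid E,\ i \in \bs{\rho}^{-1}(*)]\\
&\quad + \Pr[i \notin \bs{\rho}^{-1}(*) \mid E]\cdot \Pr[(\bs{\rho}\circ\bf{Z})_i = 1 \mid E,\ i \notin \bs{\rho}^{-1}(*)].
\end{align*}
In the first term, when $i$ is unset we have $(\bs{\rho}\circ\bf{Z})_i = \bf{Z}_i$ directly from the definition of $\rho\circ z$, and since $\bs{\rho}$ is independent of the pair $(\bf{Z},E)$ the extra conditioning on $i \in \bs{\rho}^{-1}(*)$ does not change the law of $\bf{Z}_i$ given $E$; hence that conditional probability is $\Pr[\bf{Z}_i = 1 \mid E]$. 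In the second term, when $i$ is set we have $(\bs{\rho}\circ\bf{Z})_i = \bs{\rho}_i$, which is the fresh uniform bit assigned by $R_p$ and equals $1$ with probability $1/2$ regardless of $E$. Substituting $\Pr[i \in \bs{\rho}^{-1}(*)\mid E] = p$ and $\Pr[i \notin \bs{\rho}^{-1}(*)\mid E] = 1-p$ yields $p\,\Pr[\bf{Z}_i = 1 \mid E] + (1-p)/2$.

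There is no real obstacle here — the claim is essentially a definitional unwinding — so the only thing requiring care is the bookkeeping of the independence assumptions: making sure the marginal probability $p$ of leaving $i$ unset, and the fairness of the bit placed on $i$ when it is set, both survive conditioning on $E$. Once those are isolated as a preliminary remark, the computation above is a single case split and the identity follows.
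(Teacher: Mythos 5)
Your proposal is correct and takes essentially the same approach as the paper: both condition on the event $R_i = \{i \in \bs{\rho}^{-1}(*)\}$, use independence of $\bs{\rho}$ from $(\bf{Z}, E)$ to keep $\Pr[R_i] = p$ and the fairness of $\bs{\rho}_i$ intact after conditioning on $E$, and then combine the two cases. Your write-up is a slightly more explicit unpacking of the same one-line case split.
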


\Cref{thm:dsopstodm}, which we restate here, is obtained by an appropriate setting of parameters.

\dsopstodm*

\begin{proof}
	Let $n = \log(1/\delta)/\eps^2$, $k = \log(2/\eps')$ and 
	$K = (2 k^2)/\eps'$. We also need $K \geq 1/2\delta$ by the
	restriction in \Cref{lemma:kwise}, which explaines the restriction
	$8 \delta k^2 \geq \eps'$, simplified by using $8 \delta \geq \eps'$
	(a stronger restriction) instead.
	Let $\bf{S}$ and $\bf{D}$ be the random variables from \Cref{lemma:kwise}.
	By \Cref{claim:bias_of_rhoZ}, the bias of $\bs{\rho} \circ \bf{S}$ 
	(where $\bs{\rho} \sim R_p$) is $p \sqrt{\log(1/\delta)/8K}$.
	By \Cref{claim:reduction} and \Cref{lemma:kwise}, $\bs{\rho} \circ \bf{S}$ is 
	$\eps' = k^2/K + (p O(\log S)^{d-1})^k$ pseudorandom. 
	We can also ensure that $\bs{\rho} \circ \bf{S}$
	does \textit{not} have a $\delta$-dense $\eps$-model when 
	$p \sqrt{\log(1/\delta)/8K} \geq \eps + \sqrt{\log(1/\delta)/n}$,
	by \Cref{lemma:no_dense_models}.
	
	By substituting,
	$p \geq 2\sqrt{K/n} = 2\sqrt{(k \eps)^2 / \eps' \cdot \log(1/\delta)}$.
	In comparison, $\eps' \geq k^2/K + (p O(\log S)^{d-1})^k$.
	Recalling that $k^2/K = \eps'/2$, we get that 
	\begin{align*}
		\eps'/2 &\geq (2\sqrt{K/n} O(\log S)^{d-1})^k\\
		\frac{\sqrt{n}}{2 \sqrt{K}} (\eps'/2)^{1/k}
		      &\geq O(\log S)^{d-1}\\
		      \frac{ \sqrt{\log 1/\delta}}{\eps} \cdot 
		      \frac{\sqrt{\eps'}}{2 \sqrt{2}k} \cdot
			(\eps'/2)^{1/k}
		      &\geq O(\log S)^{d-1}.\\
		      \frac{ \sqrt{\log 1/\delta}}{\eps} \cdot 
		      \frac{\sqrt{\eps'}}{\sqrt{32}\log(1/\eps')}
		      &\geq O(\log S)^{d-1}.\\
	\end{align*}
	The claim follows by solving for $S$. 	
\end{proof}

\section{Proof of \Cref{thm:pdtodsops}}

\Cref{thm:pdtodsops} follows by combining \Cref{lemma:coin}
and the following lemma:

\begin{lemma}
\label{lemma:thm2_pseudodensity}
    $\bf{N}_{\alpha}$ has $(\eps, \delta)$-pseudodensity for $\cC(S, d)$ for $\eps = (p \cdot O(\log S)^{d-1})^k$
    and $\delta = e^{- \alpha k / p}$. 
\end{lemma}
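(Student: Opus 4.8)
The plan is to follow the Ajtai--Wigderson-style decomposition already sketched in the technical overview: sample $\bf{N}_\alpha$ by first applying a random restriction that leaves a $p$-fraction of coordinates unset, then fill in the unset coordinates with an independent biased coin. Concretely, write $\bf{N}_\alpha$ as $\bs{\rho} \circ \bf{N}_{\alpha/p}$ where $\bs{\rho} \sim R_p$ but with the \emph{set} bits chosen with bias $(1/2-\alpha)$ rather than uniformly. The key arithmetic check is that the marginal of each coordinate is then $p \cdot (1/2 - \alpha/p) + (1-p)\cdot(1/2 - \alpha) = 1/2 - \alpha$, so this really is $\bf{N}_\alpha$; I would phrase this cleanly as a one-line claim (it is the biased analogue of \Cref{claim:bias_of_rhoZ}).

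Next I would invoke \Cref{lemma:first_sl_lemma} (in the form applicable to this restriction distribution, noting the switching lemma \Cref{thm:switching_lemma} applies to $R_p$): for $f \in \cC(S,d)$ there is a distribution over depth-$k$ decision trees $h_{\bs{\rho}}$ with $|\E[f(\bf{N}_\alpha)] - \E[h_{\bs{\rho}}(\bf{N}_{\alpha/p})]| \le q$ where $q \le (p\cdot O(\log S)^{d-1})^k = \eps$. The core estimate is then purely about a single depth-$k$ decision tree $h$: I want to bound $\E[h(\bf{N}_\beta)]$ against $\E[h(\bf{U})]$ where $\beta = \alpha/p$. Walk down the tree; $h$ accepts on a union of at most (number of leaves) root-to-leaf paths, each of length $\le k$, each specifying $\le k$ fixed bit values. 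Under $\bf{N}_\beta$ a path whose literals fix $a$ ones and $b$ zeros has probability $(1/2-\beta)^a(1/2+\beta)^b$, versus $2^{-(a+b)}$ under $\bf{U}$; the ratio is $(1-2\beta)^a(1+2\beta)^b \le (1+2\beta)^{a+b} \le (1+2\beta)^k \le e^{2\beta k}$. Summing over the disjoint accepting paths gives $\E[h(\bf{N}_\beta)] \le e^{2\beta k}\E[h(\bf{U})]$, i.e. $e^{-2\beta k}\E[h(\bf{N}_\beta)] \le \E[h(\bf{U})]$.

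Combining, for every $f \in \cC(S,d)$,
\[
e^{-2\alpha k/p}\,\E[f(\bf{N}_\alpha)] \le e^{-2\alpha k/p}\big(\E[h_{\bs{\rho}}(\bf{N}_{\alpha/p})] + q\big) \le \E[h_{\bs{\rho}}(\bf{U})] + q \le \E[f(\bf{U})] + 2q,
\]
using \Cref{lemma:first_sl_lemma} once more for the last inequality (and $e^{-2\alpha k/p} \le 1$). Absorbing the constant, this is exactly $(\eps,\delta)$-pseudodensity with $\delta = e^{-\Theta(\alpha k/p)}$ and $\eps = (p\cdot O(\log S)^{d-1})^k$; I would carry the constants so that the statement reads with $\delta = e^{-\alpha k/p}$ as claimed (adjusting the $O(\cdot)$ inside $\eps$ and/or a constant in the exponent as needed — the paper's statement already hides constants in the $O$-notation inside $\eps$).

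The main obstacle is bookkeeping rather than a genuine difficulty: first, making sure the switching-lemma machinery of \Cref{lemma:first_sl_lemma} is legitimately applied when the filled-in bits are biased — but this is fine because the switching lemma bounds $\Pr_{\bs{\rho}}[DT(f|_{\bs{\rho}}) \ge k]$ purely as a function of which coordinates are starred, independently of how the set bits are chosen, and the decision-tree decomposition argument in \Cref{lemma:first_sl_lemma} only used that $f$ is Boolean. Second, one must confirm that after restriction the residual decision tree is evaluated under the \emph{product} distribution $\bf{N}_{\alpha/p}$ on the free coordinates (it is, since $\bs{\rho}$ and the free bits are sampled independently), so that the clean path-probability computation above is valid. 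Everything else is arithmetic.
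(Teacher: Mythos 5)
Your overall route is the same as the paper's: restrict, reduce to a depth-$k$ decision tree via \Cref{lemma:first_sl_lemma}, and bound the multiplicative blow-up in acceptance probability when the free coordinates are fed $\bf{N}_{\alpha/p}$ instead of $\bf{U}$. However, there is a concrete arithmetic slip at the very first step, and it changes the argument you would have to make. You propose to draw the restriction from ``$R_p$ but with the set bits chosen with bias $(1/2-\alpha)$,'' and you assert
\[
p\Big(\tfrac12 - \tfrac{\alpha}{p}\Big) + (1-p)\Big(\tfrac12 - \alpha\Big) = \tfrac12 - \alpha.
\]
This is false: the left-hand side equals $\tfrac12 - (2-p)\alpha$. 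The correct (and simpler) identity is the one with \emph{uniform} set bits, i.e. with $\bs{\rho}\sim R_p$ exactly as the paper defines it:
\[
p\Big(\tfrac12 - \tfrac{\alpha}{p}\Big) + (1-p)\cdot\tfrac12 = \tfrac12 - \alpha,
\]
so $\bs{\rho}\circ\bf{N}_{\alpha/p}=\bf{N}_\alpha$ already for the standard $R_p$. Besides being wrong, the modified restriction distribution also forces you to re-justify the switching lemma for biased set bits --- which you hand-wave, and which in fact is not needed once you use $R_p$ unmodified.

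With that fix, the remainder of your proof is correct and closely follows the paper. The paper packages the two steps as \Cref{claim:thm2_additive_error} (switching-lemma reduction) and \Cref{claim:thm2_multiplicative_error} (the factor $e^{\gamma k}$ for a depth-$k$ tree, proved by an induction over the tree). Your per-leaf calculation --- that each accepting root-to-leaf path has its probability inflated by at most $(1-2\beta)^a(1+2\beta)^b \le (1+2\beta)^k$, and then summing over the disjoint paths --- is a clean, arguably more transparent alternative to that induction, at the cost of a factor $2$ in the exponent that is harmless here. Your final chain of inequalities, which carries the averaging over $\bs{\rho}$ through both comparisons (applying the multiplicative bound to each $h_\rho$ and then using \Cref{lemma:first_sl_lemma} once more with $\bf{Z}=\bf{U}$ to replace $\E[h_{\bs{\rho}}(\bf{U})]$ by $\E[f(\bf{U})]+q$), is in fact slightly more careful bookkeeping than the paper's write-up, which fixes a single tree by averaging and elides this step. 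In short: fix the decomposition to plain $R_p$ (the arithmetic becomes both correct and simpler), and the rest stands.
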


Of note, the only additive error depends on the error
from the switching lemma.
Compare this with the claim that 
$\bf{N}_\alpha$ is $(3\alpha \cdot O(\log S)^{d-1})$-pseudorandom
(and therefore has the same pseudodensity for $\delta = 1$)
for $\cC(S, d)$, due to Tal \cite{tal2017tight}.

To prove the lemma, we need a few claims.

\begin{claim}
    \label{claim:thm2_additive_error}
    Suppose $f \in \cC(S, d)$. Then there is a depth-$k$ decision
    tree $h$ with the property that:
    \[
        \E[f(\bf{N}_\alpha)] \leq \E[h(\bf{N}_{\alpha/p})] + (p \cdot O(\log S)^{d-1})^k
    \]
\end{claim}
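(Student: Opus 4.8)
The plan is to realize $\bf{N}_\alpha$ as a composition of a random restriction from $R_p$ with a biased distribution on the surviving coordinates, and then invoke the switching lemma as packaged in \Cref{lemma:first_sl_lemma}. Concretely, first I would observe that a single coin of bias $1/2-\alpha$ can be sampled in two stages: with probability $p$ leave the coordinate "unset" and later fill it with a coin of bias $1/2 - \alpha/p$, and with probability $1-p$ set it immediately to a uniformly random bit. The point of the split is that the unconditional marginal is $p\cdot(1/2-\alpha/p) + (1-p)\cdot(1/2) = 1/2 - \alpha$, exactly $\bf{N}_\alpha$. Taking this coordinatewise and independently, we get $\bf{N}_\alpha$ distributed identically to $\bs{\rho}\circ \bf{N}_{\alpha/p}'$, where $\bs{\rho}\sim R_p$ and $\bf{N}_{\alpha/p}'$ is the product of $n$ coins of bias $1/2-\alpha/p$ on $\bs{\rho}$'s unset coordinates (the set coordinates of $\bs{\rho}$ are already uniform, matching the $(1-p)$ branch). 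So $\E[f(\bf{N}_\alpha)] = \E[f(\bs{\rho}\circ \bf{N}_{\alpha/p})]$ in the notation of the switching-lemma section.

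Next I would apply \Cref{lemma:first_sl_lemma} with $\bf{Z} = \bf{N}_{\alpha/p}$: it gives a distribution $h_{\bs{\rho}}$ over depth-$k$ decision trees with
\[
    \bigl|\E[f(\bs{\rho}\circ \bf{N}_{\alpha/p})] - \E[h_{\bs{\rho}}(\bf{N}_{\alpha/p})]\bigr| \leq (p\cdot O(\log S)^{d-1})^k.
\]
Then $\E[f(\bf{N}_\alpha)] \leq \E[h_{\bs{\rho}}(\bf{N}_{\alpha/p})] + (p\cdot O(\log S)^{d-1})^k$, and since $\E[h_{\bs{\rho}}(\bf{N}_{\alpha/p})]$ is an average over the support of $h_{\bs{\rho}}$, there is a single depth-$k$ decision tree $h$ in that support with $\E[h(\bf{N}_{\alpha/p})] \geq \E[h_{\bs{\rho}}(\bf{N}_{\alpha/p})]$, which yields the stated bound. (One should be slightly careful that the restriction $\bs{\rho}$ acts on the same coordinate set that $\bf{N}_{\alpha/p}$ is supported on; since both are defined coordinatewise and independently this is automatic, but it is worth a sentence.)

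The only real subtlety — not an obstacle so much as a bookkeeping point — is the two-stage sampling identity for the biased coin and making sure the "set" branch of the restriction contributes a genuinely uniform bit so that the composition $\bs{\rho}\circ\bf{N}_{\alpha/p}$ reproduces $\bf{N}_\alpha$ exactly rather than approximately. Everything after that is a direct citation of \Cref{lemma:first_sl_lemma} plus an averaging argument, so I expect the proof to be short. A minor check: we need $\alpha/p \le 1/2$ for the biased coin to make sense, i.e. $p \ge 2\alpha$, which will be satisfied in the parameter regime where the lemma is ultimately applied.
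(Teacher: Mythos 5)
Your proposal is correct and follows exactly the same route as the paper: identify $\bf{N}_\alpha$ with $\bs{\rho}\circ\bf{N}_{\alpha/p}$ for $\bs{\rho}\sim R_p$, apply \Cref{lemma:first_sl_lemma} with $\bf{Z}=\bf{N}_{\alpha/p}$, and average over $\bs{\rho}$ to extract a single decision tree. The only difference is that you spell out the two-stage sampling identity (and the sanity check $p\ge 2\alpha$), which the paper asserts without comment.
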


\begin{proof}
    Take $\bf{Z} = \bf{N}_{\alpha/p}$ in \Cref{lemma:first_sl_lemma}, so we have $\bs{\rho} \circ \bf{N}_{\alpha/p} = \bf{N}_\alpha$ and 
    \[
        \E[f(\bf{N}_\alpha)] \leq \E[h_{\bs{\rho}}(\bf{N}_{\alpha/p})] + (p \cdot O(\log S)^{d-1})^k
    \]
    Averaging over
    $\bf{\rho}$ yields the fixed decision tree. 
\end{proof}

Second, we can upper bound the extent to which the acceptance probability
of a short decision tree increases when passing from the uniform distribution 
$\bf{U}$ to the biased distribution $\bf{N}_\gamma$ 

\begin{restatable}{claim}{densityincrease}
    \label{claim:thm2_multiplicative_error}
    Suppose $f: \{0, 1\}^n \to \{-1, 1\}$ is a depth-$k$ decision tree. 
    Then
    \[
        \E[f(\bf{N}_\gamma)] \leq (1 + \gamma)^k \cdot 
	\E[f(\bf{U})] \leq e^{\gamma k} \cdot \E[f(\bf{U})]
    \]
\end{restatable}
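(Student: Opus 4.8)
The statement compares the acceptance probability of a depth-$k$ decision tree under the biased product measure $\bf{N}_\gamma$ against the uniform measure, so the natural plan is to expand over the leaves of the tree. Reading $f$ as the (nonnegative, say $[0,1]$-valued) output of a depth-$k$ decision tree $T$, each leaf $\ell$ of $T$ is reached by a unique root-to-leaf path that queries a set $S_\ell \subseteq [n]$ of size $|S_\ell| \le k$ and fixes each queried coordinate $i$ to a bit $b^\ell_i$. The events ``$\bf{Z}$ follows the path to $\ell$'' are pairwise disjoint and exhaust $\{0,1\}^n$, so for \emph{any} product distribution $\bf{Z}$ we have $\E[f(\bf{Z})] = \sum_\ell v_\ell \Pr[\bf{Z}\text{ reaches }\ell]$, where $v_\ell \in [0,1]$ is the value at leaf $\ell$.

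The key step is the per-leaf comparison. Under $\bf{U}$, $\Pr[\bf{U}\text{ reaches }\ell] = 2^{-|S_\ell|}$. Under $\bf{N}_\gamma$, independence of coordinates gives $\Pr[\bf{N}_\gamma\text{ reaches }\ell] = \prod_{i \in S_\ell}\Pr[(\bf{N}_\gamma)_i = b^\ell_i]$, and each factor is either $1/2-\gamma$ or $1/2+\gamma$, hence at most $\tfrac12(1+2\gamma)$. Therefore $\Pr[\bf{N}_\gamma\text{ reaches }\ell] \le (1+2\gamma)^{|S_\ell|}\,2^{-|S_\ell|} \le (1+2\gamma)^k\,\Pr[\bf{U}\text{ reaches }\ell]$, where the last inequality uses $|S_\ell|\le k$ and $1+2\gamma\ge 1$. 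Summing $v_\ell$ times this over all leaves (legitimate since $v_\ell\ge 0$) yields $\E[f(\bf{N}_\gamma)] \le (1+2\gamma)^k\,\E[f(\bf{U})]$, and $1+2\gamma\le e^{2\gamma}$ gives the exponential form $e^{O(\gamma k)}$ used downstream in \Cref{lemma:thm2_pseudodensity} (this is the claimed bound up to the constant in the exponent, which is all that is needed).

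An equally short alternative is induction on $k$: if the root queries $x_i$, write $f_0,f_1$ for the subtrees along $x_i=0$ and $x_i=1$, each of depth $\le k-1$; then $\E[f(\bf{N}_\gamma)] = (\tfrac12+\gamma)\E[f_0(\bf{N}_\gamma)] + (\tfrac12-\gamma)\E[f_1(\bf{N}_\gamma)]$, where $f_0,f_1$ are evaluated on the remaining $n-1$ still-$\bf{N}_\gamma$-distributed coordinates, so the inductive hypothesis applies to them; finishing with the elementary bound $(\tfrac12+\gamma)a+(\tfrac12-\gamma)b \le (1+2\gamma)\tfrac{a+b}{2}$ for $a,b\ge 0$ closes the induction. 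I would present the leaf expansion, since it makes the ``at most $(1+2\gamma)$ per queried bit'' intuition completely transparent.

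There is essentially no hard step here. The only points requiring a moment of care are (i) that the leaf events genuinely \emph{partition} the domain — true for decision trees but false for an arbitrary DNF, which is exactly why the claim is stated for short decision trees rather than, say, small CNFs — and (ii) that $|S_\ell|\le k$ lets us uniformly round every per-bit factor up to $1+2\gamma$ without ever needing a factor below $1$ along a path. Combined with \Cref{claim:thm2_additive_error} (which replaces $f(\bf{N}_\alpha)$ by $h(\bf{N}_{\alpha/p})$ for a depth-$k$ tree $h$, up to the switching-lemma error), applying this claim with $\gamma=\alpha/p$ is what produces the density $\delta = e^{-\Theta(\alpha k/p)}$ and the additive error $\eps = (p\cdot O(\log S)^{d-1})^k$ in \Cref{lemma:thm2_pseudodensity}.
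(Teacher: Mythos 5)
Your leaf-by-leaf decomposition is a genuinely different --- and cleaner --- route than the paper's, which instead proceeds by induction on depth, expanding $f(z) = (1-z_i)g(z) + z_i h(z)$ at the root variable and applying the inductive hypothesis to the two subtrees (your alternative argument in the second paragraph is essentially the paper's proof). The leaf expansion sidesteps the case analysis on whether $\E[g(\bf{U})]$ or $\E[h(\bf{U})]$ is larger and makes the ``at most one multiplicative factor per queried bit'' intuition explicit. One substantive point in your favor: the per-bit factor really is $\tfrac{1/2 + \gamma}{1/2} = 1 + 2\gamma$, so the correct bound is $(1+2\gamma)^k$, exactly as you derive. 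The claim's stated $(1+\gamma)^k$ already fails at $n = k = 1$: take $f(x) = 1 - x$, so $\E[f(\bf{N}_\gamma)] = 1/2 + \gamma$ whereas $(1+\gamma)\E[f(\bf{U})] = 1/2 + \gamma/2$. This only shifts the constant inside the exponent of $e^{O(\gamma k)}$, so \Cref{lemma:thm2_pseudodensity} and everything downstream survive with $\delta = e^{-\Theta(\alpha k/p)}$, but the claim as literally stated should read $(1+2\gamma)^k$. You also implicitly correct the range: it must be $\{0,1\}$ (or $[0,1]$), not $\{-1,1\}$, both for your leaf sum (you need $v_\ell \geq 0$) and for the paper's induction (which relies on $\E[g], \E[h] \geq 0$), and that is in fact the range in which the claim is invoked inside the proof of \Cref{lemma:thm2_pseudodensity}.
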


The proof amounts to a calculation, which we include in the appendix. 
We're now in a position to prove the lemma. 

\begin{proof}[Proof of \Cref{lemma:thm2_pseudodensity}]
    Directly applying \Cref{claim:thm2_additive_error}, we get
    \[
           \E[f(\bf{N}_\alpha)] \leq \E[f'(\bf{N}_{\alpha/p})] + (p \cdot O(\log S)^{d-1})^k
    \]
    Applying \Cref{claim:thm2_multiplicative_error} to $\E[f'(\bf{N}_{\alpha/p})]$,
    we get 
    \begin{align*}
	    \E[f(\bf{N}_\alpha)] &\leq (1 + \alpha/p)^k \E[f'(\bf{U})]\\
				 &\leq e^{\alpha k / p} \E[f'(\bf{U})].
    \end{align*}
    Putting these together finishes the proof. 
\end{proof}

We can now prove \Cref{thm:pdtodsops}, restated here:

\pdtodsops*

\begin{proof}[Proof of \Cref{thm:pdtodsops}]
	Let $n = 1/(7\eps)$, $k = \log(1/\eps')$ and $\alpha = \sqrt{\eps/\delta}$.
	These choices satisfy $\alpha \geq \sqrt{\frac{1}{8 \delta}}(1/n + \eps)$,
	meaning $\bf{N}_\alpha$ is not $\delta$-dense in any $\eps$-pseudorandom
	set for $\cC(S, d)$, by \Cref{lemma:coin}.

	By \Cref{lemma:thm2_pseudodensity}, 
	$\bf{N}_\alpha$ has $(\eps', \delta)$-pseudodensity for 
	$\delta = e^{-\alpha k/p}$
	and $\eps' = (p \cdot O(\log S)^{d-1})^k$.

	The constraint on the density implies
	\begin{align*}
		\delta &= e^{-\alpha k / p}\\
		\log(1/\delta) &= \alpha k / p\\
		\log(1/\delta) &= \sqrt{\eps/\delta}\log(1/\eps')/p\\
		p &= \frac{\sqrt{\eps/\delta}\log(1/\eps')}{\log(1/\delta)}.
	\end{align*}
	Plugging this value of $p$ into the expression for $\eps'$, we get
	\begin{align*}
		\eps' &= (p \cdot O(\log S)^{d-1})^k\\
		(\eps')^{1/k}/p &= O(\log S)^{d-1}\\
		(\eps')^{1/\log(1/\eps')} 
		\cdot \frac{\sqrt{\delta}\log(1/\delta)}{\sqrt{\eps}\log(1/\eps')}
				&= O(\log S)^{d-1}.
	\end{align*}
	Note that $(\eps')^{1/\log(1/\eps')} 
	= 2^{- \log(1/\eps')/\log(1/\eps')} = 1/2$. 
	Solving for $S$ gives the claimed bound.
\end{proof}

\section{Proofs of \Cref{thm:degreelb} and \Cref{thm:pdtodm}}

In this section, we prove \Cref{thm:degreelb} and \Cref{thm:pdtodm}.
We include them in the same section due to their similarity
and start with \Cref{thm:degreelb} which is more involved. 
Indeed, \Cref{thm:pdtodm} will follow directly from a
result of Shaltiel and Viola \cite{shaltiel2010hardness}
after an appropriate error reduction procedure.

\subsection{Proof of \Cref{thm:degreelb}}

As in \Cref{thm:pdtodsops},
we will prove unconditional pseudodensity for
$\bf{N}_\alpha$ and compare it with the lower bound
for $\alpha$ given by \Cref{lemma:coin}. 

Let $\bf{Sp}_{n, k}$ denote the (random variable uniform over the) 
set of $n$-bit strings
of weight exactly $n/2 - k$. 
We convert a function which refutes the pseudodensity of $\bf{N}_\alpha$
into a (random) function which distinguishes between the two `slices' of the hypercube
of weight $m/2 - \alpha m$ and $m/2$, which extend similar ideas
from \cite{shaltiel2010hardness, limaye2019fixed, srinivasan2020}.

\begin{lemma}
	\label{lemma:promise_maj}
	Let $f: \{0, 1\}^n \to \{0, 1\}$ be a function
	for which $\delta \E[f(\bf{N}_\alpha)] - \eps' \geq \E[f(\bf{U})]$
	and let $q = \E[f(\bf{N}_\alpha)]$.
	Then for any positive integer $\ell$,
	there is a random function $\bf{F}: \{0, 1\}^m \to \{0, 1\}$
	so that
	\begin{align*}
		\Pr[\bf{F}(\bf{Sp}_{m, \alpha m}) = 0] &\leq e^{-\ell q};\\
		\Pr[\bf{F}(\bf{Sp}_{m, 0}) = 1] &\leq \ell q \delta.
	\end{align*}
	Moreover, $\bf{F}$ is the OR of $\ell$ copies of $f$ (on random inputs). 
\end{lemma}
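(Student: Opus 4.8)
The plan is to turn the single test $f$ into a random test $\mathbf{F}$ by taking the OR of $\ell$ independent copies of $f$, each fed a fresh batch of $m$ coordinates sampled uniformly (with replacement) from the $m$-bit input string. The key observation — the same one flagged in the technical overview — is that if we feed such a random sub-sample of coordinates from a string $w \in \{0,1\}^m$ of weight $(1/2-\alpha)m$, each coordinate is an independent $(1/2-\alpha)$-biased bit, so the $m$-bit block we hand to one copy of $f$ is distributed exactly as $\mathbf{N}_\alpha$; likewise, a sub-sample from a weight-$m/2$ string gives exactly $\mathbf{U}$. (Strictly, sampling $m$ coordinates from an $m$-string isn't literally i.i.d. unless we sample with replacement, so $\mathbf{F}$ should be defined to resample coordinates independently; I would state this explicitly.) Thus a single copy of $f$ accepts a $\mathbf{Sp}_{m,\alpha m}$ input with probability $q = \mathbb{E}[f(\mathbf{N}_\alpha)]$ and accepts a $\mathbf{Sp}_{m,0}$ input with probability $\mathbb{E}[f(\mathbf{U})] \le \delta q - \eps' \le \delta q$.

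First I would set up $\mathbf{F}(x) = \bigvee_{j=1}^{\ell} f(x^{(j)})$, where each $x^{(j)} \in \{0,1\}^m$ is obtained from $x$ by picking $m$ coordinates of $x$ uniformly and independently. Then for $x$ drawn from $\mathbf{Sp}_{m,\alpha m}$, the $\ell$ copies are independent (conditioned on $x$, and then averaging over $x$ as well since the marginal of each block is $\mathbf{N}_\alpha$), so
\[
\Pr[\mathbf{F}(\mathbf{Sp}_{m,\alpha m}) = 0] = (1-q)^{\ell} \le e^{-\ell q},
\]
giving the first bound. For the second, on $x$ drawn from $\mathbf{Sp}_{m,0}$ each copy accepts with probability at most $\delta q$, so by a union bound over the $\ell$ copies
\[
\Pr[\mathbf{F}(\mathbf{Sp}_{m,0}) = 1] \le \ell \cdot \delta q,
\]
which is the second bound. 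Finally $\mathbf{F}$ is by construction the OR of $\ell$ copies of $f$ on (random) inputs, as claimed.

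The only genuinely delicate point — and the step I would be most careful about — is the independence and distributional claim: that after averaging over the random choice of sub-sampled coordinates, each of the $\ell$ blocks handed to $f$ is genuinely distributed as $\mathbf{N}_\alpha$ (resp. $\mathbf{U}$) and that the $\ell$ blocks behave independently for the purpose of the $(1-q)^\ell$ bound. Here one must either (a) sample coordinates with replacement so the $m$ bits of each block are exactly i.i.d. Bernoulli$(1/2-\alpha)$, matching $\mathbf{N}_\alpha$ on the nose, or (b) sample without replacement and absorb the resulting small deviation from $\mathbf{N}_\alpha$ into the statement — option (a) is cleaner and is presumably what is intended, so I would adopt it. Everything else is a one-line union bound and a one-line product bound; the $\eps'$ slack in the hypothesis is not even needed here (it is only used to ensure $\mathbb{E}[f(\mathbf{U})] \le \delta q$, which already follows from $\delta\mathbb{E}[f(\mathbf{N}_\alpha)] - \eps' \ge \mathbb{E}[f(\mathbf{U})]$ with $\eps' \ge 0$) and will be exploited downstream when this lemma is combined with Srinivasan's degree lower bound.
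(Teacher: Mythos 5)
Your proposal is correct and matches the paper's proof exactly: the construction is $\mathbf{F}(z) = \bigvee_{i \in [\ell]} f(z_{\mathbf{I}_i})$ with coordinates sampled uniformly \emph{with replacement}, the $(1-q)^\ell \le e^{-\ell q}$ product bound on the $\mathbf{Sp}_{m,\alpha m}$ slice, and the union bound $\ell q' \le \ell q \delta$ (using $q' = \E[f(\mathbf{U})] \le \delta q - \eps' \le \delta q$) on the $\mathbf{Sp}_{m,0}$ slice. One small notational slip: each block fed to $f$ should be an $n$-bit string (matching $f$'s domain $\{0,1\}^n$), obtained by sampling $n$ indices from $[m]$ with replacement, rather than an $m$-bit string as you wrote.
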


\begin{proof}
	Fix an input $z \in \{0, 1\}^m$. 
	For $i \in [\ell]$, let $\bf{I}_i$ denote 
	the random length $n$ sequence over $[m]$ obtained 
	by sampling $n$ indices $j_1, ..., j_n \in [m]$ independently,
	uniformly at random (with replacement). 

	The random function is defined as
	\[
		\bf{F}(z) = \bigvee_{i \in [\ell]} f(z_{\bf{I}_i})
	\]
	We can evaluate $\bf{F}$'s acceptance probability on 
	$\bf{Sp}_{m, \alpha m}$ and $\bf{Sp}_{m, 0}$
	as follows:
	\begin{enumerate}
		\item Suppose $z \in \bf{Sp}_{m, \alpha m}$. 
			Then each $z_{\bf{I}_i}$ is distributed as 
			$\bf{N}_\alpha$ on $n$ bits, 
			meaning the $z_{\bf{I}_i}$'s
			constitute $\ell$ independent samples from 
			$\bf{N}_\alpha$. By definition, $f$ 
			accepts with probability $q$ over 
			$\bf{N}_\alpha$ and so the probability that $f$
			doesn't accept in $\ell$ independent runs is 
			$(1-q)^k \leq e^{-kq}$.

		\item Suppose $z \in \bf{Sp}_{m, m/2}$. 
			Then $z_{\bf{I}_i} \in \{0, 1\}$ is a uniformly
			random string, meaning we have $\ell$ independent
			samples from $\bf{U}$. For each sample, $f$ outputs
			$1$ independently with probability $q' = \E[f(\bf{U})]$,
			so that this occurs once in $\ell$ attempts
			happens with probability
			at most $\ell q'$. Since $q' \leq q\delta - \eps \leq q \delta$,
			we can bound this by $\ell q \delta$.
	\end{enumerate}
\end{proof}

\begin{lemma}
	\label{lemma:polynomial_pseudodensity}
	Fix $\eps', \delta > 0$
	and $\delta < c$ for some absolute constant $c \approx 1/200$
	Let $\F$ be a field of positive characteristic $p = O(1)$
	and let $\cF$ be the set of all polynomials $P \in \F[X_1, ..., X_n]$
	with degree at most $d$. Then if $\bf{N}_\alpha$ 
	has $(\eps', \delta)$-pseudodensity with respect to $\cF$, $d = O(1/\alpha)$.
\end{lemma}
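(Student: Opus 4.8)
The plan is to prove the contrapositive: there is an absolute constant $C = C(p)$ so that if $d \ge C/\alpha$ then $\bf{N}_\alpha$ is \emph{not} $(\eps',\delta)$-pseudodense for degree-$\le d$ $\F_p$-polynomials, which is the same as saying pseudodensity forces $d < C/\alpha = O(1/\alpha)$. To witness the failure I would exhibit a single degree-$\le d$ polynomial $P$ with $\delta\Pr[P(\bf{N}_\alpha)\neq 0] - \eps' > \Pr[P(\bf{U})\neq 0]$. Writing $q = \Pr[P(\bf{N}_\alpha)\neq 0]$, $q' = \Pr[P(\bf{U})\neq 0]$ and recalling that $\delta < c \approx 1/200$ and (in the regime of interest) $\eps' \le \eps\delta < \delta$, this is essentially a \emph{one-sided} demand: it is enough that $P$ be nonzero on a $\ge 1-\eta$ fraction of $\bf{N}_\alpha$ and on a $\le\eta$ fraction of $\bf{U}$ for some $\eta < (\delta-\eps')/(1+\delta)$, since then $\delta q - \eps' \ge \delta(1-\eta) - \eps' > \eta \ge q'$. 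So the whole task reduces to building such a one-sided $P$ of degree $O(1/\alpha)$.

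That is exactly the $\delta = 1$ case of the lemma, i.e.\ the $\F_p$-polynomial \emph{upper} bound for the coin problem of Srinivasan~\cite{srinivasan2020} (building on Limaye et al.~\cite{limaye2019fixed}): there is an $\F_p$-polynomial of degree $O(\log(1/\eta)/\alpha)$ that distinguishes $\bf{N}_\alpha$ from $\bf{U}$ with advantage $1-2\eta$, and after possibly replacing that polynomial $P_0$ by $1-P_0^{p-1}$ (a cost of only the constant $p-1$ in degree) this is precisely the one-sided guarantee above. Taking $\eta = \Theta(\delta)$ yields a polynomial of degree $O(\log(1/\delta)/\alpha) = O(1/\alpha)$ with the required behaviour; by the previous paragraph it refutes $(\eps',\delta)$-pseudodensity, so $d = O(1/\alpha)$. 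I would also check the bookkeeping point that the coin-problem polynomial uses only $O(\log(1/\delta)/\alpha^2)$ variables, which sits below the $n = 1/\eps$ of \Cref{thm:degreelb} because $\alpha = O(\sqrt{\eps/\delta})$ makes $1/\alpha^2 = \Omega(\delta/\eps)$ and $\delta < c$.

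The step I expect to be the real obstacle is reconciling the asymmetry of pseudodensity with the degree price of error reduction: for $\delta\ll1$ a refuter cannot get away with a constant distinguishing advantage --- it must be nonzero on $\bf{N}_\alpha$ at least a $1/\delta$-factor more often than on $\bf{U}$, i.e.\ have one-sided error below $\Theta(\delta)$ --- and the coin problem charges $\Omega(\log(1/\delta)/\alpha)$ degree for this; this is why the hypothesis $\delta < c$ is needed (to keep the forced degree at $O(1/\alpha)$) and why the statement must break for $\eps' \ge \delta$, where every $P$ trivially satisfies pseudodensity. Finally I would record the companion implication used for the separation proper in \Cref{thm:degreelb} --- that $d = O(1/\alpha)$ already \emph{suffices} for $(\eps',\delta)$-pseudodensity --- which goes the other way via \Cref{lemma:promise_maj}: feeding a hypothetical degree-$d$ refuter through \Cref{lemma:promise_maj} with $\ell = \lceil 2/q\rceil$ and then replacing the outer $\mathrm{OR}$ by a probabilistic $\F_p$-polynomial in the style of Razborov--Smolensky~\cite{razborov1987lower,smolensky1993representations} produces a degree-$O(d)$ polynomial separating the weight-$m(1/2-\alpha)$ slice from the weight-$m/2$ slice of $\{0,1\}^m$, and Srinivasan's slice lower bound~\cite{srinivasan2020} rules this out for $d = o(\alpha m)$ with $m = \Theta(1/\alpha^2)$; comparing the resulting pseudodensity with the lower bound on $\alpha$ from \Cref{lemma:coin} then yields the separation.
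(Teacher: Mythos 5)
You have read the lemma literally, and the literal statement is indeed the direction you prove; but the statement is garbled relative to both the paper's proof and its use. What the paper actually establishes (and what \Cref{thm:degreelb} needs) is the reverse implication: any polynomial $P$ witnessing the \emph{failure} of $(\eps',\delta)$-pseudodensity must have degree $\Omega(1/\alpha)$ --- equivalently, $\bf{N}_\alpha$ \emph{is} $(\eps',\delta)$-pseudodense for the class of degree-$d$ polynomials whenever $d \leq c_0/\alpha$. That direction is what feeds the separation: \Cref{thm:degreelb} asserts that for $d \leq O(\sqrt{\delta/\eps}) = O(1/\alpha)$ the distribution $\bf{N}_\alpha$ is pseudodense yet not dense in a pseudorandom set, so one needs pseudodensity to \emph{hold} at low degree. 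Your main construction --- a low-degree refuter built from the coin-problem upper bound of Limaye et al./Srinivasan --- shows pseudodensity \emph{fails} once $d$ is large enough, which is a true and complementary fact but cannot be used in the separation; moreover its degree is $O(\log(1/\delta)/\alpha)$, which is $O(1/\alpha)$ only when $\delta$ is bounded below by a constant, whereas the lemma assumes only $\delta < c$.

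The good news is that your closing ``companion implication'' paragraph is precisely the paper's proof, and your sketch of it is essentially right: take a hypothetical degree-$d$ refuter $P$ (made Boolean via $P^{p-1}$), feed it through \Cref{lemma:promise_maj} with $\ell = \Theta(1/q)$ to get an OR of $\ell$ copies separating the slices $\bf{Sp}_{m, \alpha m}$ and $\bf{Sp}_{m, 0}$ with the required one-sided errors, replace the outer OR by Razborov's probabilistic polynomial (\Cref{claim:razborov}, degree $p\log(1/\gamma)$ independent of the fan-in $\ell$ --- this is exactly why $\eps'$ drops out of the conclusion), and invoke the robust Heg\"{e}dus lemma (\Cref{lemma:hegedus}) with $m = \Theta(1/\alpha^2)$ to force $\deg(P) = \Omega(\alpha m) = \Omega(1/\alpha)$. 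Note also that the hypothesis $\delta < c$ is used there to keep the error $\ell q \delta = O(\delta)$ on the middle slice bounded away from $1$, not (as you suggest) to control the degree of a refuter. Had you made this final paragraph the body of your proof rather than a closing remark, the proposal would match the paper.
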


Note that there is no dependence on $\eps'$. This is a relic
of the field size $p > 0$, which allows us to approximate large fan-in
ORs with polynomials whose degree does not depend on the fan-in but only
the quality of approximation. 

We will proceed in the contrapositive; if $\bf{N}_\alpha$
\textit{isn't} pseudodense, we will use the witness to construct a
polynomial which we can prove degree lower bounds on directly. By ensuring
that the degrees are related by a constant factor, we get the lower bound.
The proof uses the following approximation of the OR
function as a low-degree polynomial over $\F$ when $\F$ has positive 
characteristic (in characteristic zero, there is dependence on the
fan-in of the OR, which we want to avoid). 

\begin{claim}[\cite{razborov1987lower}]
	\label{claim:razborov}
	For any $n$ and any finite field $\F$ with characteristic $p > 0$,
	there is a distribution $\bf{R}$ on degree $d$ polynomials
	so that for all $z \in \{0, 1\}^n$
	\[
		\Pr[\bf{R}(x) = OR(x)] \geq 1 - \gamma
	\]
	where $d \leq p \log(1/\gamma)$. 
	Moreover,
	$\bf{R}$ is supported on polynomials $R$
	with $R(z) \in \{0, 1\}$ for every $z \in \{0, 1\}^n$
\end{claim}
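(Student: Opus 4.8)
The plan is to reuse the classical probabilistic-polynomial construction of Razborov and Smolensky, being a little careful so that the degree depends only on the characteristic $p$ and not on $|\F|$ or $n$. First I would build a single \emph{weak} randomized polynomial of degree $p-1$: draw a uniformly random subset $\bf{T} \subseteq [n]$ and set $\bf{L}(X) = \bigl(\sum_{i \in \bf{T}} X_i\bigr)^{p-1}$. The key observation is that for $z \in \{0,1\}^n$ the inner sum $\sum_{i \in \bf{T}} z_i$ is a sum of copies of $1_\F$, hence lands in the prime subfield $\F_p \subseteq \F$; and for $a \in \F_p$ one has $a^{p-1} = 1$ if $a \neq 0$ and $a^{p-1} = 0$ if $a = 0$ by Fermat's little theorem. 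Hence $\bf{L}(z) \in \{0,1\}$ for every $z$, $\bf{L}(z) = 0 = OR(z)$ whenever $z = 0^n$, and for $z \neq 0^n$ a one-line `flip a coordinate' argument (fix $j$ with $z_j = 1$ and condition on $\bf{T} \cap ([n] \setminus \{j\})$; the two possible values of $\sum_{i \in \bf{T}} z_i$ are distinct, so at least one is nonzero) shows $\Pr[\bf{L}(z) = 1] \geq 1/2$.

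Second, I would amplify by OR-ing $t$ independent copies. Drawing $\bf{T}_1, \dots, \bf{T}_t$ independently, set
\[
	\bf{R}(X) = 1 - \prod_{j=1}^{t}\Bigl(1 - \bigl(\textstyle\sum_{i \in \bf{T}_j} X_i\bigr)^{p-1}\Bigr).
\]
Since each factor lies in $\{0,1\}$ on inputs from $\{0,1\}^n$, so does $\bf{R}(z)$, which gives the `moreover' clause. When $z = 0^n$ every factor equals $1$, so $\bf{R}(z) = 0 = OR(z)$ with certainty; when $z \neq 0^n$ we have $\bf{R}(z) = 1 = OR(z)$ unless all $t$ independent copies fail, which happens with probability at most $2^{-t}$ by the previous paragraph. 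Choosing $t = \lceil \log(1/\gamma) \rceil$ makes this at most $\gamma$ uniformly over $z$, and the degree is $(p-1)t \leq p \log(1/\gamma)$ (the rounding is harmless in the range of $\gamma$ we use).

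There is no real obstacle: the construction is entirely elementary. The one point that deserves attention — and the reason the claim holds over an arbitrary, possibly non-prime, field — is that the indicator gadget uses the exponent $p-1$ rather than $|\F| - 1$. This is legitimate precisely because the only field elements ever fed into it are the $0/1$-valued subset sums, which live in the prime subfield $\F_p$, where $a \mapsto a^{p-1}$ is already the nonzero-indicator. Consequently the degree is controlled by $p$ alone, independent of the fan-in $n$ and of $|\F|$, as claimed.
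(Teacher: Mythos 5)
Your proof is correct and is exactly the standard Razborov--Smolensky probabilistic-polynomial construction that the paper cites (the paper itself does not reprove the claim). The one thing worth being explicit about is that the stated degree bound $d \leq p\log(1/\gamma)$ absorbs a ceiling: with $t = \lceil \log(1/\gamma)\rceil$ the degree is $(p-1)t$, which is $\leq p\log(1/\gamma)$ once $\log(1/\gamma) \geq p-1$, and in any case is $O(p\log(1/\gamma))$, which is all the paper uses (it applies the claim with $\gamma$ a fixed constant and $p = O(1)$); your remark that the rounding is harmless is exactly right, and your observation that the subset sums live in the prime subfield $\F_p$ is precisely why the exponent $p-1$ (not $|\F|-1$) suffices over an arbitrary finite field.
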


We also use a special case of the robust Heg\"{e}dus lemma,
discovered recently by Srinivasan \cite{srinivasan2020}.

\begin{lemma}[Robust Heg\"{e}dus lemma (special case), \cite{srinivasan2020}]
	\label{lemma:hegedus}
	Let $\F$ be a finite field.
	Let $2^{-m/100} \leq \lambda \leq c$ where $c < 1$ is a (small)
	absolute constant. 
	Let $\alpha^2 m$ be an integer so that $2^{-2\alpha^2 m} \geq \lambda$.
	Then if $P: \F^n \to \F$ is a degree $d$ polynomial for which:
	\begin{enumerate}
		\item $\Pr[P(\bf{Sp}_{m, \alpha m}) \neq 0] \leq \lambda$
		\item $\Pr[P(\bf{Sp}_{m, 0}) = 0] \leq 1 - e^{-\alpha^2 m/2}$
	\end{enumerate}
	Then $d = \Omega(\alpha m)$.
\end{lemma}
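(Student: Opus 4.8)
The statement is quoted from Srinivasan~\cite{srinivasan2020}, so the plan is to reproduce the skeleton of his argument. At heart it says a low-degree $\F$-polynomial cannot ``robustly distinguish'' two Hamming slices that are $\alpha m$ apart, so I would argue by contraposition: assume $\deg P = d$ and that $P$ vanishes on all but a $\lambda$-fraction of $\bf{Sp}_{m,\alpha m}$, and show that unless $d = \Omega(\alpha m)$ the polynomial $P$ is forced to vanish on all but an $e^{-\alpha^2 m/2}$-fraction of the middle slice $\bf{Sp}_{m,0}$, contradicting hypothesis~(2). The route has three moves: upgrade robust vanishing on the low slice to exact vanishing there; propagate exact vanishing toward the middle slice via the classical Hegedűs lemma; and convert back, deriving the contradiction.

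For the first move I would use a slice anti-concentration bound: if the restriction of a degree-$d$ polynomial to a weight-$w$ slice is not identically zero, then it is nonzero on at least a $1/\!\sum_{j\le d}\binom{w}{j}$ fraction of that slice (the restrictions coming from degree-$\le d$ polynomials span a space of that dimension, so a nonzero one is nonzero somewhere, and a symmetrization/translation step on the slice upgrades ``somewhere'' to ``a fixed fraction''). The constraints $\lambda \le 2^{-2\alpha^2 m}$ and, in the contrapositive, $d = o(\alpha m)$ are precisely what place $\lambda$ below this threshold, so $P$ must vanish identically on $\bf{Sp}_{m,\alpha m}$. I would then feed this into the classical Hegedűs lemma, which forces a degree-$d$ polynomial vanishing on a whole weight-$w$ slice to vanish on whole slices of certain nearby weights $w\pm\ell$, the admissible step $\ell$ being a power of the characteristic governed by $d$ and by mod-$p$ conditions on binomial coefficients. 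Chaining these steps carries exact vanishing from weight $(1/2-\alpha)m$ toward weight $m/2$, and since only boundedly many steps can be guaranteed before a mod-$p$ side condition fails, reaching the middle slice requires $d = \Omega(\alpha m)$. If instead $d = o(\alpha m)$, the chain still forces $P$ to vanish on all but a handful of the highest slices, and the $e^{-\alpha^2 m/2}$ slack in hypothesis~(2) is exactly what absorbs those missed slices, giving the contradiction.

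The step I expect to be the real obstacle (and the bulk of Srinivasan's work) is the quantitative accounting in the chaining: pinning down how far exact-Hegedűs propagation can be pushed as a function of $d$ and of the characteristic so that the bound comes out as $\Omega(\alpha m)$ rather than something weaker, and checking that the two robustness parameters in the hypotheses ($\lambda$ at the low slice, $e^{-\alpha^2 m/2}$ at the middle slice) are calibrated so that the ``most $\Rightarrow$ all'' conversion at one end and the ``all $\Rightarrow$ most'' conversion at the other lose nothing essential. A minor additional point is that the anti-concentration and Hegedűs arguments, usually phrased over a prime field, have to be checked over a general finite field of characteristic $p$; this is harmless since only the characteristic enters the degree counting. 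For the full details I would refer to~\cite{srinivasan2020}.
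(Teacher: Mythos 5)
The paper does not prove this lemma: it is imported verbatim as a black box from Srinivasan~\cite{srinivasan2020}, and is used downstream in \Cref{lemma:polynomial_pseudodensity} as a cited ingredient. You recognize this yourself at the top of your write-up, and what you give is a reconstruction of the structure of Srinivasan's proof rather than an argument the paper contains, so there is no in-paper proof to compare against.

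As a summary of Srinivasan's route your sketch is broadly faithful in outline: the argument does work by contraposition, does pass through an anti-concentration bound on a slice (non-zero restrictions of degree-$\le d$ polynomials to a weight-$w$ slice are non-zero on a fraction bounded below in terms of $\sum_{j\le d}\binom{w}{j}$, which is where the threshold $\lambda \le 2^{-2\alpha^2 m}$ enters), and does then invoke the exact Heged\H{u}s lemma to propagate identical vanishing across slices, with the characteristic appearing through Lucas-type conditions on binomial coefficients. You are also right that the delicate part is the quantitative chaining and the calibration of the two slack parameters. Two caveats worth flagging: you omit the role of the lower constraint $2^{-m/100}\le\lambda$, which is needed so the relevant slices stay near the equator and the slice sizes do not collapse; and you hand-wave the ``symmetrization/translation step'' that upgrades ``non-zero somewhere'' to ``non-zero on a fixed fraction,'' which is one of the genuinely non-obvious pieces of the cited proof. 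Since you ultimately defer to \cite{srinivasan2020} for all of this, the proposal is best read as a correct pointer to where the proof lives rather than a self-contained argument.
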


Now we can prove the main lemma.

\begin{proof}[Proof of \Cref{lemma:polynomial_pseudodensity}]
	Let $P: \F^n \to \F$ be a degree $d$ polynomial.
	We can assume $P$ takes boolean values over $\{0, 1\}^n$
	by replacing $P$ with $P^{p-1}$ ($p$ being $\F$'s characteristic).
	This increases the degree to less than $pd$. 
	Assume towards a contradiction that $P$ can certify
	that $\bf{N}_\alpha$ is not $(\eps', \delta)$-pseudodense, in the sense that 
	$\delta \Pr[P(\bf{N}_\alpha) \neq 0] - \eps' \geq \Pr[P(\bf{U}) \neq 0]$.

	Let $q = \Pr[P(\bf{N}_\alpha)]$.
	Applying \Cref{lemma:promise_maj} to $P$ (noting that $P$ is boolean
	on $\{0, 1\}^n$) with $\ell$ specified later, 
	we obtain a random polynomial $\bf{F}$ of the 
	form $\lor_i P(z_{\bf{I}_i})$.
	For a fixed $z \in \{0, 1\}^m$, let 
	$\bf{X}(z) = (z_{\bf{I}_i})_{i \in [\ell]}$
	For $\gamma$ to be determined later, let $\bf{R}$ be the random polynomial
	from \Cref{claim:razborov} and we remark that $\bf{R}$ can be made to output
	boolean values on boolean inputs. Then for any $z \in \{0, 1\}^m$, 
	$\bf{F}(z) = OR(\bf{X}(z))$ so $\bf{F}(z) = \bf{R}(\bf{X}(z))$
	with probability $1 - \gamma$. 	In sum, $\bf{R}(\bf{X}(\cdot))$ 
	has the property that:	
	\begin{align*}
		\Pr[\bf{R}(\bf{X}(\bf{Sp}_{m, \alpha m})) = 0] 
		&\leq e^{-\ell q} + \gamma;\\
		\Pr[\bf{R}(\bf{X}(\bf{Sp}_{m, 0})) = 1] &\leq \ell q \delta + \gamma.
	\end{align*}
	Moreover, $\bf{R}(\bf{X}(\cdot))$ is a random
	polynomial over $\F$ of degree at most $d p^2 \log(1/\gamma)$,
	with the $p^2$ coming from possibly replacing $P$ with $P^{p-1}$.

	We now apply \Cref{lemma:hegedus} to $\bf{R}(\bf{X}(\cdot))$.
	Let $c$ denote the constant from the statement of the lemma.
	Let $m = \ln(1/c)/2\alpha^2$, $\lambda = e^{-\ell q} + \gamma \leq c$, 
	$\gamma = c/2$ and $\ell = \ln(2/c)/q \leq (\ln(2/c)\delta)/\eps'$ where the inequality
	is because $q \geq \eps'/\delta$ follows from $\delta q - \eps' \geq 0$.
	With this setting of parameters, note that,
	in regards to the hypotheses of \Cref{lemma:hegedus},
	we have $2^{-m/100} \leq \lambda \leq 2^{-2\alpha^2 m} = c$,
	
	By calculating, $\bf{R}(\bf{X}(\cdot))$'s classifies $\bf{Sp}_{m, 0}$
	incorrectly with probability $e^{-\ell q} = c/2$
	and classifies $\bf{Sp}_{m, \alpha m}$ incorrectly with probability
	$\ell q \delta \leq \ln(2/c) \delta$, which we can bound away
	from $1$ by choosing $\delta$ to be a sufficiently small constant.

	Therefore, by
	\Cref{lemma:hegedus}, the degree of some polynomial in the support of
	$\bf{R}(\bf{X}(\cdot))$
	is $\Omega(1/\alpha)$. Since $\gamma$ and $p$ ($\F$'s characteristic)
	are constants, the same conclusion therefore holds of $P$.
\end{proof}

We can now prove \Cref{thm:degreelb}, restated here:

\degreelb*

\begin{proof}[Proof of \Cref{thm:degreelb}]
	If $\bf{N}_\alpha$ is $(\eps', \delta)$-pseudodense with respect to
	degree-$d$ polynomials over $\F$, then $d = O(1/\alpha)$.
	On the other hand, taking $\alpha \geq \sqrt{1/8\delta(1/n + \eps)}$
	implies that $\bf{N}_\alpha$ is not $\delta$-dense
	in an $\eps$-pseudorandom set. Therefore, if we take
	$n = 1/\eps$ then picking $\alpha \geq \sqrt{2\eps/8\delta}$
	will give us a separation.
\end{proof}

\subsection{Proof of \Cref{thm:pdtodm}}

In what is becoming tradition, we will show
an unconditional pseudodensity result,
from which \Cref{thm:pdtodm} will follow
by \Cref{lemma:coin}.

\begin{lemma}
	\label{lemma:thm3_main_lemma}
	Let $\eps > 0$ and $1/4 > \delta > 0$ be arbitrary.
	Suppose $\cF$ is a test class of boolean functions $f: \{0, 1\}^n \to \{0, 1\}$
	with the following property: there is
	no $\AC^0$ $\cF$-oracle circuit of size
	$\poly(n/\alpha \eps)$ which computes $\MAJ$ on
	$O(1/\alpha)$ input bits. Then $\bf{N}_\alpha$ is $(\epsilon \delta, \delta)$-pseudodense.
\end{lemma}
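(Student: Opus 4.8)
The plan is to prove the contrapositive: assuming $\bf{N}_\alpha$ is \emph{not} $(\eps\delta,\delta)$-pseudodense for $\cF$, I will exhibit an $\AC^0$ circuit of size $\poly(n/\alpha\eps)$ with $\cF$-oracle gates that computes $\MAJ$ on $O(1/\alpha)$ bits, contradicting the hypothesis on $\cF$. So fix $f\in\cF$ with $\delta\,\E[f(\bf{N}_\alpha)]-\eps\delta>\E[f(\bf{U})]$, and write $q=\E[f(\bf{N}_\alpha)]$. Since $\E[f(\bf{U})]\ge 0$ this forces $q>\eps$, and since $\E[f(\bf{U})]<\delta q$ the test $f$ satisfies the hypothesis of \Cref{lemma:promise_maj} with $\eps'=\eps\delta$. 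The argument then has two steps: (i) turn $f$ into a \emph{deterministic}, small $\cF$-oracle circuit $F$ that distinguishes two Hamming slices with constant advantage, and (ii) feed $F$ into the Shaltiel--Viola coin-problem-to-majority reduction.

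For step (i), apply \Cref{lemma:promise_maj} with $\eps'=\eps\delta$, with $m=\Theta(1/\alpha^2)$ coordinates (chosen so that $\alpha m$ is an integer $\le m/2$), and with $\ell=\lceil 1/q\rceil=O(1/\eps)$. This yields a random function $\bf{F}\colon\{0,1\}^m\to\{0,1\}$, namely an OR of $\ell$ copies of $f$ each evaluated on an independently subsampled block of $n$ coordinates, with $\Pr[\bf{F}(\bf{Sp}_{m,\alpha m})=0]\le e^{-\ell q}$ and $\Pr[\bf{F}(\bf{Sp}_{m,0})=1]\le \ell q\delta$. Because $\ell q\in[1,2]$ and $\delta<1/4$, the quantity $e^{-\ell q}+\ell q\delta$ is bounded by an absolute constant strictly below $1$, so
\[
\E\bigl[\Pr[\bf{F}(\bf{Sp}_{m,\alpha m})=1]-\Pr[\bf{F}(\bf{Sp}_{m,0})=1]\bigr]\ \ge\ c_0
\]
for an absolute constant $c_0>0$, the expectation being over the subsampling randomness. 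Averaging, there is a fixing of this randomness giving a deterministic $F$ with $\Pr[F(\bf{Sp}_{m,\alpha m})=1]-\Pr[F(\bf{Sp}_{m,0})=1]\ge c_0$; by construction $F$ has depth $O(1)$ and consists of $\ell=O(1/\eps)$ copies of an $\cF$-oracle gate feeding a single OR.

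For step (ii), invoke the reduction of Shaltiel and Viola \cite{shaltiel2010hardness}: a deterministic $F$ on $m=\Theta(1/\alpha^2)$ bits that distinguishes the weight-$(m/2-\alpha m)$ slice from the weight-$(m/2)$ slice with constant advantage yields an $\AC^0$ circuit of size $\poly(m)=\poly(1/\alpha)$ with $F$-oracle gates that computes $\MAJ$ on $t=\Theta(\alpha m)=\Theta(1/\alpha)$ bits; if this reduction is randomized, derandomize it in the standard way (amplify by a majority vote over $O(t)$ independent runs, which is an $\AC^0$ operation on $O(\log t)$ bits of size $\poly(t)$, then fix the randomness by a union bound over the $2^t$ inputs). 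Replacing each $F$-oracle gate by the circuit from step (i) produces an $\AC^0$ $\cF$-oracle circuit of size $\poly(1/\alpha)\cdot O(1/\eps)\cdot\poly(n)=\poly(n/\alpha\eps)$ computing $\MAJ$ on $O(1/\alpha)$ bits, the desired contradiction. I expect the delicate point to be step (ii): extracting from \cite{shaltiel2010hardness} exactly the statement I want --- that constant-advantage distinguishing of two hypercube slices at Hamming distance $\Theta(1/\alpha)$ (equivalently, constant-advantage distinguishing of $\bf{N}_\alpha$ from $\bf{U}$) reduces, with only a $\poly(1/\alpha)$ size blow-up and $O(1)$ depth blow-up, to $\MAJ$ on $\Theta(1/\alpha)$ bits --- and checking that the constant $c_0$ coming out of \Cref{lemma:promise_maj} (where $\delta<1/4$ is used precisely to keep $c_0$ positive) meets whatever threshold that reduction requires. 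The remaining ingredients --- the choice of $\ell$ from the bound $q\ge\eps$ alone, the averaging step, and the derandomization bookkeeping --- are routine.
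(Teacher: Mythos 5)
Your proposal is correct and follows essentially the same route as the paper: both turn the pseudodensity-refuting test $f$ into an $\mathrm{OR}$ of $\ell = O(1/\eps)$ randomized copies of $f$ so that, after averaging, the resulting $\cF$-oracle circuit distinguishes $\bf{N}_\alpha$ from $\bf{U}$ (equivalently, the two Hamming slices) with constant advantage, and then both invoke the Shaltiel--Viola coin-problem-to-majority reduction to contradict the hypothesis on $\cF$.

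The one presentational difference worth noting: the paper's own proof of this lemma does \emph{not} route through \Cref{lemma:promise_maj}; it instead defines $\bf{F}(z) = \bigvee_i f(\bs{\rho}_i(z))$ with $\bs{\rho}_i$ random permutations of $[n]$ and then estimates the rejection probability directly. You instead reuse \Cref{lemma:promise_maj} (the subsampling-with-replacement construction over an $m$-bit string, $m = \Theta(1/\alpha^2)$), which is cleaner in one respect: under subsampling, the $\ell$ blocks $z_{\bf{I}_1},\dots,z_{\bf{I}_\ell}$ are genuinely i.i.d.\ samples from $\bf{N}_\alpha$ (resp.\ $\bf{U}$) when $z$ sits on the relevant slice, so the $(1-q)^\ell$ and $\ell q\delta$ bounds are immediate, whereas the permutation-based version has the $\ell$ evaluations correlated through the shared input $z$ and requires a slightly more careful argument to get $\E_z[(1-p(z))^\ell]$ under control. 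Your size accounting ($\poly(1/\alpha)\cdot O(1/\eps)\cdot\poly(n) = \poly(n/\alpha\eps)$) and your use of $\delta < 1/4$ to keep the constant advantage $c_0$ positive are exactly the right bookkeeping; the extra caution about derandomizing the Shaltiel--Viola reduction is unnecessary here since the paper's \Cref{thm:shaltielviola} is already stated as producing a deterministic circuit, but it does no harm.
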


This will follow from the result of from Shaltiel and Viola:

\begin{thm}[\cite{shaltiel2010hardness}]
	\label{thm:shaltielviola}
	Let $f: \{0, 1\}^n \to \{0, 1\}$ be a function that distinguishes between $\bf{U}$
	and $\bf{N}_\alpha$ with constant distinguishing probability.
	Then there is an $\AC^0$-circuit of size $\poly(n/\alpha)$
	using $f$-oracle gates which computes majority on $O(1/\alpha)$ bits. 
\end{thm}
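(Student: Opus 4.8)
The plan is to convert the distinguisher $f$ into an $\AC^0$ circuit (with $f$-oracle gates) that compares two adjacent Hamming slices of a string of length $\Theta(1/\alpha)$, and then to reduce $\MAJ$ on $\Theta(1/\alpha)$ bits to that comparison. Since ``distinguishing with constant probability'' means $|\E[f(\bf{U})] - \E[f(\bf{N}_\alpha)]| \ge \gamma$ for an absolute constant $\gamma$, I would first replace $f$ by $1-f$ if necessary so that $\E[f(\bf{N}_\alpha)] - \E[f(\bf{U})] \ge \gamma$; hard-wiring advice is free since the target object is a non-uniform circuit.

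The key gadget is an up-sampling map. Take $m$ of order $1/\alpha$, chosen so that $1/m = \alpha$ (rounding can be absorbed into the constants), with $m$ even. For $y \in \{0,1\}^m$ let $\bs{\pi}: [n] \to [m]$ be a uniformly random map and set $y_{\bs\pi} := (y_{\bs\pi(1)}, \dots, y_{\bs\pi(n)})$; then $y_{\bs\pi}$ is a product of $n$ i.i.d.\ $\mathrm{Bernoulli}(\wt(y)/m)$ bits, so $\E_{\bs\pi}[f(y_{\bs\pi})]$ depends only on $\wt(y)$. The point of the choice $1/m = \alpha$ is that on the weight-$(m/2-1)$ slice we have $\wt(y)/m = 1/2-\alpha$, so $y_{\bs\pi}$ is distributed exactly as $\bf{N}_\alpha$, while on the weight-$(m/2)$ slice it is distributed exactly as $\bf{U}$. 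Hence the single-oracle-call randomized test $y \mapsto f(y_{\bs\pi})$ has advantage $\ge \gamma$ in telling the uniform distribution on the weight-$(m/2-1)$ slice from that on the weight-$(m/2)$ slice — that is, it distinguishes two adjacent slices straddling the majority threshold.

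Next I would derandomize and amplify this comparator. Draw $L = \poly(m)$ independent copies $\bs\pi_1, \dots, \bs\pi_L$ and form the constant-depth circuit $C(y) := \mathrm{ApxMaj}\bigl( f(y_{\bs\pi_1}), \dots, f(y_{\bs\pi_L}) \bigr)$ (suitably thresholded), where $\mathrm{ApxMaj}$ is Ajtai's $\AC^0$ approximate-majority circuit, which separates relative weight $\le 1/2-\eta$ from $\ge 1/2+\eta$ for a small absolute constant $\eta$. On each of the two slices the bits $f(y_{\bs\pi_j})$ are i.i.d., with means differing by at least $\gamma$; so for $L$ a large enough polynomial, a Chernoff bound plus a union bound over the $2^m$ strings fixes $\bs\pi_1,\dots,\bs\pi_L$ so that $C$ is correct on every string of weight $m/2-1$ and every string of weight $m/2$, and $C$ has size $\poly(n,m) = \poly(n/\alpha)$ and constant depth. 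To compute $\MAJ_t$ on $t = \Theta(1/\alpha)$ bits, on input $x \in \{0,1\}^t$ I would feed $C$ the padded string $x\,1^a 0^b$ with $a+b = m-t$ and $a = m/2 - \lceil (t+1)/2 \rceil$, so that the two just-below/just-above-threshold weights of $x$ map to weights $m/2 - 1$ and $m/2$.

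The hard part will be that this last reduction, as stated, only controls $C$ on the two certified slices, whereas $\MAJ_t$ ranges over all weights, so after padding the weight handed to $C$ sweeps the whole window $[\,m/2 - \Theta(1/\alpha),\ m/2 + \Theta(1/\alpha)\,]$; and because $\E_{\bs\pi}[f(y_{\bs\pi})]$ need not be a monotone function of $\wt(y)$, nothing yet rules out $C$ answering incorrectly on a weight deep in the majority regime at which this function happens to dip past the threshold. Closing this gap is the technical heart of the theorem: one must either extract, near the centre, a window of length $\Theta(1/\alpha)$ on which $\E_{\bs\pi}[f(y_{\bs\pi})]$ is monotone with a constant total change — so that padding lands every input on the correct side — or, when the relevant profile oscillates and no such window exists, exploit the structure this forces on $f$ together with an $\AC^0$ self-reduction for \emph{exact} majority of the approximate-counting type. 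Making this case analysis go through at constant depth and $\poly(n/\alpha)$ size is exactly the content of Shaltiel and Viola's argument; the up-sampling gadget, the $\AC^0$ amplification, and the padding above are the routine scaffolding around it.
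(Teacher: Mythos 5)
First, a point of order: the paper does not actually prove this statement. It is imported verbatim from Shaltiel--Viola \cite{shaltiel2010hardness} and invoked as a black box in the proof of \Cref{lemma:thm3_main_lemma}, so there is no in-paper argument to compare yours against; the relevant benchmark is Shaltiel and Viola's own proof.

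Measured against that benchmark, your proposal is correct scaffolding but not a proof, and you say as much yourself. The up-sampling gadget, the observation that $\E_{\bs{\pi}}[f(y_{\bs{\pi}})]$ depends only on $\wt(y)$ and hence yields a constant-advantage comparator between the slices of weight $m/2-1$ and $m/2$, and the amplification via $\AC^0$ approximate majority plus a union bound over the $2^{O(1/\alpha)}$ inputs to fix the randomness non-uniformly --- all of this is sound and matches the opening of the Shaltiel--Viola argument. The genuine gap is exactly where you locate it: a circuit certified only on two slices does not compute $\MAJ_t$, because after padding the input weight sweeps a window of length $\Theta(1/\alpha)$ on which the symmetrized profile $w \mapsto \E_{\bs{\pi}}[f(y_{\bs{\pi}})]$ is completely uncontrolled and need not be monotone; what you have actually constructed is a solver for the \emph{promise} problem of separating two fixed adjacent slices, which is strictly weaker than majority. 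Your proposed repair --- either extract a monotone sub-window with constant total variation, or ``exploit the structure that oscillation forces on $f$'' --- is not an argument: the dichotomy is not shown to be exhaustive, the oscillation branch is not even sketched, and it is not the route Shaltiel and Viola take. So the crux of the theorem is named but left unproved, and the proposal should be regarded as incomplete.
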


\begin{proof}[Proof of \Cref{lemma:thm3_main_lemma}]
	Suppose $\bf{N}_\alpha$ is not $(\epsilon \delta, \delta)$-pseudodense
	for $\cF$ as in the statement of the theorem and let $f$
	witness this fact with $q = \E[f(\bf{N}_\alpha)]$. 

	Let
	\[
		\bf{F}(z) = \bigvee_{j \in \ell} f(\bs{\rho}_i(z))
	\]
where each $\bs{\rho}_i$ is a random permutation of $[n]$.
	When $z \sim \bf{N}_\alpha$, $\bs{\rho}_i(z)$ is distributed
	as $\bf{N}_\alpha$ and likewise for $\bf{U}$. 
	Hence $\bf{F}$ rejects samples from $\bf{N}_\alpha$
	with probability $(1 - q)^\ell \leq e^{- q \ell}$
	which is constant when $\ell = 1/q \leq \delta/\eps' = 1/\eps$,
	this latter property following because $\delta q - \eps' > 0$.
	Additionally, $\bf{F}$ accepts samples from $\bf{U}$
	with probability $\ell q \delta$, which is at most $\delta$
	for our choice of $\ell$. Averaging, some function 
	in $\bf{F}$'s support distinguishes $\bf{N}_\alpha$
	and $\bf{U}$ with constant advantage. Applying \Cref{thm:shaltielviola}
	yields a small $\AC^0$ circuit computing majority, which is a contradiction. 
\end{proof}

\pdtodm*

The proof is the same as \Cref{thm:degreelb}, so we omit the details.

\bibliographystyle{plain}
\bibliography{bib}

\appendix

\section{Omitted proofs}

\subsection{Proof of \Cref{claim:bias_of_rhoZ}}

\biasedbits*

\begin{proof}
Let $R_i$ be the event that $i \in \bs{\rho}^{-1}(*)$. 
	\begin{align*}
	\Pr[(\bs{\rho} \circ \bf{Z})_i = 1 | E] 
	&= \Pr[R_i]\Pr[\bf{Z}_i = 1 | R_i, E] + \Pr[\neg R_i]\Pr[\bs{\rho}_i = 1 | \neg R_i, E]\\
	&= p \Pr[\bf{Z}_i | E] + (1-p)/2
	\end{align*}
where we used independence of $\bf{Z}_i$ from $R_i$
and the independence of $\bs{\rho}_i = 1$ from $E$
to obtain the final line. 
\end{proof}

\subsection{Proof of \Cref{claim:thm2_multiplicative_error}}

\densityincrease*

\begin{proof}
    We proceed by induction. When $k = 0$, $f$ is constant and so
    the claim holds trivially. Suppose $f(z) = (1 - z_i)g(z) + z_i h(z)$ where $g, h$ have depth-$k$ decision trees which don't depend on $z_i$. Note that $\E[f(\bf{U})]
    = (\E[g(\bf{U})] + \E[h(\bf{U})])/2$ and assume that $\E[g(\bf{U})]/2 \geq \E[f(\bf{U})]$. For $z \sim \bf{N}_\eps$,
    \begin{align*}
        \E[f(z)] &= \E[(1 - z_i)g(z)] + \E[z_i h(z)]\\
        &= (1 - \E[z_i])\E[g(z)] + \E[z_i]\E[h(z)] &\text{(independence)}\\
        &\leq (1/2 - \eps)(1 + \eps)^k\E[g(\bf{U})] + (1/2 + \eps)(1 + \eps)^k\E[h(\bf{U})] &\text{(induction)}\\
        &= (1 + \eps)^k\Big[(\E[g(\bf{U})] + \E[h(\bf{U})])/2 - \eps\E[g(\bf{U})] + \eps\E[h(\bf{U})]\Big]\\
        &= (1 + \eps)^k[\E[f(\bf{U})] - \eps\E[g(\bf{U})] + \eps(2\E[f{\bf{U}}] - \E[g(\bf{U})])]\\
        &= (1 + \eps)^k[\E[f(\bf{U})] + \eps(2 \E[f{\bf{U}}] - \E[g(\bf{U})])]\\
        &\leq (1 + \eps)^k[\E[f(\bf{U})] + \eps\E[f{\bf{U}}]] &\text{(assuming $\E[g]/2 \leq \E[f]$)}\\
        &= (1 + \eps)^{k+1} \cdot \E[f(\bf{U})].
    \end{align*}
    The argument for $\E[h]/2 \leq \E[f]$ is the same. 
\end{proof}

\end{document}